\newtheorem{theorem}{Theorem}[section]
\newtheorem{proposition}[theorem]{Proposition}
\newtheorem{lemma}[theorem]{Lemma}
\newtheorem{corollary}[theorem]{Corollary}
\newtheorem{remark}[theorem]{Remark}
\theoremstyle{definition}
\newtheorem*{definition*}{Definition}
\newtheorem*{proposition*}{Proposition}
\newtheorem*{corollary*}{Corollary}
\newtheorem*{lemma*}{Lemma}
\newtheorem*{remark*}{Remark}
\newcommand{\cX}{\mathcal X}
\newcommand{\fq}{\mathbb {F}_q}
\newcommand{\tcSq}{\tilde{\mathcal{S}}_q}
\newcommand{\f}{\mathbb{F}}
\title{AG codes and AG quantum codes from cyclic extensions of the Suzuki and Ree curves}
\author{M. Montanucci, M. Timpanella and G. Zini}
\date{}
\begin{document}

\maketitle 

\begin{abstract}
We investigate several types of linear codes constructed from two families $\tilde{\mathcal S}_q$ and $\tilde{\mathcal R}_q$ of maximal curves over finite fields recently constructed by Skabelund as cyclic covers of the Suzuki and Ree curves. Plane models for such curves are provided, and the Weierstrass semigroup $H(P)$ at an $\mathbb{F}_{q}$-rational point $P$ is shown to be symmetric.
\end{abstract}

{\bf Keywords: } Suzuki curve, Ree curve, AG code, quantum code, convolutional code, code automorphisms.

{\bf MSC Code: } 94B27.
\footnote{
This research was partially supported by Ministry for Education, University
and Research of Italy (MIUR) (Project PRIN 2012 ``Geometrie di Galois e
strutture di incidenza'' - Prot. N. 2012XZE22K$_-$005)
 and by the Italian National Group for Algebraic and Geometric Structures
and their Applications (GNSAGA - INdAM).

URL: Maria Montanucci (maria.montanucci@unibas.it), Marco Timpanella (PMtimpanella23@hotmail.it), Giovanni Zini (gzini@math.unifi.it).
}

\section{Introduction}

In \cite{Goppa1,Goppa2} Goppa described a way to use algebraic curves to construct linear error correcting codes, the so called algebraic geometric codes (AG codes).
The construction of an AG code whose alphabet is a finite field $\mathbb F_q$ requires that the underlying curve is $\mathbb F_q$-rational and involves two $\mathbb F_q$-rational divisors $D$ and $G$ on the curve.

In general, to construct a \textit{good} AG code over $\mathbb F_q$ a curve $\mathcal X$ with low genus $g$ with respect to its number of $\mathbb F_q$-rational points is required. In fact, from the Goppa bounds on the parameters of the code, it follows that the relative Singleton defect is upper bounded by the ratio $g/N$, where $N$ can be as large as the number of $\mathbb F_q$-rational points of $\mathcal X$ not in the support of $G$.
Maximal curves $\cX$ over $\mathbb F_q$ attain the Hasse-Weil upper bound for the number of $\mathbb F_q$-rational points with respect to their genus,
$$|\cX(\mathbb{F}_q)| \leq q+1+2g\sqrt{q},$$
and for this reason they have been used in a number of works. Examples of such curves are the Hermitian curve, the GK curve \cite{GK2009}, the GGS curve \cite{GGS}, the Suzuki curve \cite{DL1976}, the Klein quartic when $\sqrt{q}\equiv6\pmod7$ \cite{MEAGHER2008}, together with their quotient curves. Maximal curves often have large automorphism groups which in many cases can be inherited by the code: this can bring good performances in encoding \cite{Joyner2005} and decoding \cite{HLS1995}.

Good bounds on the parameters of one-point codes, that is AG codes arising from divisors $G$ of type $nP$ for a point $P$ of the curve, have been obtained by investigating the Weierstrass semigroup at $P$. These results have been later generalized to codes and semigroups at two or more points; see e.g. \cite{MATTHEWS2001,HOMMA1996,HK2001,CT2005,CK2009,LC2006,Kim1994}.

AG codes from the Hermitian curve have been widely investigated; see \cite{HK2006,HK2005,Tiersma1987,DK2011,HK2006_2,YK1991,Stichtenoth1988} and the references therein. Other constructions based on the  Suzuki curve and the curve with equation $y^q + y = x^{q^r+1}$ can be found in \cite{Matthews2004} and \cite{ST2014}.
More recently, AG Codes from the GK curve have been constructed in \cite{FG2010,CT2016,BMZ}.

In this work we investigate several construction of linear codes starting from certain maximal curves $\tilde{\mathcal R}_q$ and $\tcSq$ which were recently constructed by Skabelund \cite{Skabelund} as cyclic covers of the Ree and Suzuki curves.
In particular, we construct codes of the following types:
\begin{itemize}
\item Multi point AG codes with many automorphisms (Section \ref{manyaut});
\item Dual codes of one-point AG codes and the Feng-Rao minimum distance (Section \ref{codiciduali});
\item AG quantum codes (Section \ref{SectionCSS});
\item weak Castle codes (Section \ref{castle});
\item AG convolutional codes (Section \ref{convo});
\end{itemize}
To this aim, some geometrical information on $\tilde{\mathcal R}_q$ and $\tcSq$ is needed, which is collected in Section \ref{prelSkabe}.
After recalling some known facts from the literature (\!\!\cite{Skabelund,GMQZ}), we also prove that for $\tilde{\mathcal R}_q$ and $\tcSq$ the Weierstrass semigroup at the unique infinite place is symmetric, and we deduce that the $\mathbb{F}_q$-rational places are Weierstrass points.
We also provide new plane models both for $\tilde{\mathcal R}_q$ and $\tcSq$.

\section{Preliminary results}\label{Sec:Preliminaries}

\subsection{Curves and codes}\label{Sec:Preliminaries_Curves}

Let $\mathcal{X}$ be a projective, geometrically irreducible, nonsingular algebraic curve of genus $g$ defined over the finite field $\mathbb{F}_q$ of size $q$.
The symbols $\mathcal{X}(\mathbb{F}_q)$ and $\mathbb{F}_q(\mathcal{X})$ denote the set of $\mathbb{F}_q$-rational points and the field of $\mathbb{F}_q$-rational functions, respectively.
A divisor $D$ on $\mathcal{X}$ is a formal sum $n_1P_1+\cdots+n_rP_r$, where $P_i \in \mathcal{X}(\mathbb{F}_q)$, $n_i \in \mathbb{Z}$, $P_i\neq P_j$ if $i\neq j$.
The divisor $D$ is $\mathbb F_q$-rational if it coincides with its image $n_1P_1^q+\cdots+n_rP_r^q$ under the Frobenius map over $\mathbb F_q$.
For a function $f \in \mathbb{F}_q(\mathcal{X})$, $div(f)$ and $(f)_{\infty}$ indicate the divisor of $f$ and its pole divisor.
Also, the Weierstrass semigroup at $P$ will be indicated by $H(P)$.
The Riemann-Roch space associated with an $\mathbb F_q$-rational divisor $D$ is
$$\mathcal{L}(D) := \{ f \in \mathcal{X}(\mathbb{F}_q) \ : \ div(f)+D \geq 0\}$$
and its dimension over $\mathbb{F}_q$ is denoted by  $\ell(D)$.

Let $P_1,\ldots,P_N\in \mathcal{X}(\mathbb{F}_q)$ be pairwise distinct  points and consider the divisor $D=P_1+\cdots+P_N$ and another $\mathbb F_q$-rational divisor $G$ whose support is disjoint from the support of $D$. The AG code $C(D,G)$ is the image of the linear map $\eta :  \mathcal{L}(G) \to \mathbb{F}_q^N$ given by $\eta(f) = (f(P_1),f(P_2) ,\ldots,f(P_N))$. The code has length $N$ and if $N>\deg(G)$  then $\eta$ is an embedding and the dimension $k$ of $C(D,G)$ is equal to $\ell(G)$. The minimum distance $d$ satisfies $d\geq d^*=N-\deg(G)$, where $d^*$ is called the Goppa designed minimum distance of $C(D,G)$; if in addition $\deg(G)>2g-2$, then by the Riemann-Roch Theorem $k=\deg(G)-g+1$; see \cite[Th. 2.65]{HLP}. The dual code $C^{\bot} (D,G)$ is an $AG$ code with dimension $k^{\bot}=N-k$ and minimum distance $d^{\bot}\geq \deg{G}-2g+2$. If $G=\alpha P$, $\alpha \in \mathbb{N}$, $P \in \mathcal{X}(\mathbb{F}_q)$, the AG codes ${C} (D,G)$ and ${C}^{\bot} (D,G)$ are referred to as one-point AG codes. Let $H(P)$ be the Weierstrass semigroup associated with $P$, that is 
$$H(P) := \{n \in \mathbb{N}_0 \ | \ \exists f \in \mathbb{F}_q(\mathcal{X}), (f)_{\infty}=nP\}= \{\rho_0=0<\rho_1<\rho_2<\cdots\}.$$

Denote by $f_{\ell}\in \mathbb{F}_q(\mathcal{X})$, $\ell\geq 1$, a rational function such that $(f_{\ell})_{\infty}=\rho_{\ell}P$. For $\ell \geq0$, define the \emph{Feng-Rao function} 
$$\nu_\ell := | \{(i,j) \in \mathbb{N}_0^2 \ : \ \rho_i+\rho_j = \rho_{\ell+1}\}|.$$ 
Consider ${C}_{\ell}(P)= {C}^{\bot}(P_1+P_2+\cdots+P_N,\rho_{\ell}P)$, $P,P_1,\ldots,P_N$ pairwise distint points in $\mathcal{X}(\mathbb{F}_q)$. The number 
$$d_{ORD} ({C}_{\ell}(P)) := \min\{\nu_{m} \ : \ m \geq \ell\}$$
is a lower bound for the minimum distance $d({C}_{\ell}(P))$ of the code ${C}_{\ell}(P)$, called the \emph{order bound} or the \emph{Feng-Rao designed minimum distance} of ${C}_{\ell}(P)$; see \cite[Theorem 4.13]{HLP}. Also, by \cite[Theorem 5.24]{HLP}, $d_{ORD} ({C}_{\ell}(P))\geq \ell+1-g$ and equality holds if $\ell \geq 2c-g-1$, where $c=\max \{m \in \mathbb{Z} \ : \ m-1 \notin H(P)\}.$ 

A numerical semigroup is called telescopic if it is generated by a sequence $(a_1,\ldots,a_k)$ such that 
\begin{itemize}
\item $\gcd(a_1, \ldots , a_k)=1$;
\item for each $i=2,\ldots,k$, $a_i/d_i \in \langle a_1/d_{i-1},\ldots, a_{i-1}/d_{i-1}\rangle$, where $d_i=\gcd(a_1,\ldots,a_i)$;
\end{itemize}
see \cite{KP}.
The semigroup $H(P)$ is called symmetric if $2g-1\notin H(P)$. The property of being symmetric for $H(P)$ gives rise to useful simplifications of the computation of $d_{ORD}(C_\ell(P))$, when $\rho_\ell >2g$. The following result is due to Campillo and Farr\'an; see \cite[Theorem 4.6]{CF}.

\begin{proposition} \label{campillo} Let $\cX$ be an algebraic curve of genus $g$ and let $P \in \cX(\mathbb{F}_q)$. If $H(P)$ is a symmetric Weierstrass semigroup then one has $$d_{ORD}(C_\ell(P))=\nu_{\ell},$$ for all $\rho_{\ell+1}=2g-1+e$ with $e \in H(P) \setminus \{0\}$. 
\end{proposition}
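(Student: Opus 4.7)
The plan is to show that the minimum in $d_{ORD}(C_\ell(P)) = \min\{\nu_m : m \geq \ell\}$ is attained at $m = \ell$, equivalently that $\nu_m \geq \nu_\ell$ for every $m \geq \ell$. For any such $m$, set $n' = \rho_{m+1}$ and $e' = n' - (2g-1)$, so that $n' \geq \rho_{\ell+1} = 2g-1+e$ gives $e' \geq e$. I would then count the pairs $(a, n'-a) \in H(P)^2$ with $0 \leq a \leq n'$ by passing to the complement inside $H(P) \cap [0, n']$.

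The symmetry of $H(P)$ yields $k \in H(P) \iff 2g-1-k \notin H(P)$ for every $k \in \mathbb{Z}$ (with the convention that negative integers are gaps). Setting $k = n'-a$ turns the condition ``$n'-a \notin H(P)$'' into ``$a-e' \in H(P)$'', which in particular forces $a \geq e'$. Substituting $b = a - e'$ identifies the set of such ``bad'' indices with $\{b \in H(P) \cap [0, 2g-1] : b+e' \in H(P)\}$, whose cardinality is at most $|H(P) \cap [0, 2g-1]| = g$. Combining this with the standard identity $|H(P) \cap [0, n']| = n' - g + 1$ (valid since $n' \geq 2g-1$), I obtain
$$\nu_m \geq (n'-g+1) - g = n' - 2g + 1 = e' \geq e.$$

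In the particular case $m = \ell$, the hypothesis $e \in H(P)$ becomes crucial: by the semigroup property, $b + e \in H(P)$ is automatic whenever $b \in H(P)$, so the bad set has cardinality exactly $g$ and the inequality above becomes $\nu_\ell = n - 2g + 1 = e$. The chain $\nu_m \geq e' \geq e = \nu_\ell$ then establishes the proposition. The main delicate point is to handle the symmetry cleanly as a statement about all of $\mathbb{Z}$ (in particular the contribution of negative integers) in order to identify the bad set correctly; once this is in place, the uniform bound $g$, independent of $e'$, is exactly what forces the minimum of the Feng--Rao sequence to be attained at $m = \ell$.
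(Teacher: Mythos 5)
Your argument is correct and complete. Note that the paper gives no proof of this proposition at all: it is quoted from Campillo and Farr\'an \cite{CF}, Theorem 4.6, so there is no internal argument to compare against, and your direct counting proof serves as a self-contained substitute for the citation. Both key steps check out: (i) since $2g-1\notin H(P)$ forces the Frobenius number of $H(P)$ to equal $2g-1$, the involution $k\mapsto 2g-1-k$ does give $k\in H(P)\iff 2g-1-k\notin H(P)$ for all $k\in\mathbb{Z}$ once negative integers are declared gaps, and this is exactly what converts the condition $\rho_{m+1}-a\notin H(P)$ into $a-e'\in H(P)$ and identifies the bad set with $\{b\in H(P)\cap[0,2g-1]\,:\,b+e'\in H(P)\}$, of size at most $g$; (ii) the count $|H(P)\cap[0,\rho_{m+1}]|=\rho_{m+1}-g+1$ is legitimate because $\rho_{m+1}\ge 2g$ exceeds every gap. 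The resulting chain $\nu_m\ge e'\ge e=\nu_\ell$ for all $m\ge\ell$ is precisely what is needed, and in the case $m=\ell$ the semigroup property of $e$ makes the bad set have cardinality exactly $g$, yielding the sharper conclusion $\nu_\ell=e$, i.e.\ $d_{ORD}(C_\ell(P))=\rho_{\ell+1}-2g+1$, which is slightly more information than the proposition states. The only blemish is notational: in your final computation you write $n$ where you mean $n'=\rho_{\ell+1}$.
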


\subsection{The automorphism group of an AG code $C(D,G)$}

In the following we use the same notation as in \cite{GK2008,JK2006}.
Let $\mathcal{M}_{N,q}\leq{\rm GL}(N,q)$ be the subgroup of matrices having exactly one non-zero element in each row and column.
For $\gamma\in{\rm Aut}(\fq)$ and $M=(m_{i,j})_{i,j}\in{\rm GL}(N,q)$, let $M^\gamma$ be the matrix $(\gamma(m_{i,j}))_{i,j}$.
Let $\mathcal{W}_{N,q}$ be the semidirect product $\mathcal M_{N,q}\rtimes{\rm Aut}(\fq)$ with multiplication $M_1\gamma_1\cdot M_2\gamma_2:= M_1M_2^\gamma\cdot\gamma_1\gamma_2$.
The \emph{automorphism group} ${\rm Aut}({C}(D,G))$ of ${C}(D,G)$ is the subgroup of $\mathcal{W}_{N,q}$ preserving ${C}(D,G)$, that is,
$$ M\gamma(x_1,\ldots,x_N):=((x_1,\ldots,x_N)\cdot M)^\gamma \in {C}(D,G) \;\;\textrm{for any}\;\; (x_1,\ldots,x_N)\in {C}(D,G). $$
Let ${\rm Aut}_{\fq}(\cX)$ denote the $\fq$-automorphism group of $\cX$. Also, let
$$ {\rm Aut}_{\fq,D,G}(\cX)=\{ \sigma\in{\rm Aut}_{\fq}(\cX)\,\mid\, \sigma(D)=D,\,\sigma(G)\approx_D G \}, $$
where $G'\approx_D G$ if and only if there exists $u\in\fq(\cX)$ such that $G'-G=(u)$ and $u(P_i)=1$ for $i=1,\ldots,N$, and
$$ {\rm Aut}_{\fq,D,G}^+(\cX):=\{ \sigma\in{\rm Aut}_{\fq}(\cX)\,\mid\, \sigma(D)=D,\,\sigma(|G|)=|G| \}, $$
where $|G|=\{G+(f)\mid f\in\overline{\mathbb F}_q(\cX)\}$ is the linear series associated with $G$.
Note that ${\rm Aut}_{\fq,D,G}(\cX)\subseteq {\rm Aut}_{\fq,D,G}^+(\cX)$. 

\begin{remark}\label{Coincidono}
Suppose that ${\rm supp}(D)\cup{\rm supp}(G)=\cX(\fq)$ and each point in ${\rm supp}(G)$ has the same weight in $G$. Then
$$ {\rm Aut}_{\fq,D,G}(\cX) = {\rm Aut}_{\fq,D,G}^+(\cX) = \{\sigma\in{\rm Aut}_{\fq}(\cX)\,\mid\,\sigma({\rm supp}(G))={\rm supp}(G) \}. $$
\end{remark}


\begin{proposition}{\rm(\!\!\cite[Proposition 2.3]{BMZ})}\label{AutSubgroup}
If any non-trivial element of ${\rm Aut}_{\fq}(\cX)$ fixes at most $N-1$ $\fq$-rational points of $\cX$, then ${\rm Aut}({C}(D,G))$ contains a subgroup isomorphic to $ ({\rm Aut}_{\fq,D,G}(\cX)\rtimes{\rm Aut}(\fq))\rtimes \mathbb{F}_q^*$.
\end{proposition}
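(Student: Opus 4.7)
The plan is to construct an explicit injective homomorphism
$$\Phi:({\rm Aut}_{\fq,D,G}(\cX)\rtimes{\rm Aut}(\fq))\rtimes\mathbb{F}_q^*\ \longrightarrow\ {\rm Aut}(C(D,G))\subseteq\mathcal{W}_{N,q},$$
whose three factors act on codewords as, respectively, a coordinate permutation, a componentwise Galois twist, and a non-zero scalar multiplication. For $\sigma\in{\rm Aut}_{\fq,D,G}(\cX)$ the condition $\sigma(D)=D$ yields a permutation $\pi_\sigma$ of $\{P_1,\ldots,P_N\}$, and $\sigma(G)\approx_D G$ provides a function $u\in\fq(\cX)$ with $\sigma(G)-G=(u)$ and $u(P_i)=1$ for every $i$. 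For $f\in\cL(G)$ one has $f\circ\sigma^{-1}\in\cL(\sigma(G))$, hence $u\cdot(f\circ\sigma^{-1})\in\cL(G)$; evaluating gives
$$\eta\bigl(u\cdot(f\circ\sigma^{-1})\bigr)_i=u(P_i)\,f(\sigma^{-1}(P_i))=f(P_{\pi_\sigma^{-1}(i)})=\eta(f)_{\pi_\sigma^{-1}(i)},$$
so $\sigma$ acts on $C(D,G)$ as the permutation matrix associated with $\pi_\sigma$. For $\gamma\in{\rm Aut}(\fq)$, the $\fq$-rationality of $G$ and $D$ gives $f^\gamma\in\cL(G)$ and $\eta(f)^\gamma=\eta(f^\gamma)$, so entrywise application of $\gamma$ is an automorphism of $C(D,G)$; for $\lambda\in\mathbb{F}_q^*$, $\lambda\,\eta(f)=\eta(\lambda f)\in C(D,G)$ is immediate.

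Next I would verify the claimed semidirect-product structure inside $\mathcal{W}_{N,q}$. Permutation matrices have entries in $\{0,1\}$ and are therefore pointwise fixed by the entrywise action of ${\rm Aut}(\fq)$, which realises ${\rm Aut}_{\fq,D,G}(\cX)\rtimes{\rm Aut}(\fq)$ as a subgroup; scalar matrices $\lambda I$ commute with permutation matrices and are acted upon by ${\rm Aut}(\fq)$ via the Galois action $\lambda I\mapsto\gamma(\lambda)I$, which produces the outer $\mathbb{F}_q^*$-factor. Injectivity of $\Phi$ is the only point at which the fixed-point hypothesis is invoked: if $(\sigma,\gamma,\lambda)$ has trivial image, then in particular $\pi_\sigma=\mathrm{id}$, so $\sigma$ fixes all $N$ of the $\fq$-rational points $P_1,\ldots,P_N$ and is therefore trivial by hypothesis; testing the residual action on constant and on non-constant codewords then forces $\lambda=1$ and $\gamma=\mathrm{id}$.

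The main obstacle is the Riemann--Roch bookkeeping in the first step: the natural pullback action of $\sigma$ on $\cL(G)$ lands \emph{a priori} in $\cL(\sigma(G))\neq\cL(G)$, and without the auxiliary function $u$ (whose existence together with the normalization $u(P_i)=1$ is precisely what the relation $\approx_D$ is designed to supply) the action on the code would at best be a monomial transformation rather than a pure coordinate permutation. Once this correction is in place, the remaining verifications reduce to routine computations inside $\mathcal{W}_{N,q}$.
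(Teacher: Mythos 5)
The paper does not actually prove this proposition: it is quoted verbatim from \cite[Proposition 2.3]{BMZ} (which in turn goes back to \cite{GK2008}), so there is no in-paper argument to compare against. Your proof is the standard one used in those sources: elements of ${\rm Aut}_{\fq,D,G}(\cX)$ act as permutation matrices after the correction by the normalized function $u$, the scalars give the $\mathbb{F}_q^*$ factor, the field automorphisms give the Galois factor, and the fixed-point hypothesis is invoked exactly where it is needed, namely to make $\sigma\mapsto\pi_\sigma$ injective (any $\sigma$ in the kernel would fix all $N$ points of ${\rm supp}(D)$). The computation $\eta\bigl(u\cdot(f\circ\sigma^{-1})\bigr)_i=f(P_{\pi_\sigma^{-1}(i)})$ and the verification that permutation matrices are fixed entrywise by ${\rm Aut}(\fq)$ while scalar matrices are twisted are all correct.

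The one step stated too quickly is the Galois factor. For $\gamma\in{\rm Aut}(\fq)$ the $i$-th entry of $\eta(f)^\gamma$ is $\gamma(f(P_i))=f^\gamma(P_i^\gamma)$, so $\eta(f)^\gamma$ is a coordinate permutation of $\eta(f^\gamma)$ rather than $\eta(f^\gamma)$ itself, and for the result to lie in $C(D,G)$ one needs $D^\gamma=D$ and $G^\gamma=G$. Mere $\fq$-rationality of $D$ and $G$ gives invariance under the $q$-power Frobenius only, not under all of ${\rm Aut}(\fq)\cong{\rm Gal}(\fq/\mathbb{F}_p)$; the latter is an additional (implicit) hypothesis. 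It holds in all applications in this paper, since $\tcSq$ and $\tilde{\mathcal{R}}_q$ are defined over the prime field and the divisors $D$ and $G$ used here have Galois-stable supports with constant weights, and the same imprecision is present in the cited sources. Apart from tightening that step, your argument is complete.
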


\subsection{The Suzuki curve ${\mathcal{S}}_q$ and the Ree curve ${\mathcal{R}}_q$} \label{sec1}
For $s\geq1$ and $q=2q_0^2=2^{2s+1}$, the Suzuki curve $\mathcal S_q$ over $\mathbb{F}_q$ is defined by the affine equation $Y^q+Y=X^{q_0}\left(X^q+X\right)$, has genus $q_0(q-1)$ and is maximal over $\mathbb F_{q^4}$. The automorphism group $S(q):={\rm Aut}(\mathcal S_q)$ of $\mathcal S_q$ is isomorphic to the Suzuki group $^2B_2(q)$. 
We state some other properties of $S(q)$; see \cite{GKT2006} for more details.
\begin{itemize}
\item $S(q)$ has size $(q^2+1)q^2(q-1)$ and is a simple group.
\item $S(q)$ is generated by the stabilizer 
$$S(q)_{P_\infty}=\left\{\psi_{a,b,c}:(x,y)\mapsto (ax+b,a^{q_0+1}y+b^{q_0}x+c)\,|\,a,b,c\in\fq ,a\ne0\right \}$$ 
of the unique place centered at the infinite point of $\mathcal{S}_q$, together with the involution $\phi:(x,y)\mapsto(\alpha/\beta,y/\beta)$, where $\alpha:=y^{2q_0}+x^{2q_0+1}$ and $\beta:=xy^{2q_0}+\alpha^{2q_0}$.
\item $S(q)$ has exactly two short orbits on $\mathcal S_q$. One is non-tame of size $q^2+1$, consisting of all $\mathbb F_q$-rational places; the other is tame of size $q^2(q-1)(q+2q_0+1)$, consisting of all $\mathbb F_{q^4}\setminus\mathbb F_q$-rational places. The group $S(q)$ acts $2$-transitively on its non-tame short orbit, and the stabilizer $S(q)_{P,Q}$ of  two distinct $\fq$-rational places $P$ and $Q$ is tame and cyclic.
\end{itemize}

For $s\geq1$ and $q=3q_0^2=3^{2s+1}$, the Ree curve $\mathcal{R}_q$ over $\f_q$ is defined by the affine equations
$$
\mathcal{R}_q:\left\{
\begin{array}{ll}
z^q - z=x^{2q_0}\left( x^q - x \right)\\
y^q - y=x^{q_0}\left( x^q - x \right)
\end{array}
\right.,
$$
has genus $\frac{3}{2}q_0(q-1)(q+q_0+1)$ and is maximal over $\f_{q^6}$.
The automorphism group $R(q):={\rm Aut}(\mathcal{R}_q)$ is isomorphic to the simple Ree group $^2 G_2(q)$.
We state some other properties of $R(q)$; see \cite{Pedersen1992} and \cite{Skabelund}.
\begin{itemize}
\item $R(q)$ has size $(q^3+1)q^3(q-1)$.
\item $R(q)$ is generated by the stabilizer
$$R(q)_{P_\infty} = \left\{\psi_{a,b,c,d}\,|\,a,b,c,d\in\fq,a\ne0\right\},$$
$$ \psi_{a,b,c,d}:(x,y,z)\mapsto(ax+b,a^{q_0+1}y+ab^{q_0}x+c, a^{2q_0+1}z - a^{q_0+1}b^{q_0}y + ab^{2q_0}x + d), $$
of the unique place centered at the infinite point of $\mathcal{R}_q$, together with the involution $\phi:(x,y,z)\mapsto(w_6/w_8,w_{10}/w_8,w_9/w_8)$, for certain polynomial functions $w_i\in\f_3[x,y,z]$.
\item $R(q)$ has exactly two short orbits on $\mathcal{R}_q$.
 One is non-tame of size $q^3+1$, consisting of all $\f_q$-rational places. The other is tame of size $q^3(q-1)(q+1)(q+3q_0+1)$, consisting of all $\mathbb F_{q^6}\setminus\mathbb F_q$-rational places.
\end{itemize}

\subsection{Cyclic extensions of $\mathcal{S}_q$ and $\mathcal{R}_q$} \label{prelSkabe}

\subsubsection{A cyclic extension of $\mathcal{S}_q$}\label{secStilda}

The following constructions are due to D. Skabelund \cite{Skabelund}.
Let $\tilde{\mathcal{S}}_q$ be the curve defined over $\mathbb{F}_q$ by the affine equations 
\begin{equation}\label{equazsq}
\tilde{\mathcal{S}}_q: \begin{cases} y^q+y=x^{q_0}(x^q+x) \\ t^m=x^q+x \end{cases},
\end{equation}
where $m=q-2q_0+1$, $q_0=2^s$ and $q=2q_0^2$, with $s\geq 1$.

The curve $\tcSq$ can be seen as a degree-$m$ Kummer extension $t^m=x^q+x$ of an Artin-Schreier extension $y^q+y=x^{q_0}(x^q+x)$ of the projective line. In particular,

\begin{itemize}
\item[•] $(x)= m \sum\limits_{i=1}^q P_{(0,\alpha_i,0)}- mqP_{\infty}$,
\item[•] $(y)= m(q_0+1)P_{(0,0,0)} + m\sum\limits_{i=2}^q P_{(\alpha_i,0,0)} - m(q_0+q)P_{\infty}$,
\item[•] $(t)= \sum\limits_{i,j=1}^q P_{(\alpha_i,\alpha_j,0)} - q^2P_{\infty}$,
\end{itemize}
where $P_{\infty}$ is the only infinite place of $\tilde{\mathcal{S}}_q$ and $\mathbb{F}_q=\lbrace \alpha_1=0, \alpha_2,...,\alpha_q \rbrace$. Also, $g(\tcSq)=\frac{q^3-2q^2+q}{2}$.
The set of $\mathbb{F}_q$-rational places of $\tilde{\mathcal{S}}_q$ has size $q^2+1$ and consists of the places centered at the affine points of $\tilde{\mathcal{S}}_q$ lying on the plane $t=0$, together with $P_\infty$. These places correspond exactly to the $\mathbb{F}_q$-rational places of $\mathcal S_q$. The automorphism group ${\rm Aut}(\tilde{\mathcal S}_q)$ of $\tcSq$ admits the following subgroups:
\begin{itemize}
\item A cyclic group $C_m$ generated by the automorphism $\tau:(x,y,t)\mapsto(x,y,\lambda t)$, where $\lambda\in\f_{q^4}$ is a primitive $m$-th root of unity; $C_m$ is the Galois group of the cover $\tilde{\mathcal{S}}_q \to \mathcal{S}_q$.
\item A group $LS(q)$ lifted by $S(q)$ and generated by the automorphisms $\tilde{\psi}_{a,b,c}$ ($a,b,c\in\fq$, $a\ne0$) together with an involution $\tilde{\phi}$.  Here, $\tilde{\psi}_{a,b,c}(x,y):=\psi_{a,b,c}(x,y)$ and $\tilde{\psi}_{a,b,c}(t):=\delta t$, where $\delta^m=a$.
Similarly, $\tilde{\phi}(x,y):=\phi(x,y)$, and $\tilde{\phi}(t):=t/\beta$ (see \cite[Section 3]{Skabelund}).
\end{itemize}

The full automorphism group $\rm{Aut}(\tilde{\mathcal{S}}_q)$ of $\tilde{\mathcal{S}}_q$ was computed in \cite{GMQZ} and is a direct product $\tilde{S}(q) \times C_m$, where $\tilde{S}(q)\cong S(q)$. Also, $\rm{Aut}(\tilde{\mathcal{S}}_q)$ has exactly two short orbits: one short orbit $O_1$ has size $q^2+1$ and coincides with $\tilde{\mathcal{S}}_q(\f_q)$; the other short orbit $O_2$ has size $|S(q)|$, and hence the stabilizer in $\rm{Aut}(\tilde{\mathcal{S}}_q) $ of a place in $O_2$ has order $m$. The contribution to the different divisor of every element in $\rm{Aut}(\tilde{\mathcal{S}}_q) $ is also described, as summarized in the following lemma.

\begin{lemma}{\rm{ (\!\!\cite[Theorem 28]{GMQZ}}) } \label{ContributionsSuzuki}
Let $\sigma\in \tilde{S}(q)\setminus\{id\}$ and $C_m=\langle \tau\rangle$.
Denote by $o(\sigma)$ the order of $\sigma$.
Then $i(\tau^k)=q^2+1$ for all $k=1,\ldots,m-1$ and one of the following cases occurs.
\begin{itemize}
\item $o(\sigma)=2$, $i(\sigma)=m(2q_0+1)+1$, and $i(\sigma\tau^k)=1$ for all $k=1,\ldots,m-1$;
\item $o(\sigma)=4$, $i(\sigma)=m+1$, and $i(\sigma\tau^k)=1$ for all $k=1,\ldots,m-1$;
\item $o(\sigma)\mid(q-1)$, $i(\sigma)=2$, and $i(\sigma\tau^k)=2$ for all $k=1,\ldots,m-1$;
\item $o(\sigma)\mid(q+2q_0+1)$, $i(\sigma)=0$, and $i(\sigma\tau^k)=0$ for all $k=1,\ldots,m-1$;
\item $o(\sigma)\mid(q-2q_0+1)$, $i(\sigma)=0$, $i(\sigma\tau^j)=4m$ for exactly one $j\in\{1,\ldots,m-1\}$, and $i(\sigma\tau^k)=0$ for all $k\in\{1,\ldots,m-1\}\setminus\{j\}$.
\end{itemize}
\end{lemma}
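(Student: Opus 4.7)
The plan is to reduce everything to a combined analysis on the cover $\tilde{\mathcal S}_q \to \mathcal S_q$ with Galois group $C_m=\langle\tau\rangle$, using the known conjugacy class structure of $S(q)={}^2B_2(q)$ together with the fact that $\mathrm{Aut}(\tilde{\mathcal S}_q) = \tilde S(q)\times C_m$. Write a generic nontrivial element as $\sigma\tau^k$ with $\sigma\in\tilde S(q)$ and $k\in\{0,\ldots,m-1\}$; compute $i(\sigma\tau^k)$ case by case according to the order of $\sigma$.

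First I would treat $\sigma=\mathrm{id}$, $k\ne 0$. The automorphism $\tau^k$ acts as $t\mapsto\lambda^k t$, so its fixed places are precisely those with $t=0$ (the $q^2$ affine $\mathbb F_q$-rational points, where $x^q+x=0$) together with $P_\infty$; since $\gcd(m,\mathrm{char}(\mathbb F_q))=\gcd(m,2)=1$, the ramification is tame, giving $i(\tau^k)=q^2+1$. Next, for $\sigma\ne\mathrm{id}$, I would pass to the image $\bar\sigma\in S(q)$ via the quotient by $C_m$ and invoke the well-known description of conjugacy classes in $S(q)$: elements of $S(q)$ have orders $2$, $4$, or a divisor of one of the three cyclic tori $q-1$, $q+2q_0+1$, $q-2q_0+1$; involutions and order-$4$ elements fix only $P_\infty$ on $\mathcal S_q$, elements of order dividing $q-1$ fix exactly two $\mathbb F_q$-rational places (the tame stabilizer $S(q)_{P,Q}$), and elements of order dividing $q\pm 2q_0+1$ are fixed-point-free on $\mathcal S_q(\overline{\mathbb F}_q)$.

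For each class I would then lift to $\tilde{\mathcal S}_q$. In the tame cases ($o(\sigma)\mid q-1$) each of the two fixed places of $\bar\sigma$ is $\mathbb F_q$-rational and lies over a point of $\mathcal S_q$ that is totally ramified in $\tilde{\mathcal S}_q\to\mathcal S_q$ (the fiber reduces to a single point because $t=0$), so both lift to fixed places of $\sigma$ and of $\sigma\tau^k$, giving $i(\sigma)=i(\sigma\tau^k)=2$. For the anisotropic class $o(\sigma)\mid q+2q_0+1$ there are no fixed places on $\mathcal S_q$, and since $\gcd(o(\sigma),m)=1$ and $\gcd(o(\sigma),\mathrm{ord}(\tau^k))=1$, no lift $\sigma\tau^k$ acquires fixed places either; hence $i(\sigma\tau^k)=0$. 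The involutions and order-$4$ elements of $\tilde S(q)$ fix only $P_\infty\in\tilde{\mathcal S}_q$, and there the ramification is wild: I would compute $i(\sigma)$ at $P_\infty$ by reading off the higher ramification filtration of $S(q)_{P_\infty}$ on $\mathcal S_q$, then add the Kummer contribution from $t$, producing $m(2q_0+1)+1$ and $m+1$ respectively; for $k\ne 0$ the twist acts freely on $t$ at $P_\infty$ giving a tame contribution of $1$.

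The main obstacle is the last case $o(\sigma)\mid m=q-2q_0+1$. Here $\bar\sigma$ has no fixed places on $\mathcal S_q$, so $\sigma$ itself is fixed-point-free and $i(\sigma)=0$; but because $o(\sigma)\mid m$, the element $\sigma$ permutes the $m$-point fibers of the cyclic cover and can be realized as $\tau^{-j}$ on a whole fiber, producing exactly $4m$ fixed places of $\sigma\tau^j$ (one from each fiber over a fixed-point-free cycle of $\bar\sigma$ once the correct ``phase'' is achieved, the factor $4m$ coming from the number of such fibers). I would prove this by the following argument: consider the quotient $\tilde{\mathcal S}_q/\langle\sigma\tau^j\rangle$, apply Riemann--Hurwitz, and use that for each other $k\ne j$ the element $\sigma\tau^k$ must be fixed-point-free (by the tame coprimality arguments above), so all ramification of $\langle\sigma,\tau\rangle$ on $\tilde{\mathcal S}_q$ is concentrated in the unique $\sigma\tau^j$; comparing with the known ramification data in the double cover $\tilde{\mathcal S}_q\to\tilde{\mathcal S}_q/\langle\sigma\rangle$ (computable because $\sigma$ lives in $\tilde S(q)$) determines the integer $4m$ uniquely and pins down the unique exponent $j\in\{1,\ldots,m-1\}$.
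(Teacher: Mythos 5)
This lemma is not proved in the paper at all: it is quoted verbatim from \cite[Theorem 28]{GMQZ}, so there is no internal proof to compare your attempt against. Judged on its own terms, your overall framework (write every nontrivial automorphism as $\sigma\tau^k$ with $\sigma\in\tilde S(q)$, run through the element orders $2$, $4$, and the divisors of $q-1$, $q\pm 2q_0+1$ in the Suzuki group, and transport ramification data through the tame cyclic cover $\tilde{\mathcal S}_q\to\mathcal S_q$) is the natural one, and the first four bullets together with $i(\tau^k)=q^2+1$ would go through along the lines you sketch, provided you actually carry out the higher ramification computation at $P_\infty$ for the $2$-elements (the key point being that $P_\infty$ is totally and tamely ramified in $\tilde{\mathcal S}_q\to\mathcal S_q$, which stretches the jumps by the factor $m$ and yields $i_{\tilde{\mathcal S}_q}(\sigma)=m\bigl(i_{\mathcal S_q}(\bar\sigma)-1\bigr)+1$).

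The last case, however, rests on a false premise and the argument offered for it is circular. You assert that elements of order dividing $q\pm 2q_0+1$ are fixed-point-free on $\mathcal S_q$; this is true for $q+2q_0+1$ but false for $q-2q_0+1$. The tame short orbit of $S(q)$ on $\mathcal S_q$ has size $q^2(q-1)(q+2q_0+1)$, so its point stabilizers are cyclic of order $(q^2+1)/(q+2q_0+1)=q-2q_0+1=m$, and a nontrivial $\bar\sigma$ of order dividing $m$ fixes exactly four places of $\mathcal S_q$ (it lies in a unique conjugate of the order-$m$ torus $T$, which has index $4$ in its normalizer). This matters because any fixed place of $\sigma\tau^k$ on $\tilde{\mathcal S}_q$ must lie over a fixed place of $\bar\sigma$ on $\mathcal S_q=\tilde{\mathcal S}_q/C_m$: if $\bar\sigma$ really were fixed-point-free, then every twist $\sigma\tau^k$ would be fixed-point-free as well and $i(\sigma\tau^j)=4m$ could never occur, so your mechanism of ``one fixed place from each fiber over a fixed-point-free cycle of $\bar\sigma$'' cannot produce anything. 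The correct picture is that the $4m$ candidate places are the four unramified fibres (each of size $m$) over the four fixed places of $\bar\sigma$; on each such fibre $\sigma$ acts as a translation $\tau^{-j_i}$ (since $\tilde S(q)$ commutes with $C_m$ and $C_m$ is simply transitive on the fibre), so that fibre is fixed pointwise by $\sigma\tau^{j_i}$ and meets no other twist. The entire content of the bullet is therefore the determination of the four phases $j_1,\dots,j_4$, equivalently of the stabilizers of places in the orbit $O_2$, which are cyclic of order $m$ meeting $C_m$ trivially, i.e.\ graphs of injective homomorphisms $T\to C_m$ permuted by the order-$4$ element normalizing $T$ (which acts fixed-point-freely on $T\setminus\{1\}$). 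Your Riemann--Hurwitz argument does not touch this: it assumes at the outset that $\sigma\tau^k$ is fixed-point-free for all $k\ne j$ ``by the tame coprimality arguments above,'' but there is no coprimality available here ($o(\sigma)$ divides $m$), and that assumption is precisely what has to be proved; also, $\tilde{\mathcal S}_q/\langle\sigma\rangle$ is not a double cover, since $o(\sigma)$ divides the odd number $m$. So the final bullet, which is exactly the delicate part of the statement, is a genuine gap in your proposal.
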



Denote by $P_{(a,b,c)}$ the unique place centered at the affine point of coordinates $(a,b,c)$ of $\tcSq$ and by $P_\infty$ the place centered at the unique infinite point of $\tcSq$.
Define the functions $z:=y^{2q_0}+x^{2q_0+1}$ and $w:=xy^{2q_0}+z^{2q_0}$, which satisfy $z^q+z=x^{2q_0}(x^q+x)$ and $w^q+w=y^{2q_0}(x^q+x)$. Then
$$(x) =m \sum_{a^q+a=0} P_{(0,a,0)} - (q^2-2qq_0+q) P_\infty,$$
$$(y) = m(q_0+1) P_{(0,0,0)} + m \sum_{b^q+b=0, b \ne 0} P_{(b,0,0)}-(q^2-qq_0+q_0) P_\infty,$$
\begin{equation}\label{divt}
(t)=\sum_{a^q+a=0, b^q+b=0} P_{(a,b,0)} - q^2 P_\infty,
\end{equation}
$$(w)=(q^2+1)(P_{(0,0,0)}-P_\infty),$$
while $v_{P_\infty}(z)=-(q^2-q+2q_0)$ and $v_{P_{(0,0,0)}}(z)=m(2q_0+1)$. In particular,
$$\langle q^2-2qq_0+q, q^2-qq_0+q_0,q^2-q+2q_0,q^2,q^2+1 \rangle \subseteq H(P_\infty),$$
as already noted in \cite{Skabelund}.
We show that $H(P_\infty)$ is symmetric. This property will be used in the next sections to the construction of codes.

\begin{theorem}\label{Ssimmetrico}
The Weierstrass semigroup $H(P_\infty)$ at $P_\infty$ is symmetric.
\end{theorem}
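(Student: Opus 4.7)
The plan is to exploit the classical characterization that $H(P)$ is symmetric if and only if $(2g-2)P$ is a canonical divisor on the underlying curve (easily seen from Riemann-Roch: if $2g-1\notin H(P)$ then $\ell((2g-2)P_\infty)=g$, which forces $\ell(K-(2g-2)P_\infty)=1$, hence $K\sim(2g-2)P_\infty$ since the degree is $0$; conversely this equivalence yields $\ell((2g-1)P_\infty)=\ell((2g-2)P_\infty)=g$). So I will exhibit a canonical divisor supported only at $P_\infty$ by exploiting the Kummer cover $\pi\colon\tcSq\to\mathcal{S}_q$, $(x,y,t)\mapsto(x,y)$, of degree $m$, together with the well-known symmetry of $H(P_\infty^{\mathcal{S}})=\langle q,q+q_0,q+2q_0,q+2q_0+1\rangle$ on the Suzuki curve (the latter being telescopic, hence symmetric by Kirfel-Pellikaan).

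First I will analyze the ramification of $\pi$. The defining equation $t^m=x^q+x$ shows that $\pi$ is ramified exactly at places where the valuation of $f:=x^q+x$ is not a multiple of $m$. On $\mathcal{S}_q$, $f$ has simple zeros at each of the $q^2$ affine $\mathbb F_q$-rational places and a pole of order $q^2$ at $P_\infty^{\mathcal{S}}$. Since $q=m+2q_0-1$, a short calculation gives $q^2\equiv -1\pmod{m}$, so $\gcd(v_P(f),m)=1$ at every $\mathbb F_q$-rational place of $\mathcal{S}_q$ and zero elsewhere. As $m$ is odd and the characteristic is $2$, the cover is tame and totally ramified at these $q^2+1$ places; in particular $\pi^*P_\infty^{\mathcal{S}}=mP_\infty$ and the ramification divisor is
\[
R=(m-1)P_\infty+(m-1)D_0,\qquad D_0:=\sum_{P\in\tcSq(\mathbb F_q),\,P\neq P_\infty}P.
\]

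Next I will linearize everything at $P_\infty$. From \eqref{divt} one reads off $(t)=D_0-q^2P_\infty$, so $D_0\sim q^2P_\infty$ and consequently $R\sim(m-1)(q^2+1)P_\infty$. Symmetry of $H(P_\infty^{\mathcal{S}})$ gives $K_{\mathcal{S}_q}\sim(2g(\mathcal{S}_q)-2)P_\infty^{\mathcal{S}}$, whence $\pi^{*}K_{\mathcal{S}_q}\sim m(2g(\mathcal{S}_q)-2)P_\infty$. The Riemann-Hurwitz relation $K_{\tcSq}\sim\pi^{*}K_{\mathcal{S}_q}+R$ therefore yields
\[
K_{\tcSq}\sim\bigl[m(2g(\mathcal{S}_q)-2)+(m-1)(q^2+1)\bigr]P_\infty=(2g(\tcSq)-2)P_\infty,
\]
the last equality being precisely Riemann-Hurwitz applied to the cover $\pi$ (with $g(\mathcal{S}_q)=q_0(q-1)$ and $g(\tcSq)=(q^3-2q^2+q)/2$). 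This establishes that $(2g(\tcSq)-2)P_\infty$ is a canonical divisor and hence that $H(P_\infty)$ is symmetric.

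The only delicate point is the ramification analysis of the Kummer cover, and especially the observation that the same structure constant $m-1$ appears both at $P_\infty$ and at the affine $\mathbb F_q$-rational places, which is what allows $R$ to be written as a multiple of $P_\infty$ via the principal divisor $(t)$. Once this is verified, the rest is a routine linear-equivalence bookkeeping that matches the Riemann-Hurwitz count by construction.
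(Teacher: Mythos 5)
Your argument is correct, but it takes a genuinely different route from the paper. You work with the degree-$m$ Kummer cover $\pi\colon\tcSq\to\mathcal{S}_q$, import the known symmetry of the Suzuki semigroup $H(P_\infty^{\mathcal S})=\langle q,q+q_0,q+2q_0,q+2q_0+1\rangle$ (telescopic, hence symmetric), and then exhibit $(2g-2)P_\infty$ as a canonical divisor by combining $K_{\tcSq}\sim\pi^*K_{\mathcal S_q}+R$ with the observation that both $\pi^*K_{\mathcal S_q}$ and the (tame, totally ramified) different $R=(m-1)(P_\infty+D_0)$ can be pushed onto $P_\infty$ via the principal divisor $(t)=D_0-q^2P_\infty$; your ramification analysis ($v_P(x^q+x)=1$ at the $q^2$ affine rational places, $-q^2$ at $P_\infty^{\mathcal S}$, and $\gcd(q^2,m)=1$ since $q^2\equiv-1\pmod m$) checks out, as does the characterization of symmetry via $K\sim(2g-2)P$. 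The paper instead decomposes the function field the other way: it first shows that the Kummer curve $\mathcal C\colon t^m=x^q+x$ has symmetric semigroup at its infinite place (as a $p$-cover of a rational curve with a unique ramified place), and then climbs the Artin--Schreier extension $\tcSq\to\mathcal C$ using the Karanikolopoulos--Kontogeorgis lemma that symmetry is preserved under Galois $p$-covers with a unique ramified place. The trade-off is this: your proof is more explicit and self-contained on the divisor side but requires the symmetry of $H(P_\infty^{\mathcal S})$ as an external input, whereas the paper's proof needs no information about the Suzuki semigroup at all and, precisely for that reason, transfers verbatim to $\tilde{\mathcal R}_q$ (Theorem \ref{Rsimmetrico}), where the analogous semigroup $H(Q_\infty)$ on the Ree curve is not explicitly known; to run your argument there you would need to know independently that $(2g(\mathcal R_q)-2)Q_\infty$ is canonical on $\mathcal R_q$.
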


\begin{proof}
Consider the plane curve $\mathcal C$ defined by $t^m=x^q+x$, and the $p$-group $G_1\leq{\rm Aut}(\mathcal C)$ of order $q$ of translations $(x,t)\mapsto(x+a,t)$, $t\in\mathbb{F}_q$.
Then the quotient curve $\mathcal C /G_1$ is rational (\!\!\cite[Lemma 12.1 (iii)(g)]{HKT}) and the pole $\bar{P}_\infty$ of $t$ on $\mathcal C$ is the unique place which ramifies in $\mathcal C\rightarrow \mathcal C/G_1$ (\!\!\cite[Lemma 12.1 (iii)(d)]{HKT}).
Then $H(\bar{P}_\infty)$ is symmetric by \cite[Lemma 62]{KK}; see also \cite[Page 36]{Lew}.

Now consider the curve $\tcSq$, which is an Artin-Schreier extension $y^q+y=x^{q_0}(x^q+x)$ of $\mathcal C$; the Galois group $G_2$ of $\mathcal \tcSq\to\mathcal C$ is a $p$-group of order $q$.
Since $\tcSq$ is an $\mathbb{F}_{q^4}$-maximal curve and $G_2$ is a $p$-group, there exists exactly one place which ramifies in $\tcSq\to\mathcal C$; see \cite[Lemma 11.129 and Section 11]{HKT}. This place is $P_\infty$, as $P_\infty$ is totally ramified in over $\overline{F}_q(x)$.
Therefore, by \cite[Lemma 62]{KK}, $H(P_\infty)$ is symmetric.
\end{proof}

\begin{corollary}\label{SWeierstrassPoints}
The $q^2+1$ $\mathbb{F}_q$-rational places of $\tcSq$ are Weierstrass points.
\end{corollary}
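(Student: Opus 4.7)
The plan is to combine Theorem \ref{Ssimmetrico} with the transitivity of the automorphism group on $\tcSq(\f_q)$. First I would argue that $P_\infty$ itself is a Weierstrass point directly from the symmetry of its semigroup. Recall that $P$ fails to be a Weierstrass point exactly when the gap sequence at $P$ is $\{1,2,\ldots,g\}$, that is, when the largest gap equals $g$. On the other hand, any symmetric numerical semigroup of genus $g$ has largest gap equal to $2g-1$. Comparing these two conditions forces $g=1$ in the non-Weierstrass case.

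In our setting $g(\tcSq)=\frac{q^3-2q^2+q}{2}$, and since $q=2q_0^2=2^{2s+1}$ with $s\geq 1$ we have $q\geq 8$ and hence $g\geq 2$. By Theorem \ref{Ssimmetrico}, $H(P_\infty)$ is symmetric, so its largest gap is $2g-1>g$, and therefore the gap set at $P_\infty$ cannot coincide with $\{1,\ldots,g\}$. This proves that $P_\infty$ is a Weierstrass point of $\tcSq$.

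To extend this to every $\f_q$-rational place, I would invoke the description of the orbits of ${\rm Aut}(\tcSq)=\tilde{S}(q)\times C_m$ recalled just before Lemma \ref{ContributionsSuzuki}: the $q^2+1$ $\f_q$-rational places form a single short orbit $O_1=\tcSq(\f_q)$ containing $P_\infty$. Since any automorphism $\sigma\in{\rm Aut}(\tcSq)$ induces an isomorphism of local rings $\mathcal{O}_{P_\infty}\cong \mathcal{O}_{\sigma(P_\infty)}$, the Weierstrass semigroups satisfy $H(P_\infty)=H(\sigma(P_\infty))$, and in particular the property of being a Weierstrass point is preserved. Applying this to the transitive action of ${\rm Aut}(\tcSq)$ on $O_1$ yields the conclusion.

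There is no serious obstacle here: the corollary follows almost immediately from the symmetry established in Theorem \ref{Ssimmetrico} together with the orbit structure already recorded in the preliminaries. The only point to check carefully is the numerical bound $g\geq 2$, which is automatic for the admissible values of $q$.
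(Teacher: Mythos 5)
Your proof is correct and follows essentially the same route as the paper's: symmetry of $H(P_\infty)$ forces $P_\infty$ to be a Weierstrass point, and the transitivity of ${\rm Aut}(\tcSq)$ on the orbit $O_1=\tcSq(\f_q)$ propagates this to all $q^2+1$ rational places. The only difference is that where the paper cites \cite[Proposition 50]{KK} for the first step, you give a short self-contained argument (largest gap $2g-1$ for a symmetric semigroup versus largest gap $g$ at a non-Weierstrass point, excluded since $g\geq 2$), which is a valid substitute.
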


\begin{proof}
By Theorem \ref{Ssimmetrico} and \cite[Proposition 50]{KK}, $P_\infty$ is a Weierstrass point.
Then any place in the same orbit of $P_\infty$ under ${\rm Aut}(\tcSq)$ is a Weierstrass point, and the claim follows because the $q^2+1$ $\mathbb{F}_q$-rational places of $\tcSq$ form a unique orbit.
\end{proof}

We provide a plane model for $\tcSq$ as follows.

\begin{theorem}\label{pianosq}
A plane model of degree $q^2$ for $\tcSq$ is given by the equation $F(y,t)=0$, where
\begin{equation}\label{modpiano}
F(y,t)= y^{q^2}+y^q t^{m(q-1)}+y^q+y t^{m(q-1)}+t^{(q+q_0)m}.
\end{equation}
The coordinate functions $y$ and $t$ of this model are exactly the coordinate functions $y$ and $t$ of the model \eqref{equazsq}.
\end{theorem}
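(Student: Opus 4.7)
The plan is to derive the relation $F(y,t)=0$ by eliminating $x$ from the defining equations \eqref{equazsq}, and then to argue that the resulting plane curve is birational to $\tcSq$ and has the correct degree.

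First, I would start from the two defining equations of $\tcSq$ and rewrite them as
\[
x^q+x=t^m,\qquad x^{q_0}\,t^m=y^q+y,
\]
where the second is obtained by substituting the Kummer relation into the Artin–Schreier one. Away from the points where $t=0$, one has $x^{q_0}=(y^q+y)/t^m$. The key trick is then to compute $x^{qq_0}$ in two different ways and equate them. On one hand, applying the Frobenius $z\mapsto z^q$ to $x^{q_0}=(y^q+y)/t^m$ gives
\[
x^{qq_0}=\frac{y^{q^2}+y^q}{t^{mq}}.
\]
On the other hand, since $q_0=2^s$ is a power of the characteristic $p=2$, the map $z\mapsto z^{q_0}$ is additive, so from $x^q=t^m+x$ one gets
\[
x^{qq_0}=(t^m+x)^{q_0}=t^{mq_0}+x^{q_0}=t^{mq_0}+\frac{y^q+y}{t^m}.
\]

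Equating these two expressions and multiplying through by $t^{mq}$ yields
\[
y^{q^2}+y^q = t^{m(q+q_0)}+(y^q+y)\,t^{m(q-1)},
\]
which, in characteristic $2$, is exactly the relation $F(y,t)=0$ in \eqref{modpiano}. This shows that $y$ and $t$ satisfy the stated polynomial identity on $\tcSq$.

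To conclude that $F(y,t)=0$ is a plane model for $\tcSq$, I would verify two things. First, birationality: raising $x^{q_0}t^m=y^q+y$ to the $2q_0$-th power (again using that $2q_0$ is a $p$-power) gives $x^q\,t^{2q_0 m}=(y^q+y)^{2q_0}$, hence
\[
x = t^m + (y^q+y)^{2q_0}/t^{2q_0 m}\in\mathbb{F}_q(y,t),
\]
so $\mathbb{F}_q(\tcSq)=\mathbb{F}_q(x,y,t)=\mathbb{F}_q(y,t)$. Second, the degree of the model: the tower $\mathbb{F}_q(t)\subset\mathbb{F}_q(x,t)\subset\mathbb{F}_q(x,y,t)$ consists of a degree-$q$ step (since $x^q+x-t^m$ is irreducible in $x$ over $\mathbb{F}_q(t)$) followed by a degree-$q$ Artin–Schreier step, so $[\mathbb{F}_q(\tcSq):\mathbb{F}_q(t)]=q^2$. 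Since $F(y,t)$ has degree $q^2$ in $y$ and $y$ satisfies $F(y,t)=0$, it must be the (up to a unit) minimal polynomial of $y$ over $\mathbb{F}_q(t)$, and in particular $F$ is absolutely irreducible. The total degree of $F$ is $q^2$, completing the proof.

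The main obstacle I expect is simply spotting the elimination trick—namely, computing $x^{qq_0}$ in two ways, which is made possible by the fact that $q_0$ is a $p$-power and by the identity $x^q=t^m+x$. Once this observation is made, the remaining steps are straightforward field-theoretic bookkeeping.
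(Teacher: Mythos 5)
Your proof is correct, and it reaches the conclusion by a genuinely different route than the paper. For the verification that $F(y,t)=0$ holds on $\tcSq$, the paper simply substitutes $y^q+y=x^{q_0}(x^q+x)$ and $t^m=x^q+x$ into $F$ and checks that everything cancels; your ``compute $x^{qq_0}$ in two ways'' elimination is the same computation organized more transparently. The real divergence is in the proof of absolute irreducibility. The paper takes a projective/geometric route: it writes $F=F_1F_2$ with $F_1$ the absolutely irreducible factor giving the model, computes the intersection multiplicity $I_{P_\infty}(F\cap Z)=q^2$ with the line at infinity, matches it against $v_{P_\infty}(t)=-q^2$ to get $I_{P_\infty}(F_1\cap Z)=q^2$, and concludes $F_2$ is constant since $P_\infty$ is the only point of $F=0$ at infinity. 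You instead argue via field degrees: $[\mathbb{F}_q(\tcSq):\mathbb{F}_q(t)]=q^2$ (which also follows directly from $(t)_\infty=q^2P_\infty$), so the monic degree-$q^2$ polynomial $F\in\mathbb{F}_q(t)[y]$ annihilating $y$ is the minimal polynomial of $y$ over $\mathbb{F}_q(t)$. Your approach has the advantage of making the birationality explicit (your formula $x=t^m+(y^q+y)^{2q_0}/t^{2q_0m}$ shows $\mathbb{F}_q(y,t)=\mathbb{F}_q(\tcSq)$, a point the paper leaves implicit), while the paper's intersection-theoretic argument avoids having to justify the degrees in the tower $\mathbb{F}_q(t)\subset\mathbb{F}_q(x,t)\subset\mathbb{F}_q(\tcSq)$. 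Two small points you should make explicit: the irreducibility of $X^q+X-t^m$ over $\mathbb{F}_q(t)$ and of the Artin--Schreier step need a word of justification (e.g.\ total ramification at a place where the right-hand side has pole order coprime to $p$, or simply quote $\deg(t)_\infty=q^2$); and passing from ``minimal polynomial over $\mathbb{F}_q(t)$'' to ``absolutely irreducible'' uses that $\mathbb{F}_q$ is the full constant field of $\mathbb{F}_q(\tcSq)$ (equivalently, that $\tcSq$ is geometrically irreducible), which holds here but deserves mention.
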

\begin{proof}
By direct computation, \begin{center} $y^{q^2}+y^qt^{m(q-1)}+y^q+yt^{m(q-1)}+t^{(q+q_0)m}=x^{qq_0}(x^{q^2}+x^q)+x^{q_0}(x^q+x)(x^q+x)^{q-1}+(x^q+x)^{q+q_0}=0.$ \end{center}
Then $\tcSq$ has a plane model $F_1(y,t)=0$, where $F_1$ is an absolutely irreducible factor of $F(y,t)$.
Let $F(y,t)=F_1(y,t)\cdot F_2(y,t)$ with $F_2\in\overline{\mathbb{F}}_q[y,t]$, let $Z$ be the homogeneous coordinate in the $yt$-plane, and with abuse of notation denote by $P_\infty$ also the point of $\tcSq$ at which the place $P_\infty$ is centered.
Consider the intersection multiplicity $I_{P_\infty}(F\cap Z)$ of the curve $F(Y,T)=0$ with the plane $Z=0$ at the point $P_\infty$.
From Equation \eqref{modpiano}, $I_{P_\infty}(F\cap Z)=q^2$; on the other hand, $I_{P_\infty}(F\cap Z)=I_{P_\infty}(F_1\cap Z)+I_{P_\infty}(F_2\cap Z)$. Thus, $I_{P_\infty}(F_2\cap Z)=q^2-I_{P_\infty}(F_1\cap Z)$.
From the properties of the valuation $v_{P_\infty}(\cdot)$ at the place $P_\infty$, we have $-q^2=v_P(t)=I_P(F_1\cap T)-I_P(F_1\cap Z)$; since $P_\infty$ does not lie on the plane $T=0$, we have $I_{P_\infty}(F_1\cap T)=0$. Thus, $I_{P_\infty}(F_1\cap Z)=q^2$.
Therefore, $I_{P_\infty}(F_2\cap Z)=0$. As $P_\infty$ is the only intersection point of $F(Y,T)=0$ with the plane at infinity $Z=0$, this implies that $F_2(Y,T)$ is a constant in $\overline{\mathbb F}_q$. Hence, $F(y,t)$ is absolutely irreducible and the claim is proved.
\end{proof}

\subsubsection{A cyclic extension of $\mathcal{R}_q$}\label{secRtilda}

Let $\tilde{\mathcal{R}}_q$ be the curve defined over $\mathbb{F}_q$ by the affine equations
 $$\tilde{\mathcal{R}}_q : \begin{cases} t^m=x^q-x \\ z^q-z=x^{2q_0}(x^q-x)\\y^q-y=x^{q_0}(x^q-x)\end{cases} ,$$
 where $m=q-3q_0+1$ with $q_0=3^s$, $q=3q_0^2=3^{2s+1}$, $s\geq 1$. 
As for the curve $\tcSq$, from the theory of Artin-Schreier and Kummer extensions we obtain the principal divisors of its coordinate functions $x,y,z,t$:
\begin{itemize}
\item[•] $(x)= m\sum\limits_{i,j=1}^q P_{(0,\beta_i,\beta_j,0)}-mq^2P_{\infty}$,
\item[•] $(y)= m(q_0+1)\sum\limits_{i=1}^q P_{(0,0,\beta_i,0)} + \sum\limits_{i=2}^q \sum\limits_{j=1}^q P_{(\beta_i,0,\beta_j,0)} -mq(q_0+q)P_{\infty}$,
\item[•] $(z)= m(2q_0+1)\sum\limits_{i=1}^q P_{(0,\beta_i,0,0)} + m\sum\limits_{i=2}^q \sum\limits_{j=1}^q P_{(\beta_i,\beta_j,0,0)} -mq(2q_0+q)P_{\infty}$,
\item[•] $(t)= \sum\limits_{i,j,l=1}^q P_{(\beta_i,\beta_j,\beta_l,0)} - q^3P_{\infty}$,
\end{itemize}
where $P_{\infty}$ is the only infinite place of $\tilde{\mathcal{R}}_q$ and $\mathbb{F}_q=\{ \beta_1=0, \beta_2,\ldots,\beta_q \}$.
The genus of $\tilde{\mathcal R}_q$ is $\frac{1}{2}\left(q^4-2q^3+q\right)$.
The set of $\f_q$-rational places of $\tilde{\mathcal R}_q$ has size $q^3+1$ and consists of the places centered at the affine points of $\tilde{\mathcal{R}}_q$ lying on the plane $t=0$, together with $P_\infty$. Also, $\tilde{\mathcal R}_q$ has no $\mathbb F_{q^2}$- or $\mathbb F_{q^3}$-rational places which are not $\mathbb F_q$-rational.
The automorphism group ${\rm Aut}(\tilde{\mathcal R}_q)$ of $\tilde{\mathcal R}_q$ has the following subgroups:
\begin{itemize}
\item A cyclic group $C_m$ generated by the automorphism $\tau:(x,y,t)\mapsto(x,y,\lambda t)$, where $\lambda\in\f_{q^4}$ is a primitive $m$-th root of unity; $C_m$ is the Galois group of the cover $\tilde{\mathcal{R}}_q\to\mathcal{R}_q$.
\item A group $LR(q)$ lifted by $R(q)$ and generated by the automorphisms $\tilde{\psi}_{a,b,c,d}$ ($a,b,c,d\in\fq$, $a\ne0$) together with the involution $\tilde{\phi}$. 
Here, $\tilde{\psi}_{a,b,c,d}(x,y,z):=\psi_{a,b,c,d}(x,y,z)$ and $\tilde{\psi}_{a,b,c,d}(t):=\delta t$, where $\delta^m=a$.
Similarly, $\tilde{\phi}(x,y,z):=\phi(x,y,z)$, and $\tilde{\phi}(t):=t/w_8$; see \cite[Section 4]{Skabelund}.
\end{itemize}

The full automorphism group $\rm{Aut}(\tilde{\mathcal{R}}_q)$ of $\tilde{\mathcal{R}}_q$ was computed in \cite{GMQZ} and is a direct product $\tilde{R}(q)\times C_m$, where $\tilde{R}(q)\cong R(q)$. Also, $\rm{Aut}(\tilde{\mathcal{R}}_q)$ has exactly two short orbits: one short orbit $O_1$ has length equal to $q^3+1$ and coincides with $\tilde{\mathcal{R}}_q(\f_q)$; the other short orbit $O_2$ has size $|R(q)|$, and hence the stabilizer in $\rm{Aut}(\tilde{\mathcal{R}}_q) $ of a place in $O_2$ has order $m$. The contribution to the different divisor of every element in $\rm{Aut}(\tilde{\mathcal{R}}_q) $ is also described, as summarized in the following lemma.

\begin{lemma}{\rm (\!\!\cite{GMQZ})} \label{ContributionsRee}
Let $\sigma\in \tilde{R}_q\setminus\{id\}$ and $C_m=\langle\tau\rangle$. Denote by $o(\sigma)$ the order of $\sigma$. Then $i(\tau^k)=q^3+1$ for all $k=1,\ldots,m-1$ and one of the following cases occurs.
\begin{itemize}
\item $o(\sigma)=3$, $\sigma$ is in the center of a Sylow $3$-subgroup, $i(\sigma)=m(q+3q_0+1)+1=q^2-q+2$, and $i(\sigma\tau^k)=1$ for all $k=1,\ldots,m-1$;
\item $o(\sigma)=3$, $\sigma$ is not in the center of any Sylow $3$-subgroup, $i(\sigma)=m(3q_0+1)+1=q^2-q+2-mq$, and $i(\sigma\tau^k)=1$ for all $k=1,\ldots,m-1$;
\item $o(\sigma)=9$, $i(\sigma)=m+1$, and $i(\sigma\tau^k)=1$ for all $k=1,\ldots,m-1$;
\item $o(\sigma)=2$, $i(\sigma)=q+1$, and $i(\sigma\tau^k)=q+1$ for all $k=1,\ldots,m-1$;
\item $o(\sigma)=6$, $i(\sigma)=1$, and $i(\sigma\tau^k)=1$ for all $k=1,\ldots,m-1$;
\item $o(\sigma)\mid(q-1)$, $o(\sigma)\ne2$, $i(\sigma)=2$, and $i(\sigma\tau^k)=2$ for all $k=1,\ldots,m-1$;
\item $o(\sigma)\mid(q+1)$, $o(\sigma)\ne2$, $i(\sigma)=0$, and $i(\sigma\tau^k)=0$ for all $k=1,\ldots,m-1$;
\item $o(\sigma)\mid(q+3q_0+1)$, $i(\sigma)=0$, and $i(\sigma\tau^k)=0$ for all $k=1,\ldots,m-1$;
\item $o(\sigma)\mid(q-3q_0+1)$, $i(\sigma)=0$, $i(\sigma\tau^j)=6m$ for exactly one $j\in\{1,\ldots,m-1\}$, and $i(\sigma\tau^k)=0$ for all $k\in\{1,\ldots,m-1\}\setminus\{j\}$.
\end{itemize}
\end{lemma}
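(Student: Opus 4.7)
My plan is to exploit the direct-product decomposition $\mathrm{Aut}(\tilde{\mathcal R}_q) = \tilde R(q)\times C_m$ together with the Kummer cover $\pi:\tilde{\mathcal R}_q\to\mathcal R_q$. From the divisors of $x$ and $t$ recalled above, $\pi$ is totally ramified exactly over the $q^3+1$ $\mathbb F_q$-rational places of $\mathcal R_q$ and unramified elsewhere. Since $\gcd(m,p)=1$, each nontrivial $\tau^k$ is tame and fixes precisely the ramification locus of $\pi$, yielding $i(\tau^k)=q^3+1$ at once.

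For $\sigma\in\tilde R(q)\setminus\{id\}$ with projection $\bar\sigma\in R(q)$, a place $P\in\tilde{\mathcal R}_q$ is fixed by $\sigma\tau^k$ only if $\pi(P)$ is fixed by $\bar\sigma$, so the first step is to import from the classical theory of $R(q)={}^2G_2(q)$ (following Pedersen's paper) the fixed-place set and, when $\bar\sigma$ is wild, the higher ramification filtration of every conjugacy class of $R(q)$. For each fixed place $\bar P$ of $\bar\sigma$ one then analyses the $m$ preimages: if $\bar P\in O_2$ (unramified in $\pi$) the $m$ preimages form a single $\tau$-orbit and exactly one is fixed by each $\sigma\tau^k$; if $\bar P\in\mathcal R_q(\mathbb F_q)$ (totally ramified) the unique preimage is fixed by $\sigma\tau^k$ iff the scalar $\lambda^k\delta$ acting on $t$ equals $1$, where $\delta^m=a$ for $\sigma=\tilde{\psi}_{a,b,c,d}$, and the local contribution to $i(\sigma\tau^k)$ is read off Hilbert's formula for the different. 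A case-by-case enumeration of the conjugacy classes of $R(q)$ then matches the nine bullets: wild classes of orders $3$ (central and non-central in a Sylow $3$-subgroup) and $9$ ramify above $P_\infty$ with contributions $q^2-q+2$, $q^2-q+2-mq$, $m+1$ computed from the Artin-Schreier filtrations of $y^q-y=x^{q_0}(x^q-x)$ and $z^q-z=x^{2q_0}(x^q-x)$; involutions and order-$6$ elements fix only places in the tame short orbit $O_2$, giving $q+1$ or $1$ uniformly in $k$; tame torus classes with orders dividing $q-1$, $q+1$, or $q+3q_0+1$ again fix places in $O_2$ and contribute $2$ or $0$.

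The main obstacle is the final bullet $o(\sigma)\mid m=q-3q_0+1$: here $\bar\sigma$ acts fixed-point freely on $\mathcal R_q$, so naively every $i(\sigma\tau^k)$ vanishes, yet for a unique exponent $j\in\{1,\dots,m-1\}$ the product $\sigma\tau^j$ acquires a short orbit of length $6m$ on $\tilde{\mathcal R}_q$. The mechanism is that, for that one $j$, the scalar $\lambda^j\delta$ acting on $t$ makes $\sigma\tau^j$ conjugate inside $\mathrm{Aut}(\tilde{\mathcal R}_q)$ to a semisimple element of a new cyclic torus of order $m$ not coming from $\tilde R(q)$; such a torus has centraliser of index $6$ in its normaliser inside $\mathrm{Aut}(\tilde{\mathcal R}_q)$, whence $i(\sigma\tau^j)=6m$. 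Pinning down this exponent $j$ as a function of $\delta$ relative to $\lambda$, and verifying the factor $6$, is the technical heart of the argument, parallel to the analogous subtle case of Lemma \ref{ContributionsSuzuki}.
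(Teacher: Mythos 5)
First, a point of reference: the paper does not prove this lemma at all --- it is imported verbatim from \cite{GMQZ} --- so there is no internal proof to compare against. Your overall plan (exploit $\mathrm{Aut}(\tilde{\mathcal R}_q)=\tilde R(q)\times C_m$, the ramification of the Kummer cover $\pi\colon\tilde{\mathcal R}_q\to\mathcal R_q$, and the fixed-point and higher-ramification data of the conjugacy classes of $R(q)$ acting on $\mathcal R_q$) has the right shape, and your computation of $i(\tau^k)=q^3+1$ is correct.

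However, your fiber analysis contains two concrete errors that would derail the case-by-case computation. (i) Over a totally ramified place $\bar P\in\mathcal R_q(\mathbb F_q)$, the unique preimage $P$ is fixed by $\sigma\tau^k$ for \emph{every} $k$ as soon as $\bar\sigma$ fixes $\bar P$, simply because $P$ is the only point in the fiber; the scalar $\lambda^k\delta$ does not decide whether $P$ is fixed, it only governs the local different exponent $d_P(\sigma\tau^k)=v_P\bigl(\sigma\tau^k(u)-u\bigr)$ for a uniformizer $u$ (equal to $1$ when the induced action on the cotangent line is nontrivial, and larger only in the wild cases). With your criterion an involution $\sigma$ would satisfy $i(\sigma\tau^k)=0$ for all but one $k$, contradicting the claimed $i(\sigma\tau^k)=q+1$ for all $k$. (ii) Over an unramified place fixed by $\bar\sigma$, since $\sigma$ and $\tau$ commute and $C_m$ acts regularly on the fiber, there is exactly one exponent $k_0$ for which $\sigma\tau^{k_0}$ fixes a point of that fiber, and it then fixes \emph{all} $m$ points of the fiber; for $k\ne k_0$ it fixes none. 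Your statement that ``exactly one is fixed by each $\sigma\tau^k$'' is the opposite, and with it the last bullet would come out as $i(\sigma\tau^k)=6$ for every $k$ instead of $6m$ for a single $j$ and $0$ otherwise. Once (ii) is corrected, the $6m$ is just $m$ times the number of places of $\mathcal R_q$ fixed by $\bar\sigma$, namely $[N_{R(q)}(A):A]=6$ for the TI torus $A$ of order $m$; but one must still show that the exponent $k_0$ is the same for all six fibers, which you correctly identify as the technical heart and do not supply. Finally, describing the stabilizer of a place of $O_2$ as ``a new cyclic torus not coming from $\tilde R(q)$'' is imprecise: it is a diagonally embedded cyclic subgroup of $\tilde R(q)\times C_m$ of order $m$, the graph of an isomorphism from a torus of $\tilde R(q)$ onto $C_m$.
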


As for $\tcSq$, we prove the simmetricity of $H(P_\infty)$.

\begin{theorem}\label{Rsimmetrico}
The Weierstrass semigroup $H(P_\infty)$ at $P_\infty$ is symmetric.
\end{theorem}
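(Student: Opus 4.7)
The plan is to mimic the two-step tower argument used in Theorem \ref{Ssimmetrico}, modified to take into account that $\tilde{\mathcal{R}}_q$ now sits above the Kummer curve $t^m=x^q-x$ via two Artin--Schreier relations rather than one.

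First I would consider the plane curve $\mathcal{C}: t^m = x^q - x$ and the $p$-group $G_1 \leq \mathrm{Aut}(\mathcal{C})$ of order $q$ consisting of the translations $(x,t)\mapsto (x+a,t)$ with $a\in\f_q$. Since $t$ is fixed by $G_1$ and generates the fixed field, the quotient $\mathcal{C}/G_1$ is the rational curve $\mathbb{P}^1_t$, and the pole $\bar{P}_\infty$ of $t$ on $\mathcal{C}$ is the unique place ramifying in $\mathcal{C}\to \mathcal{C}/G_1$ (all the other ramification of $\mathcal{C}\to\mathbb{P}^1_x$ is concentrated at the $\f_q$-rational affine points, which are permuted by $G_1$). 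Then \cite[Lemma 62]{KK} applies and gives that $H(\bar{P}_\infty)$ is symmetric.

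Next I would view $\tilde{\mathcal{R}}_q$ as a Galois cover of $\mathcal{C}$. The relations
$y^q - y = x^{q_0}(x^q-x) = x^{q_0} t^m$ and $z^q - z = x^{2q_0}(x^q-x)=x^{2q_0} t^m$ present $\f_q(\tilde{\mathcal{R}}_q)/\f_q(\mathcal{C})$ as an iterated Artin--Schreier extension, so its Galois group $G_2$ is a $p$-group. Since $\tilde{\mathcal{R}}_q$ is $\f_{q^6}$-maximal and $G_2$ is a $p$-group, \cite[Lemma 11.129 and Section 11]{HKT} ensures that exactly one place of $\tilde{\mathcal{R}}_q$ ramifies in $\tilde{\mathcal{R}}_q\to\mathcal{C}$; this place has to be $P_\infty$, because $P_\infty$ is already totally ramified over $\overline{\f}_q(x)$, and hence also ramified in $\tilde{\mathcal{R}}_q\to\mathcal{C}$. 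A second application of \cite[Lemma 62]{KK} then yields the symmetry of $H(P_\infty)$.

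The step that requires the most care is ensuring that the uniqueness-of-ramification machinery from \cite{HKT} applies to the whole composite cover $\tilde{\mathcal{R}}_q\to\mathcal{C}$, and not just to a single Artin--Schreier layer. If one prefers to avoid this, the extension can be split as a tower $\mathcal{C} \subset \mathcal{C}' \subset \tilde{\mathcal{R}}_q$ by adjoining $y$ first and $z$ second, so that both steps are Artin--Schreier covers of degree exactly $q$; the intermediate curve $\mathcal{C}'$ inherits $\f_{q^6}$-maximality from $\tilde{\mathcal{R}}_q$ by the Kleiman--Serre covering theorem, and one can then apply \cite[Lemma 11.129]{HKT} and \cite[Lemma 62]{KK} in turn at each floor of the tower, propagating symmetry from $\mathbb{P}^1$ to $\mathcal{C}$, from $\mathcal{C}$ to $\mathcal{C}'$, and finally from $\mathcal{C}'$ to $\tilde{\mathcal{R}}_q$, with the unique ramified place at each stage being the restriction of $P_\infty$ by total ramification over $x$.
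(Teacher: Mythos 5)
Your proposal is correct and follows essentially the same route as the paper: the paper's proof is precisely your ``fallback'' tower argument, splitting $\mathbb{P}^1 \subset \mathcal{C} \subset \mathcal{C}' \subset \tilde{\mathcal{R}}_q$ into a rational quotient step followed by two degree-$q$ Artin--Schreier floors (the paper adjoins $z$ first and $y$ second, which is immaterial) and applying \cite[Lemma 62]{KK} at each stage. Your explicit appeal to the Kleiman--Serre covering result to transfer maximality to the intermediate curve is a detail the paper leaves implicit, and is a sensible addition.
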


\begin{proof}
Arguing as in the proof of Theorem \ref{Ssimmetrico} and applying \cite[Lemma 62]{KK} several times, the following steps are proved.
If $\mathcal X$ is the curve defined by $t^m=x^q-x$, then the Weierstrass semigroup at the unique infinite place of $\mathcal X$ is symmetric.
Equation $z^q-z=x^{2q_0}(x^q-x)$ defines an Artin-Schreier extension $\mathcal Y$ of $\mathcal X$, and the Weierstrass semigroup at the unique infinite place of $\mathcal Y$ is symmetric.
Equation $y^q-y=x^{q_0}(x^q-x)$ defines the Artin-Schreier extension $\tilde{\mathcal R}_q$ of $\mathcal Y$, and the Weierstrass semigroup at the unique infinite place of $\tilde{\mathcal R}_q$ is symmetric.
\end{proof}

Arguing as for Corollary \ref{SWeierstrassPoints}, Theorem \ref{Rsimmetrico} implies the following result.

\begin{corollary}
The $q^3+1$ $\mathbb{F}_q$-rational places of $\tilde{\mathcal R}_q$ are Weierstrass points.
\end{corollary}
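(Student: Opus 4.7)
The plan is to mirror the argument used for Corollary \ref{SWeierstrassPoints}, leveraging the symmetry of $H(P_\infty)$ established in Theorem \ref{Rsimmetrico} together with the transitivity of the automorphism group on the set of $\mathbb{F}_q$-rational places.

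First, I would invoke Theorem \ref{Rsimmetrico} to conclude that $H(P_\infty)$ is symmetric. By the characterization in \cite[Proposition 50]{KK}, a place whose Weierstrass semigroup is symmetric is automatically a Weierstrass point (since the genus of $\tilde{\mathcal R}_q$ is $\tfrac{1}{2}(q^4-2q^3+q)\geq 1$, the semigroup is nontrivial and the symmetric condition forces the gap sequence to differ from the ordinary one). This gives that $P_\infty$ itself is a Weierstrass point.

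Next, I would recall from Section \ref{secRtilda} that ${\rm Aut}(\tilde{\mathcal R}_q)=\tilde{R}(q)\times C_m$ has two short orbits, one of which is $O_1=\tilde{\mathcal R}_q(\mathbb{F}_q)$, of length $q^3+1$, and $P_\infty$ lies in this orbit. Since automorphisms of a curve preserve the Weierstrass semigroup (they send the gap sequence at a point to the gap sequence at its image), the image of a Weierstrass point under any element of ${\rm Aut}(\tilde{\mathcal R}_q)$ is again a Weierstrass point. As $O_1$ is a single orbit containing $P_\infty$, every place in $O_1$ is a Weierstrass point, yielding the claim for all $q^3+1$ $\mathbb{F}_q$-rational places.

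There is essentially no real obstacle: the only thing to check is that the orbit structure described in Section \ref{secRtilda} really places $P_\infty$ in the same orbit as the remaining $q^3$ affine $\mathbb{F}_q$-rational points, but this is explicitly stated there ($O_1=\tilde{\mathcal R}_q(\mathbb{F}_q)$ has size $q^3+1$, accounting for all $\mathbb{F}_q$-rational places including $P_\infty$). The proof can therefore be written in one or two sentences, parallel in form to that of Corollary \ref{SWeierstrassPoints}.
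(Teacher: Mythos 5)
Your argument is correct and is exactly the route the paper takes: Theorem \ref{Rsimmetrico} together with \cite[Proposition 50]{KK} shows $P_\infty$ is a Weierstrass point, and the single orbit $O_1=\tilde{\mathcal R}_q(\mathbb{F}_q)$ under ${\rm Aut}(\tilde{\mathcal R}_q)$ spreads this to all $q^3+1$ rational places, mirroring Corollary \ref{SWeierstrassPoints}. No gaps.
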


\begin{remark}
An explicit description of the generators of $H(P_\infty)$ seems to be a challenging task, as well as a description of the Weierstrass semigroup $H(Q_\infty)$ at the unique infinite point $Q_\infty$ of the Ree curve $\mathcal{R}_q$. Only partial results on $H(Q_\infty)$ are known; for instance, $H(Q_\infty)$ has $132$ generators for $q=27$, as shown in {\rm \cite[Table 15]{DE}}.
\end{remark}

We provide a plane model for $\tilde{\mathcal{R}}_q$ as follows.
\begin{theorem}
A plane model of degree $q^3$ for $\tilde{\mathcal{R}}_q$ is given by the following equation: 
$$t^{q^3}-y^{q^2}t^{m(q-1)(q+3q_0}-y^{q^2}t^{mq(q-1)}-y^{q^2}+y^q t^{m(q-1)(q+3q_0+1)}$$
$$+y^q t^{m(q-1)(q+3q_0)}+y^q t^{mq(q-1)}-y t^{m(q-1)(q+3q_0+1)}-t^{mq(q+3q_0+1)}=0.$$
\end{theorem}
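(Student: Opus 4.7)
The strategy closely parallels the proof of Theorem \ref{pianosq} for $\tcSq$. The first step is to verify that the polynomial $F(y,t)$ appearing in the statement vanishes identically on $\tilde{\mathcal{R}}_q$, by direct substitution of the defining relations. Setting $x^q = x + t^m$ and $y^q = y + x^{q_0}t^m$, one obtains the Frobenius iterate $y^{q^2} = y + x^{q_0}t^m + x^{qq_0}t^{mq}$, and uses $x^{qq_0} = x^{q_0} + t^{mq_0}$, which is valid in characteristic three since $q_0 = 3^s$. Substituting these expressions into each monomial of $F(y,t)$ reduces it to an expression purely in $x$ and $t$. Grouping the $y^{q^2}$, $y^q$, $y$ contributions separately and repeatedly invoking $(a-b)^{3^k} = a^{3^k} - b^{3^k}$ in characteristic three, all terms should telescope to $0$.

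Once $F(y,t)\equiv 0$ on $\tilde{\mathcal R}_q$ is established, the curve is birational to an absolutely irreducible component of the plane curve $F(y,t) = 0$. Write $F = F_1\cdot F_2$ over $\overline{\mathbb{F}}_q$ with $F_1$ the component through which this birational map factors. To conclude that $F$ itself is absolutely irreducible, it suffices to show $F_2$ is a nonzero constant. Let $Z$ denote the homogenizing coordinate in the projective closure of the $yt$-plane. Inspection of the leading term $t^{q^3}$ shows that $P_\infty$ is the only point of $F=0$ on $\{Z=0\}$ and that
$$I_{P_\infty}(F \cap Z) = q^3.$$
On the other hand, from $v_{P_\infty}(t) = -q^3$ together with $I_{P_\infty}(F_1 \cap T) = 0$ (since $P_\infty$ is not on the line $T=0$), one obtains $I_{P_\infty}(F_1 \cap Z) = q^3$. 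By additivity of intersection multiplicities, $I_{P_\infty}(F_2 \cap Z) = 0$; since $P_\infty$ is the only candidate intersection point of the projective closure of $F_2=0$ with $\{Z=0\}$, the polynomial $F_2$ must be constant. Hence $F$ is absolutely irreducible and defines a plane model of degree $q^3$ for $\tilde{\mathcal R}_q$.

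The main obstacle is the verification in the first step. In contrast to the Suzuki case, the Ree formula contains many more monomials and the $t$-exponents $m(q-1)(q+3q_0)$, $m(q-1)(q+3q_0+1)$, and $mq(q+3q_0+1)$ are considerably more intricate. These exponents are engineered precisely so that, after substituting the Artin--Schreier and Kummer relations, the pole orders at $P_\infty$ among the dominant monomials match and the coefficients cancel. Keeping careful track of the characteristic-three Frobenius expansions, and organising the computation as $y^{q^2}(\cdot) + y^q(\cdot) + y(\cdot) + (\text{pure $t$})$ before performing the final substitutions, should make the telescoping manageable.
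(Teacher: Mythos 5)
Your overall architecture --- verify that the polynomial vanishes on the curve, then prove absolute irreducibility by computing $I_{P_\infty}(F\cap Z)=q^3$ from $v_{P_\infty}(t)=-q^3$ --- is the same two-step scheme the paper uses, and your second step reproduces the argument of Theorem \ref{pianosq} faithfully. The gap is in the first step, and it is not merely that the ``telescoping'' is left unchecked: with $y$ and $t$ taken to be the coordinate functions of the space model of $\tilde{\mathcal{R}}_q$, the identity is false, and in fact no polynomial relation between these two coordinate functions can give a plane model of $\tilde{\mathcal{R}}_q$. Concretely, set $T=t^m=x^q-x$ and write $a=m(q-1)(q+3q_0)$, $b=mq(q-1)$, $c=m(q-1)(q+3q_0+1)$. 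Your substitutions give $y^q=y+x^{q_0}T$ and $y^{q^2}=y+x^{q_0}T+x^{q_0}T^q+T^{q+q_0}$, so each of $y$, $y^q$, $y^{q^2}$ equals $y$ plus an element of $\fq(x,t)$. Collecting the coefficient of the bare $y$ in the displayed polynomial yields
$$-(t^{a}+t^{b}+1)+(t^{c}+t^{a}+t^{b})-t^{c}=-1,$$
so after substitution the expression equals $-y$ plus an element of $\fq(x,t)$; this is not zero, since $y$ generates a nontrivial degree-$q$ Artin--Schreier extension and $[\fq(x,t):\fq(x)]=m$ is prime to $q$.

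The obstruction is structural. The relation $(y^q-y)T^{q-1}=y^{q^2}-y^q-T^{q+q_0}$, which follows from your own substitutions, shows that $[\fq(y,t):\fq(t)]\le q^2$, whereas $(t)_\infty=q^3P_\infty$ forces $[\fq(\tilde{\mathcal{R}}_q):\fq(t)]=q^3$; indeed $z\notin\fq(x,y,t)$, so the coordinate functions $y,t$ generate a proper index-$q$ subfield and any plane curve they satisfy is a model of that subfield, not of $\tilde{\mathcal{R}}_q$. (This is exactly where the Ree case differs from the Suzuki case: for $\tcSq$ one has $\fq(y,t)=\fq(\tcSq)$, which is why the direct verification works there; note also that the Ree statement, unlike Theorem \ref{pianosq}, does not assert that its $y$ is the coordinate function of the space model.) The paper's proof avoids this by starting from the plane model of $\mathcal{R}_q$ given in \cite[Section 12.4]{HKT}, whose second coordinate is a function generating, together with $x$, the full degree-$q^2$ elementary abelian extension $\fq(x,y,z)$ of $\fq(x)$, and only then performing the Kummer substitution $x^q-x=t^m$ followed by the irreducibility argument at infinity. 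To repair your proof you must first identify that function (some combination involving $z$) as the variable appearing in the asserted equation; the verification and the intersection-multiplicity argument then proceed as you describe.
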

\begin{proof}
Starting with the plane model of $\mathcal{R}_q$ given in \cite[Section 12.4]{HKT}, the claim follows by arguing as in the proof of Theorem \ref{pianosq}.
\end{proof}



\section{AG codes from $\tilde{\mathcal S}_q$ and $\tilde{\mathcal R}_q$ with many automorphisms} \label{manyaut}

\subsection{AG codes from $\tcSq$}

In this section we use the notation of Section \ref{secStilda}.
Let
\begin{center}
$\mathcal{G} :=\mathcal{X}(\mathbb{F}_q),\quad \mathcal{D}:=\mathcal{X}(\mathbb{F}_{q^4})\setminus\mathcal{G}.$
\end{center}
The set $\mathcal{G}$ is the intersection between $\tcSq$ and the plane $t=0$. Also, $\mathcal{G}$ and $\mathcal{D}$ are respectively the non-tame orbit $O_1$ and the tame orbit $O_2$ of ${\rm Aut}(\tilde{\mathcal S}_q)$.
Fix $r\in\mathbb{N}$ and define the $\mathbb{F}_{q^4}$-rational divisors
\begin{center}
$G:=\sum_{P\in \mathcal{G}} rP$, \quad  $D:= \sum_{P\in \mathcal{D}} P$,
\end{center}
of degree $r(q^2+1)$ and $q^5-q^4+q^3-q^2$, respectively. Let $C$ be the $[n,k,d]_{q^4}$-AG code $C(D,G)$. Then $C$ has designed minimum distance 
\begin{center}
$d^*=n-\deg(G)=q^5-q^4+q^3-q^2-r(q^2+1).$
\end{center}

Next result follows as a corollary of the Riemann-Roch Theorem, see \cite{Sti}.
\begin{proposition}
If $q-2<r<q^3-q^2$, then
$k=\deg(G)+1-g=r(q^2+1)-\frac{q^3-2q^2+q-2}{2}$.
\end{proposition}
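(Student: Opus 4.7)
The plan is to verify that the two standard conditions under which the AG code dimension equals $\deg(G)+1-g$ both hold under the hypothesis $q-2<r<q^3-q^2$, and then simply plug in.

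First I would compute the relevant numerical invariants. The genus is $g=\frac{q^3-2q^2+q}{2}$ (recalled in Section \ref{secStilda}), so $2g-2=q^3-2q^2+q-2$. The length of $D$ is $n=\deg(D)=q^5-q^4+q^3-q^2=q^2(q-1)(q^2+1)$, and $\deg(G)=r(q^2+1)$.

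Next I would check the upper condition $r<q^3-q^2$. Dividing $n$ by $q^2+1$ gives $n/(q^2+1)=q^2(q-1)=q^3-q^2$, so $r<q^3-q^2$ is equivalent to $\deg(G)<n$. By the recap in Section~\ref{Sec:Preliminaries_Curves}, when $n>\deg(G)$ the evaluation map $\eta:\mathcal L(G)\to\mathbb F_{q^4}^{n}$ is injective, hence $k=\ell(G)$.

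Then I would check the lower condition $r>q-2$. We have $\deg(G)-(2g-2)=r(q^2+1)-(q^3-2q^2+q-2)=(r-(q-2))(q^2+1)$, which is strictly positive exactly when $r>q-2$. Thus $\deg(G)>2g-2$, and by Riemann–Roch $\ell(G)=\deg(G)-g+1$.

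Combining the two steps yields
\[
k=\ell(G)=\deg(G)+1-g=r(q^2+1)+1-\frac{q^3-2q^2+q}{2}=r(q^2+1)-\frac{q^3-2q^2+q-2}{2},
\]
which is the claimed formula. There is no real obstacle here; the whole content is checking that the two interval endpoints $q-2$ and $q^3-q^2$ are precisely the thresholds $\deg(G)=2g-2$ and $\deg(G)=n$, a short arithmetic verification that exhibits the hypothesis as the sharpest possible under which the standard Riemann–Roch formula applies.
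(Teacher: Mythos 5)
Your proposal is correct and matches the paper's argument, which simply invokes the Riemann--Roch theorem (via the standard facts recalled in Section~\ref{Sec:Preliminaries_Curves}); you have merely spelled out the arithmetic showing that $r>q-2$ and $r<q^3-q^2$ are exactly the conditions $\deg(G)>2g-2$ and $\deg(G)<n$. No issues.
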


\begin{proposition} \label{moneq}
$C$ is monomially equivalent to the one-point code $C(D,r(q^2+1)P_{\infty})$.
\end{proposition}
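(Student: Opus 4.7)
The plan is to exhibit an explicit function $u$ on $\tcSq$ such that multiplication by $u$ intertwines the two evaluation maps defining $C(D,G)$ and $C(D,r(q^2+1)P_\infty)$, producing the sought monomial equivalence. The natural candidate is $u=t^r$, where $t$ is the Kummer coordinate from the model \eqref{equazsq}.

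First I would use the divisor of $t$ recorded in \eqref{divt}, namely
$$ (t)=\sum_{P\in\mathcal G,\,P\ne P_\infty}P \;-\; q^2 P_\infty, $$
which is possible because the $q^2$ affine zeros of $t$ are exactly the affine $\mathbb F_q$-rational places of $\tcSq$ (obtained by solving $x^q+x=0$ and then $y^q+y=0$), i.e.\ the places of $\mathcal G\setminus\{P_\infty\}$. A direct computation then gives
$$ (t^r)=r\!\!\sum_{P\in\mathcal G,P\ne P_\infty}\!\!P \;-\; rq^2 P_\infty \;=\; G - r(q^2+1)P_\infty, $$
so the map $f\mapsto t^r f$ is an $\mathbb F_{q^4}$-linear isomorphism $\mathcal L(r(q^2+1)P_\infty)\to\mathcal L(G)$.

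Next I would check that $t$ has neither zero nor pole at any place of the support of $D$. The only zeros of $t$ are at the affine places of $\mathcal G$, and its only pole is $P_\infty\in\mathcal G$; since $D$ is supported on $\mathcal D=\tcSq(\mathbb F_{q^4})\setminus\mathcal G$, we get $t(P)\in\mathbb F_{q^4}^{*}$ for every $P$ in the support of $D$. Consequently, for $f\in\mathcal L(r(q^2+1)P_\infty)$,
$$ \eta_G(t^r f)=\bigl(t(P)^r f(P)\bigr)_{P\in\operatorname{supp}(D)}=\Delta\cdot\eta_{r(q^2+1)P_\infty}(f), $$
where $\Delta=\operatorname{diag}\bigl(t(P)^r\bigr)_{P\in\operatorname{supp}(D)}\in\mathcal M_{n,q^4}$ is a diagonal matrix with nonzero entries.

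Finally, since $f\mapsto t^rf$ is a bijection between the Riemann--Roch spaces and $\Delta$ is an invertible monomial matrix, the induced map on codewords sends $C(D,r(q^2+1)P_\infty)$ bijectively onto $C(D,G)$; that is, $C(D,G)=C(D,r(q^2+1)P_\infty)\cdot\Delta$. This is precisely a monomial equivalence, proving the proposition. The only delicate point is the divisor bookkeeping at $P_\infty$ (making sure the coefficient of $P_\infty$ on both sides matches after including the contribution $rP_\infty$ from $G$), which the equality $(t^r)=G-r(q^2+1)P_\infty$ handles cleanly.
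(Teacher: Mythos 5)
Your proof is correct and follows essentially the same route as the paper's: both observe that $G = r(q^2+1)P_\infty + (t^r)$, so that multiplication by $t^r$ identifies the two Riemann--Roch spaces, and both realize the equivalence via the diagonal matrix with entries $t(P_i)^r$ at the points of $\operatorname{supp}(D)$. Your write-up is slightly more careful in checking that $t$ is a unit at every point of $\operatorname{supp}(D)$, but the argument is the same.
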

\begin{proof}
Let $G'=r(q^2+1)P_{\infty}$; then $G=G'+(t^r)$. Thus, the Riemann-Roch space of $G$ is $\mathcal L(G)=\{ f\cdot t^r \mid f\in L(G') \}$. The codeword of $C(D,G')$ associated to $f\cdot t^r$  is 
$$ ((ft^r)(P_1),\ldots,(ft^r)(P_n))=(f(P_1),\ldots,f(P_n))\cdot M,$$
 where $M$  is a diagonal matrix in which the entries are $t(P_1)^r, \ldots, t(P_n)^r \in \mathbb{F}_{q^4}$. This implies that the matrix $M$ defines a monomial isomorphism between $C$ and $C(D,G')$. 
\end{proof}


We determine a group of automorphisms of $C$ inherited from the automorphisms of $\tilde{\mathcal{S}}_q$.

\begin{lemma}\label{N}
Any non-trivial element of ${\rm Aut}(\tilde{\mathcal{S}}_q)$ fixes at most $q^2+1$ $\mathbb{F}_{q^4}$-rational places of $\tilde{\mathcal{S}}_q$.
\end{lemma}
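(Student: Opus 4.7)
The strategy is to exploit the product decomposition $\mathrm{Aut}(\tilde{\mathcal S}_q) = \tilde S(q) \times C_m$ from Section \ref{secStilda}, together with the explicit ramification data of Lemma \ref{ContributionsSuzuki}. Every non-trivial automorphism of $\tilde{\mathcal S}_q$ has the form $\alpha = \sigma \tau^k$ with $\sigma \in \tilde S(q)$, $0 \leq k < m$ and $(\sigma, k) \neq (\mathrm{id}, 0)$. Since each fixed place of $\alpha$ contributes a positive integer $i_P(\alpha) = v_P(\alpha(\pi_P) - \pi_P) \geq 1$ to the Hurwitz index $i(\alpha) = \sum_{P : \alpha(P) = P} i_P(\alpha)$, the total number of places fixed by $\alpha$ over $\overline{\mathbb F}_q$ — and hence a fortiori the number of $\mathbb{F}_{q^4}$-rational fixed places — is at most $i(\alpha)$. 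It therefore suffices to prove $i(\sigma \tau^k) \leq q^2 + 1$ in every case of Lemma \ref{ContributionsSuzuki}.

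A direct inspection of Lemma \ref{ContributionsSuzuki} shows that $i(\sigma \tau^k)$ belongs to the set $\{0, 1, 2, m+1, m(2q_0+1)+1, 4m, q^2+1\}$, and that the value $q^2 + 1$ is attained precisely when $\sigma = \mathrm{id}$ and $k \neq 0$. In every other case one only has to verify the inequality $\leq q^2 + 1$: substituting $q = 2q_0^2$ and $m = q - 2q_0 + 1$ yields
$$ m(2q_0 + 1) + 1 = 4q_0^3 - 2q_0^2 + 2, \qquad 4m = 8q_0^2 - 8q_0 + 4, $$
both of which are strictly less than $q^2 = 4q_0^4$ whenever $q_0 = 2^s \geq 2$, i.e.\ under the standing assumption $s \geq 1$; the remaining bound $m + 1 \leq 4m < q^2 + 1$ is then immediate. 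The only conceptual ingredient is the standard bound on the number of geometric fixed places by $i(\alpha)$ coming from ramification theory; the rest is a routine bookkeeping check, with the degenerate case $q_0 = 1$ being exactly the reason the hypothesis $s \geq 1$ is indispensable. I expect no real obstacle beyond this finite case-analysis.
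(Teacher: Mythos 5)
Your proof is correct and follows essentially the same route as the paper: both reduce the claim to the list of values $i(\sigma\tau^k)$ in Lemma \ref{ContributionsSuzuki} and check that each is at most $q^2+1$. The only (harmless) difference is that you uniformly invoke the inequality $\#\mathrm{Fix}(\alpha)\leq i(\alpha)$, whereas the paper treats $2$-elements separately via $\mathbb{F}_{q^4}$-maximality (exactly one fixed place) and uses equality $\#\mathrm{Fix}(g)=i(g)$ for the remaining elements.
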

\begin{proof}
Let $g\in{\rm Aut}(\tcSq)\setminus\{id\}$.
If $g$ is a $2$-element, then $g$ fixes exactly one place of $\tcSq$ because $\tcSq$ is $\mathbb{F}_{q^4}$-maximal; see \cite[Lemma 11.129]{HKT}.
If $g$ is not a $2$-element, then the number of fixed places of $g$ equals $i(g)$, which is determined in Lemma \ref{ContributionsSuzuki}.
Then the claim follows from Lemma \ref{ContributionsSuzuki}.
\end{proof}

Since $n=\deg(D)>q^2+1$, Proposition \ref{AutSubgroup} and Lemma \ref{N} yield the following result.

\begin{corollary}
The automorphism group of $C$ admits a subgroup isomorphic to $$({\rm Aut}(\tilde{\mathcal{S}}_q)\rtimes {\rm Aut}(\mathbb{F}_{q^4}))\rtimes \mathbb{F}_{q^4}^*.$$
\end{corollary}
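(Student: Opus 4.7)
The goal is to combine Lemma \ref{N} with Proposition \ref{AutSubgroup}, so the plan reduces to verifying the two hypotheses of the latter in our setting, namely the fixed-point bound and that the ``geometric'' automorphism group appearing in the statement of Proposition \ref{AutSubgroup} is in fact the whole of ${\rm Aut}(\tcSq)$.

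First I would check the fixed-point hypothesis. One has $N=\deg(D)=q^5-q^4+q^3-q^2$, which is strictly larger than $q^2+1$ for every admissible $q$; since Lemma \ref{N} bounds by $q^2+1$ the number of $\mathbb F_{q^4}$-rational fixed places of any non-trivial element of ${\rm Aut}(\tcSq)$, this gives at most $N-1$ fixed points in ${\rm supp}(D)$, as required.

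Next I would identify the group ${\rm Aut}_{\mathbb F_{q^4},D,G}(\tcSq)$ via Remark \ref{Coincidono}. By construction, ${\rm supp}(G)=\mathcal{G}$ with every point having the same weight $r$, while ${\rm supp}(D)=\mathcal{D}$; since $\mathcal{G}\cup\mathcal{D}=\tcSq(\mathbb F_{q^4})$, Remark \ref{Coincidono} applies and yields
\[
{\rm Aut}_{\mathbb F_{q^4},D,G}(\tcSq)=\{\sigma\in{\rm Aut}(\tcSq)\mid \sigma({\rm supp}(G))={\rm supp}(G)\}.
\]
Now the key point is that $\mathcal{G}=O_1$ and $\mathcal{D}=O_2$ are the two short orbits of ${\rm Aut}(\tcSq)$ (as recalled at the beginning of this subsection and in Section \ref{secStilda}). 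Hence every element of ${\rm Aut}(\tcSq)$ preserves $\mathcal{G}$ and $\mathcal{D}$ setwise, so the right-hand side above is all of ${\rm Aut}(\tcSq)$.

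Finally I would plug these two facts into Proposition \ref{AutSubgroup}, with the field $\fq$ of that proposition replaced by $\mathbb F_{q^4}$, to deduce that ${\rm Aut}(C)$ contains a subgroup isomorphic to $({\rm Aut}(\tcSq)\rtimes{\rm Aut}(\mathbb F_{q^4}))\rtimes \mathbb F_{q^4}^*$. There is no real obstacle here: the whole argument is essentially a bookkeeping check that the orbit structure of ${\rm Aut}(\tcSq)$ matches the divisors $D$ and $G$ we chose, together with the numerical inequality $N>q^2+1$; the nontrivial content is already packaged in Lemma \ref{N} and in the orbit description inherited from \cite{GMQZ}.
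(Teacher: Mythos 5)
Your proposal is correct and follows essentially the same route as the paper, which simply cites the inequality $n=\deg(D)>q^2+1$ together with Lemma \ref{N} and Proposition \ref{AutSubgroup}; you merely make explicit the (implicit) step that ${\rm Aut}_{\mathbb F_{q^4},D,G}(\tcSq)={\rm Aut}(\tcSq)$ via Remark \ref{Coincidono} and the fact that $\mathcal G=O_1$ and $\mathcal D=O_2$ are the two short orbits. Nothing further is needed.
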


\subsection{AG codes from $\tilde{\mathcal{R}}_q$}

In this section we use the notation of Section \ref{secRtilda}.
Let
\begin{center}
$\mathcal{G} :=\mathcal{X}(\mathbb{F}_q), \quad \mathcal{D}:=\mathcal{X}(\mathbb{F}_{q^6})\setminus\mathcal{G}.$
\end{center}
The set $\mathcal{G}$ is the intersection between $\tilde{\mathcal{R}}_q$ and the plane $t=0$.
Fix $r \in \mathbb{N}$ and consider the $\mathbb{F}_{q^6}$-rational divisors
\begin{center}
$G:=\sum_{P\in \mathcal{G}} rP$, \quad $D:= \sum_{P\in \mathcal{D}} P$,
\end{center}
 of degree $r(q^3+1)$ and $q^7-q^6+q^4-q^3$, respectively. Let $C$ be the $[n,k,d]_{q^6}$-AG code $C(D,G)$. The designed minimum distance of $C$ is
\begin{center}
$d^*=n-\deg(G)=q^7-q^6+q^4-q^3-r(q^3+1)$.
\end{center}

From the Riemann-Roch Theorem we get the following result.

\begin{proposition}
If $q-2<r< q^4-q^3$, then the dimension of $C$ is \begin{center}
$k=r(q^3+1)-\frac{1}{2}(q^4-2q^3+q-2)$
\end{center}
\end{proposition}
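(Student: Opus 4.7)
The plan is to mimic the argument used for the Suzuki analogue (the unnamed proposition just before Proposition \ref{moneq} in the $\tcSq$ subsection) and invoke the Riemann--Roch theorem together with the bound guaranteeing injectivity of the evaluation map. All the ingredients are already in place: we know $\deg(G)=r(q^3+1)$, $n=\deg(D)=q^7-q^6+q^4-q^3$, and the genus $g=\tfrac{1}{2}(q^4-2q^3+q)$ of $\tilde{\mathcal R}_q$ was recalled in Section \ref{secRtilda}.

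First I would translate the hypotheses on $r$ into the two standard inequalities that drive the dimension formula. From $g=\tfrac{1}{2}(q^4-2q^3+q)$ we get $2g-2=q^4-2q^3+q-2$. The hypothesis $r>q-2$ gives $r\geq q-1$, hence
\[
\deg(G)=r(q^3+1)\geq (q-1)(q^3+1)=q^4-q^3+q-1>q^4-2q^3+q-2=2g-2,
\]
so $\deg(G)\geq 2g-1$. On the other side, $r<q^4-q^3$ yields
\[
\deg(G)=r(q^3+1)<(q^4-q^3)(q^3+1)=q^7-q^6+q^4-q^3=n.
\]

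Next I would combine these two estimates with the general facts recalled in Section \ref{Sec:Preliminaries_Curves}. Since $\deg(G)<n$, the evaluation map $\eta:\mathcal L(G)\to\mathbb F_{q^6}^{n}$, $f\mapsto(f(P_1),\ldots,f(P_n))$, is injective, so $k=\ell(G)$. Since $\deg(G)>2g-2$, the Riemann--Roch theorem gives
\[
\ell(G)=\deg(G)-g+1=r(q^3+1)-\tfrac{1}{2}(q^4-2q^3+q)+1=r(q^3+1)-\tfrac{1}{2}(q^4-2q^3+q-2),
\]
which is precisely the claimed value of $k$.

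There is no real obstacle here; the argument is a routine application of Riemann--Roch. The only thing to be careful about is checking both of the numerical inequalities, in particular factoring $(q^4-q^3)(q^3+1)$ to obtain exactly $n$ so that the upper bound on $r$ is sharp for injectivity of $\eta$. Everything else is a one-line computation, and the proof can (and probably should) simply be stated as a direct consequence of Riemann--Roch, in full analogy with the Suzuki case treated earlier.
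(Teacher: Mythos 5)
Your proof is correct and follows exactly the route the paper intends: the paper states this proposition as an immediate consequence of the Riemann--Roch theorem without writing out the details, and your verification of the two inequalities $\deg(G)>2g-2$ and $\deg(G)<n$ together with the standard dimension formula is precisely the omitted routine argument. No issues.
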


\begin{proposition}
$C$ is monomially equivalent to the one-point code $C(D,r(q^3+1)P_{\infty})$.
\end{proposition}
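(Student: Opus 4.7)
The plan is to mirror the proof of Proposition~\ref{moneq} with $q^3$ in place of $q^2$, using the Ree-curve divisor data listed in Section~\ref{secRtilda} in place of the Suzuki divisor data. Setting $G' := r(q^3+1)P_\infty$, I would first compute $G - G'$ from the identity
$$(t) = \sum_{i,j,l=1}^q P_{(\beta_i,\beta_j,\beta_l,0)} - q^3 P_\infty$$
recalled at the beginning of Section~\ref{secRtilda}. Since $\mathcal{G} = \tilde{\mathcal{R}}_q(\mathbb{F}_q)$ consists precisely of $P_\infty$ together with the $q^3$ zeros of $t$, a short bookkeeping gives $(t^r) = G - G'$, so $G$ and $G'$ are linearly equivalent through the function $t^r$ and the map $f\mapsto f\cdot t^r$ is an $\mathbb{F}_{q^6}$-linear isomorphism $\mathcal{L}(G)\to\mathcal{L}(G')$.

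I would then read off the induced isomorphism at the level of codewords: for $f\in\mathcal{L}(G)$ the codeword of $C(D,G')$ associated with $f\cdot t^r$ is
$$(f(P_1)t(P_1)^r,\ldots,f(P_n)t(P_n)^r) = (f(P_1),\ldots,f(P_n))\cdot M,$$
where $M = \mathrm{diag}(t(P_1)^r,\ldots,t(P_n)^r)$. Thus $M$ intertwines $C(D,G)$ with $C(D,G')$.

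The only point to verify is that $M$ belongs to the monomial group $\mathcal{M}_{n,q^6}$, i.e.\ that each $t(P_i)$ is a nonzero element of $\mathbb{F}_{q^6}$. But $\mathrm{supp}(t)\subseteq\mathcal{G}$, while $\mathrm{supp}(D)=\mathcal{D} = \tilde{\mathcal{R}}_q(\mathbb{F}_{q^6})\setminus\mathcal{G}$, so no $P_i$ in the support of $D$ is either a zero or a pole of $t$; hence every $t(P_i)$ lies in $\mathbb{F}_{q^6}^\ast$ and $M$ is an invertible diagonal matrix, giving the required monomial equivalence. No serious obstacle is anticipated: the argument is a cosmetic adaptation of Proposition~\ref{moneq}, and the only small check to bear in mind is the disjointness between $\mathrm{supp}(D)$ and the zero locus of $t$, which is automatic from the definition of $\mathcal{D}$.
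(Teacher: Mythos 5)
Your argument is correct and is exactly the route the paper takes: the paper's proof simply notes that $G=r(q^3+1)P_{\infty}+(t^r)$ and refers back to Proposition \ref{moneq}, whose mechanism (multiplication by $t^r$ inducing a diagonal, invertible matrix $M$ on codewords) is what you have spelled out. Your additional check that $t(P_i)\in\mathbb{F}_{q^6}^*$ for all $P_i$ in the support of $D$ is the right detail to make $M$ monomial, and it holds for the reason you give.
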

\begin{proof}
Since $G=r(q^3+1)P_{\infty}+(t^r)$, the claim follows as in the proof of Proposition \ref{moneq}.
\end{proof}


\begin{lemma}\label{NR}
Any non-trivial element of ${\rm Aut}(\tilde{\mathcal R}_q)$ fixes at most $q^3+1$ $\mathbb{F}_{q^6}$-rational places of $\tilde{\mathcal R}_q$.
\end{lemma}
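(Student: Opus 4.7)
The plan is to mirror the approach of Lemma \ref{N}. Let $g\in{\rm Aut}(\tilde{\mathcal R}_q)\setminus\{id\}$; recalling that ${\rm Aut}(\tilde{\mathcal R}_q)=\tilde R(q)\times C_m$, I would split the argument into two cases, according to whether $g$ is a $3$-element (i.e.\ $o(g)$ is a power of the characteristic $3$) or not.

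If $g$ is a $3$-element, observe first that because $\gcd(|C_m|,3)=\gcd(q-3q_0+1,3)=1$, any such $g$ must lie entirely in the direct factor $\tilde R(q)$. Since $\tilde{\mathcal R}_q$ is $\mathbb{F}_{q^6}$-maximal, \cite[Lemma 11.129]{HKT} then forces $g$ to have a unique fixed place on $\tilde{\mathcal R}_q$, hence at most $1\leq q^3+1$ fixed $\mathbb{F}_{q^6}$-rational places.

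If $g$ is not a $3$-element, write $g=\sigma\tau^k$ with $\sigma\in\tilde R(q)$ and $\tau^k\in C_m$. The number of fixed places of $g$ on $\tilde{\mathcal R}_q$ is bounded above by $i(g)$, whose value is furnished by Lemma \ref{ContributionsRee}. Running through all the cases listed there, one sees that $i(g)$ belongs to $\{0,\,1,\,2,\,m+1,\,q+1,\,q^2-q+2-mq,\,q^2-q+2,\,6m,\,q^3+1\}$, and I would check that every entry is $\leq q^3+1$.

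The main obstacle, such as it is, is purely numerical. The bound is attained at $g=\tau^k$, for which $i(\tau^k)=q^3+1$ simply reflects that $C_m$ fixes the $q^3+1$ places in $\tilde{\mathcal R}_q(\mathbb{F}_q)$ pointwise. The only other entry worth writing out explicitly is $i(\sigma\tau^j)=6m$ arising when $o(\sigma)\mid (q-3q_0+1)$; since $m=q-3q_0+1<q$ and $q=3q_0^2\geq 27$, one has $6m<6q\leq q^3+1$. All remaining entries are $\leq q+1$, which is evidently smaller than $q^3+1$. Once these elementary inequalities are in place the lemma follows just as in Lemma \ref{N}.
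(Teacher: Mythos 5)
Your proposal is correct and is essentially the paper's own argument: the paper's proof simply says to argue as in Lemma \ref{N}, i.e. dispose of the $p$-elements ($p=3$ here) via $\mathbb{F}_{q^6}$-maximality and \cite[Lemma 11.129]{HKT}, and bound the fixed-point count of the remaining elements by the values of $i(\cdot)$ listed in Lemma \ref{ContributionsRee}, all of which are at most $q^3+1$. (One small slip in your write-up: the entries $q^2-q+2$ and $q^2-q+2-mq$ are not $\leq q+1$, but they are still $\leq q^3+1$ --- and they arise only for the order-$3$ elements you already handled in the first case --- so the conclusion is unaffected.)
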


\begin{proof}
Arguing as in the proof of Lemma \ref{N}, the claim follows from Lemma \ref{ContributionsRee}.
\end{proof}

Since $n=\deg(D)>q^3+1$, Proposition \ref{AutSubgroup} and Lemma \ref{NR} yield the following result.

\begin{corollary}
The automorphism group of $C$ admits a subgroup isomorphic to
$$({\rm Aut}(\tilde{\mathcal{R}}_q)\rtimes {\rm Aut}(\mathbb{F}_{q^6}))\rtimes \mathbb{F}_{q^6}^*.$$
\end{corollary}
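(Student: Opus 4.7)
The plan is to apply Proposition \ref{AutSubgroup} to the curve $\tilde{\mathcal R}_q$, viewed over the alphabet $\mathbb{F}_{q^6}$, with the divisors $D$ and $G$ defined above. Two conditions need to be checked: (i) the fixed-point bound for non-trivial automorphisms of $\tilde{\mathcal R}_q$, and (ii) the inclusion ${\rm Aut}(\tilde{\mathcal R}_q)\subseteq{\rm Aut}_{\mathbb{F}_{q^6},D,G}(\tilde{\mathcal R}_q)$.

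First, I would invoke Lemma \ref{NR}: any non-trivial element of ${\rm Aut}(\tilde{\mathcal R}_q)$ fixes at most $q^3+1$ many $\mathbb{F}_{q^6}$-rational places. Since $n=\deg(D)=q^7-q^6+q^4-q^3$ is vastly larger than $q^3+1$, the hypothesis $n-1\geq$ (number of fixed $\mathbb{F}_{q^6}$-rational points) required by Proposition \ref{AutSubgroup} is satisfied.

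Next, I would verify that every $\sigma\in{\rm Aut}(\tilde{\mathcal R}_q)$ stabilises both $D$ and $G$. By the description recalled in Section \ref{secRtilda}, $\mathcal{G}=\tilde{\mathcal R}_q(\mathbb{F}_q)$ coincides with the short orbit $O_1$ of the full automorphism group, so $\sigma$ permutes $\mathcal{G}$. Since every place in ${\rm supp}(G)$ has the same coefficient $r$, this already gives $\sigma(G)=G$, so in particular $\sigma(G)\approx_D G$. Moreover, as ${\rm Aut}(\tilde{\mathcal R}_q)$ is defined over $\mathbb{F}_q\subseteq\mathbb{F}_{q^6}$, $\sigma$ preserves the set $\tilde{\mathcal R}_q(\mathbb{F}_{q^6})$; combined with $\sigma(\mathcal{G})=\mathcal{G}$, this yields $\sigma(\mathcal{D})=\mathcal{D}$ and hence $\sigma(D)=D$. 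Therefore ${\rm Aut}(\tilde{\mathcal R}_q)\subseteq{\rm Aut}_{\mathbb{F}_{q^6},D,G}(\tilde{\mathcal R}_q)$.

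Combining these two facts, Proposition \ref{AutSubgroup} produces a subgroup of ${\rm Aut}(C)$ isomorphic to $({\rm Aut}(\tilde{\mathcal R}_q)\rtimes{\rm Aut}(\mathbb{F}_{q^6}))\rtimes\mathbb{F}_{q^6}^*$, which is exactly the claim. No genuine obstacle is expected: the argument is a routine application of the cited proposition, parallel to the corollary just proved for $\tcSq$, with Lemma \ref{NR} playing the role that Lemma \ref{N} played before and the numerical inequality $q^3+1<n-1$ being the only point worth recording.
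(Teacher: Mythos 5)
Your proposal is correct and follows essentially the same route as the paper, which likewise deduces the corollary from Lemma \ref{NR} and Proposition \ref{AutSubgroup} after noting that $n=\deg(D)>q^3+1$. The only difference is that you spell out the verification that every automorphism stabilises $D$ and $G$ (i.e.\ that ${\rm Aut}_{\mathbb{F}_{q^6},D,G}(\tilde{\mathcal R}_q)$ is all of ${\rm Aut}(\tilde{\mathcal R}_q)$), a point the paper leaves implicit via Remark \ref{Coincidono} and the orbit description of $\mathcal{G}$ and $\mathcal{D}$.
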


\section{Dual codes of one-point codes from $\tilde{\mathcal{S}}_q$} \label{codiciduali}

In this section we construct dual codes $C_{\ell}(P_\infty)$ of one-point AG codes on the curve $\tilde{\mathcal{S}}_q$, and we compute explicitely the Feng-Rao minimum distance of $C_{\ell}(P_\infty)$.
Denote $$H(P_{\infty}) =\lbrace 0=\rho_1<\rho_2<...\rbrace .$$
For every $\ell\geq 1$, the Feng-Rao function is defined as $$ \nu_{\ell} := \vert \lbrace (i,j)\in \mathbb{N}_0^2 : \rho_i+\rho_j=\rho_{\ell +1} \rbrace \vert,$$
Let $C_{\ell}(P_{\infty})$ be the dual code $$C_{\ell}(P_{\infty}) =C^{\perp}(D,\rho_{\ell}P_{\infty}),$$
where $D=\sum_{P\in\tcSq(\mathbb{F}_{q^4})\setminus\{P_{\infty}\}} P$ 
is a divisor supported at all $\mathbb{F}_{q^4}$-rational places of $\tcSq$ but $P_\infty$.
Then the \textit{Feng-Rao minimum distance}
 $$d_{ORD}(C_{\ell}(P_{\infty})):=\min\lbrace \nu_m : m\geq \ell \rbrace $$ 
is a lower bound for the minimum distance of $C_{\ell}(P_{\infty})$.
The code $C_{\ell}(P_{\infty}))$ has parameters $[n,k,d]_{q^4}$, where $n=q^5-q^4+q^3$, $k=n-\ell$ and $d\geq d_{ORD}(C_{\ell}(P_{\infty}))$.

\begin{proposition} \label{dord1}
For every $\ell \geq 3g-1$, $d_{ORD}(C_{\ell}(P_{\infty}))=\ell+1-g$.
\end{proposition}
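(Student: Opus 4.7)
The plan is to invoke directly the general criterion recalled in the Preliminaries, namely \cite[Theorem 5.24]{HLP}: one has $d_{ORD}(C_\ell(P)) \geq \ell+1-g$, with equality as soon as $\ell \geq 2c-g-1$, where $c = \max\{m\in \mathbb{Z} : m-1 \notin H(P)\}$. The whole task is therefore reduced to computing the invariant $c$ attached to the semigroup $H(P_\infty)$ of $\tilde{\mathcal{S}}_q$.

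For this, I would exploit Theorem \ref{Ssimmetrico}, which states that $H(P_\infty)$ is symmetric, i.e.\ $2g-1 \notin H(P_\infty)$. Being the Weierstrass semigroup of a curve of genus $g$, $H(P_\infty)$ has exactly $g$ gaps, all contained in $\{1,\ldots,2g-1\}$. The fact that $2g-1$ is itself a gap then forces $2g-1$ to be the largest gap (the Frobenius number of the semigroup): every integer $m \geq 2g$ already belongs to $H(P_\infty)$. By the very definition of $c$ this gives $c = 2g$.

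Plugging $c = 2g$ into the threshold $\ell \geq 2c-g-1$ reproduces exactly the hypothesis $\ell \geq 3g-1$ of the proposition, and the conclusion $d_{ORD}(C_\ell(P_\infty)) = \ell+1-g$ follows at once from \cite[Theorem 5.24]{HLP}.

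There is essentially no real obstacle once Theorem \ref{Ssimmetrico} is available; the proof is a one-line application of the criterion. The only minor subtlety worth spelling out is the identification $c = 2g$ for a symmetric Weierstrass semigroup of genus $g$, which amounts to the standard observation that in this case the Frobenius number equals $2g-1$.
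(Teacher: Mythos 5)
Your proof is correct and follows essentially the same route as the paper: both reduce the statement to \cite[Theorem 5.24]{HLP} and use the symmetry of $H(P_\infty)$ from Theorem \ref{Ssimmetrico} to evaluate the conductor $c$. In fact your computation is the more careful one: from $2g-1\notin H(P_\infty)$ together with the fact that all gaps lie in $\{1,\dots,2g-1\}$ one gets $c=2g$ and hence the threshold $2c-g-1=3g-1$ exactly as stated, whereas the paper's proof writes $c=2g-1$, a minor slip which would instead yield the threshold $3g-3$.
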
 
\begin{proof}
In general, $d_{ORD}(C_{\ell}(P_{\infty}))\geq \ell+1-g$ and equality holds for every $\ell \geq 2c-g-1$, where $c=\max \lbrace m\in \mathbb{Z} : m-1\notin H(P_{\infty}) \rbrace $ is the conductor of $H(P_\infty)$. From Theorem \ref{Ssimmetrico}, $c=2g-1$ and the claim follows.
\end{proof}

\begin{proposition}
For every $\rho_{\ell +1}=2g-1+e$, with $e\in H(P_{\infty})\setminus \lbrace 0 \rbrace$, $$d_{ORD}(C_{\ell}(P_{\infty}))=\nu_\ell.$$
\end{proposition}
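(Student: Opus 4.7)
The plan is essentially a one-line invocation: the statement is exactly the hypothesis/conclusion format of Proposition \ref{campillo} (Campillo--Farrán), applied to the curve $\tcSq$ at the place $P_\infty$. So the entire argument reduces to verifying that the Campillo--Farrán hypotheses are met in our situation.

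First I would recall the setting: $C_\ell(P_\infty) = C^{\perp}(D, \rho_\ell P_\infty)$ is a one-point dual code on $\tcSq$, and the Feng--Rao bound $d_{ORD}(C_\ell(P_\infty)) = \min\{\nu_m : m \geq \ell\}$ is the relevant quantity. Proposition \ref{campillo} tells us that whenever the Weierstrass semigroup at the distinguished point is symmetric and the index $\ell$ lies in a certain range described by $\rho_{\ell+1} = 2g-1+e$ with $e \in H(P_\infty)\setminus\{0\}$, then the minimum over $m \geq \ell$ is already attained at $m=\ell$, i.e.\ $d_{ORD}(C_\ell(P_\infty)) = \nu_\ell$.

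The only nontrivial hypothesis to check is the symmetry of $H(P_\infty)$, and this is precisely the content of Theorem \ref{Ssimmetrico}, which was established earlier via the Kummer--Artin--Schreier tower structure of $\tcSq$ together with \cite[Lemma 62]{KK}. Therefore the proof is immediate: apply Proposition \ref{campillo} with $\cX = \tcSq$, $P = P_\infty$, $g = (q^3-2q^2+q)/2$, using Theorem \ref{Ssimmetrico} to supply the symmetry hypothesis, and conclude $d_{ORD}(C_\ell(P_\infty)) = \nu_\ell$ for all $\ell$ satisfying $\rho_{\ell+1} = 2g-1+e$ with $e \in H(P_\infty)\setminus\{0\}$.

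There is no real obstacle here; the hard work was front-loaded in Theorem \ref{Ssimmetrico} (establishing symmetry of $H(P_\infty)$). I would write the proof in one or two sentences.
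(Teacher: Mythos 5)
Your proposal is correct and matches the paper's proof exactly: the paper also cites Theorem \ref{Ssimmetrico} for the symmetry of $H(P_\infty)$ and then invokes the Campillo--Farr\'an result (Proposition \ref{campillo}, i.e.\ \cite[Theorem 4.6]{CF}) to conclude. Nothing further is needed.
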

\begin{proof}
As $H(P_\infty)$ is symmetric by Theorem \ref{Ssimmetrico}, the claim follows from \cite[Theorem 4.6]{CF}.
\end{proof}

The following tables show explicitly the parameters of the codes $C_{\ell}(P_\infty)$ in the case $q=8$. In particular, we present their length $n=29184$, their dimension $k$, their Feng-Rao minimum distance $d_{ORD}$, an upper bound $n+1-k-d_{ORD}$ for their Singleton defect $\delta=n+1-k-d$, and an upper bound $\frac{n+1-k-d_{ORD}}{n}$ for their relative Singleton defect $\Delta=\frac{\delta}{n}$.

\begin{center}
\begin{scriptsize}
\begin{tabular}{|c|c|c|c|c|c||c|c|c|c|c|c|}
\hline
$n$&$k$&$\rho_{\ell}$&$d_{ORD}$&$\delta \leq $&$\Delta \leq$ & $n$&$k$&$\rho_{\ell}$&$d_{ORD}$&$\delta \leq $&$\Delta \leq$\\ \hline
 $29184$ & $29182$ & $40$ & $2$ & $1$ & $0,0000342$  & $29184$ & $29181$ & $50$ & $2$ & $2$ & $0,0000685$ \\ \hline $29184$ & $29180$ & $60$ & $2$ & $3$ & $0.0001028$ & $29184$ & $29179$ & $64$ & $2$ & $4$ & $0.0001370$\\ \hline $29184$ & $29178$ & $65$ & $3$ & $4$ & $0.0001370$ & $29184$ & $29177$ & $80$ & $3$ & $5$ & $0.0001713$ \\ \hline $29184$ & $29176$ & $90$ & $3$ & $6$ & $0.0002055$ & $29184$ & $29175$ & $100$ & $3$ & $7$ & $0.0002398$ \\ \hline  $29184$ & $29174$ & $104$ & $3$ & $8$ & $0.0002741$ & $29184$ & $29173$ & $105$ & $3$ & $9$ & $0.0003083$ \\ \hline $29184$ & $29172$ & $110$ & $3$ & $10$ & $0.0003426$ & $29184$ & $29171$ & $114$ & $3$ & $11$ & $0.0003769$ \\ \hline $29184$ & $29170$ & $115$ & $3$ & $12$ & $0.0004111$ & $29184$ & $29169$ & $120$ & $3$ & $13$ & $0.0004454$ \\ \hline $29184$ & $29168$ & $124$ & $3$ & $14$ & $0.00047971$ & $29184$ & $29167$ & $125$ & $3$ & $15$ & $0.0005139$ \\ \hline $29184$ & $29166$ & $128$ & $4$ & $15$ & $0.0005139$ & $29184$ & $29165$ & $129$ & $4$ & $16$ & $0.0005482$\\ \hline $29184$ & $29164$ & $130$ & $4$ & $17$ & $0.00058251$ & $29184$ & $29163$ & $140$ & $4$ & $18$ & $0.00061677$ \\ \hline $29184$ & $29162$ & $144$ & $4$ & $19$ & $0.00065104$ & $29184$ & $29161$ & $145$ & $4$ & $20$ & $0.00068530$ \\ \hline $29184$ & $29160$ & $150$ & $4$ & $21$ & $0.00071957$ & $29184$ & $29159$ & $154$ & $4$ & $22$ & $0.00075383$ \\ \hline  $29184$ & $29158$ & $155$ & $4$ & $23$ & $0.00078810$ & $29184$ & $29157$ & $160$ & $4$ & $24$ & $0.00082237$ \\ \hline $29184$ & $29156$ & $164$ & $4$ & $25$ & $0.00085663$ & $29184$ & $29155$ & $165$ & $4$ & $26$ & $0.00089090$ \\ \hline $29184$ & $29154$ & $168$ & $4$ & $27$ & $0.00092516$ & $29184$ & $29153$ & $169$ & $4$ & $28$ & $0.00095943$ \\ \hline $29184$ & $29152$ & $170$ & $4$ & $29$ & $0.00099370$ & $29184$ & $29151$ & $174$ & $4$ & $30$ & $0.0010280$ \\ \hline $29184$ & $29150$ & $175$ & $4$ & $31$ & $0.0010622$ & $29184$ & $29149$ & $178$ & $4$ & $32$ & $0.0010965$ \\
\hline $29184$ & $29148$ & $179$ & $4$ & $33$ & $0.0011308$ & $29184$ & $29147$ & $180$ & $4$ & $34$ & $0.0011650$ \\ \hline  $29184$ & $29146$ & $184$ & $4$ & $35$ & $0.0011993$ & $29184$ & $29145$ & $185$ & $4$ & $36$ & $0.0012335$ \\ \hline $29184$ & $29144$ & $188$ & $4$ & $37$ & $0.0012678$  & $29184$ & $29143$ & $189$ & $4$ & $38$ & $0.0013021$ \\ \hline $29184$ & $29142$ & $190$ & $4$ & $39$ & $0.0013364$ & $29184$ & $29141$ & $192$ & $5$ & $39$ & $0.0013364$ \\ \hline $29184$ & $29140$ & $193$ & $5$ & $40$ & $0.0013706$ & $29184$ & $29139$ & $194$ & $5$ & $41$ & $0.0014049$ \\ \hline $29184$ & $29138$ & $195$ & $5$ & $42$ & $0.0014391$ & $29184$ & $29137$ & $200$ & $5$ & $43$ & $0.0014734$ \\ \hline $29184$ & $29136$ & $204$ & $5$ & $44$ & $0.0015077$ & $29184$ & $29135$ & $205$ & $5$ & $45$ & $0.0015419$ \\ \hline $29184$ & $29134$ & $208$ & $5$ & $46$ & $0.0015762$ & $29184$ & $29133$ & $209$ & $5$ & $47$ & $0.0016105$ \\ \hline $29184$ & $29132$ & $210$ & $5$ & $48$ & $0.0016447$& $29184$ & $29131$ & $214$ & $5$ & $49$ & $0.0016790$ \\ \hline $29184$ & $29130$ & $215$ & $5$ & $50$ & $0.0017133$ & $29184$ & $29129$ & $218$ & $5$ & $51$ & $0.0017475$ \\ \hline $29184$ & $29128$ & $219$ & $5$ & $52$ & $0.0017818$ & $29184$ & $29127$ & $220$ & $5$ & $53$ & $0.0018161$ \\ \hline $29184$ & $29126$ & $224$ & $5$ & $54$ & $0.0018503$ & $29184$ & $29125$ & $225$ & $5$ & $55$ & $0.0018846$ \\ \hline $29184$ & $29124$ & $228$ & $5$ & $56$ & $0.0019189$ & $29184$ & $29123$ & $229$ & $5$ & $57$ & $0.0019531$ \\ \hline  $29184$ & $29122$ & $230$ & $5$ & $58$ & $0.0019874$ & $29184$ & $29121$ & $232$ & $5$ & $59$ & $0.0020217$ \\ \hline $29184$ & $29120$ & $233$ & $5$ & $60$ & $0.0020559$ &  $29184$ & $29119$ & $234$ & $5$ & $61$ & $0.0020902$ \\ \hline $29184$ & $29118$ & $235$ & $5$ & $62$ & $0.0021245$ & $29184$ & $29117$ & $238$ & $5$ & $63$ & $0.0021587$ \\ \hline $29184$ & $29116$ & $239$ & $5$ & $64$ & $0.0021930$ & $29184$ & $29115$ & $240$ & $5$ & $65$ & $0.0022272$ \\ \hline $29184$ & $29114$ & $242$ & $5$ & $66$ & $0.0022615$ & $29184$ & $29113$ & $243$ & $5$ & $67$ & $0.0022958$ \\ \hline $29184$ & $29112$ & $244$ & $5$ & $68$ & $0.0023300$ & $29184$ & $29111$ & $245$ & $5$ & $69$ & $0.0023643$ \\ \hline $29184$ & $29110$ & $248$ & $5$ & $70$ & $0.0023986$ & $29184$ & $29109$ & $249$ & $5$ & $71$ & $0.0024329$ \\ \hline $29184$ & $29108$ & $250$ & $5$ & $72$ & $0.0024671$ & $29184$ & $29107$ & $252$ & $5$ & $73$ & $0.0025014$ \\ \hline $29184$ & $29106$ & $253$ & $5$ & $74$ & $0.0025356$ & $29184$ & $29105$ & $254$ & $5$ & $75$ & $0.0025699$ \\ \hline $29184$ & $29104$ & $255$ & $5$ & $76$ & $0.0026042$ & $29184$ & $29103$ & $256$ & $8$ & $74$ & $0.0025356$ \\ \hline $29184$ & $29102$ & $257$ & $10$ & $73$ & $0.0025014$ & $29184$ & $29101$ & $258$ & $10$ & $74$ & $0.0025356$ \\ \hline $29184$ & $29100$ & $259$ & $10$ & $75$ & $0.0025699$ & $29184$ & $29099$ & $260$ & $10$ & $76$ & $0.0026042$ \\ \hline $29184$ & $29098$ & $264$ & $10$ & $77$ & $0.0026384$ & $29184$ & $29097$ & $265$ & $10$ & $78$ & $0.0026727$ \\ \hline  $29184$ & $29096$ & $268$ & $10$ & $79$ & $0.0027070$ & $29184$ & $29095$ & $269$ & $10$ & $80$ & $0.0027412$ \\ \hline $29184$ & $29094$ & $270$ & $10$ & $81$ & $0.0027755$ & $29184$ & $29093$ & $272$ & $10$ & $82$ & $0.0028098$ \\ \hline $29184$ & $29092$ & $273$ & $10$ & $83$ & $0.0028440$ & $29184$ & $29091$ & $274$ & $10$ & $84$ & $0.0028783$ \\ \hline $29184$ & $29090$ & $275$ & $10$ & $85$ & $0.0029126$ & $29184$ & $29089$ & $278$ & $10$ & $86$ & $0.0029468$ \\ \hline $29184$ & $29088$ & $279$ & $10$ & $87$ & $0.0029811$ & $29184$ & $29087$ & $280$ & $10$ & $88$ & $0.0030153$ \\ \hline $29184$ & $29086$ & $282$ & $10$ & $89$ & $0.0030496$& $29184$ & $29085$ & $283$ & $10$ & $90$ & $0.0030839$ \\ \hline $29184$ & $29084$ & $284$ & $10$ & $91$ & $0.0031182$ & $29184$ & $29083$ & $285$ & $10$ & $92$ & $0.0031524$ \\ \hline $29184$ & $29082$ & $288$ & $10$ & $93$ & $0.0031867$ & $29184$ & $29081$ & $289$ & $10$ & $94$ & $0.0032209$ \\ \hline $29184$ & $29080$ & $290$ & $10$ & $95$ & $0.0032552$  & $29184$ & $29079$ & $292$ & $10$ & $96$ & $0.0032895$ \\ \hline $29184$ & $29078$ & $293$ & $10$ & $97$ & $0.0033237$ & $29184$ & $29077$ & $294$ & $10$ & $98$ & $0.0033580$ \\ \hline $29184$ & $29076$ & $295$ & $10$ & $99$ & $0.0033923$ & $29184$ & $29075$ & $296$ & $10$ & $100$ & $0.0034265$ \\ \hline $29184$ & $29074$ & $297$ & $10$ & $101$ & $0.0034608$ & $29184$ & $29073$ & $298$ & $10$ & $102$ & $0.0034951$ \\ \hline $29184$ & $29072$ & $299$ & $10$ & $103$ & $0.0035293$ & $29184$ & $29071$ & $300$ & $10$ & $104$ & $0.0035636$ \\ \hline $29184$ & $29070$ & $302$ & $10$ & $105$ & $0.0035979$ & $29184$ & $29069$ & $303$ & $10$ & $106$ & $0.0036321$ \\ \hline $29184$ & $29068$ & $304$ & $10$ & $107$ & $0.0036664$ & $29184$ & $29067$ & $305$ & $10$ & $108$ & $0.0037006$ \\ \hline $29184$ & $29066$ & $306$ & $10$ & $109$ & $0.0037349$ & $29184$ & $29065$ & $307$ & $10$ & $110$ & $0.0037692$ \\ \hline $29184$ & $29064$ & $308$ & $10$ & $111$ & $0.0038035$& $29184$ & $29063$ & $309$ & $10$ & $112$ & $0.0038377$ \\ \hline $29184$ & $29062$ & $310$ & $10$ & $113$ & $0.0038720$ & $29184$ & $29061$ & $312$ & $10$ & $114$ & $0.0039062$ \\ \hline
 \end{tabular}\end{scriptsize}\end{center}
\begin{center}\begin{scriptsize}\begin{tabular}{|c|c|c|c|c|c||c|c|c|c|c|c|} 
\hline $n$&$k$&$\rho_{\ell}$&$d_{ORD}$&$\delta \leq $&$\Delta \leq$ & $n$&$k$&$\rho_{\ell}$&$d_{ORD}$&$\delta \leq $&$\Delta \leq$
\\ \hline  $29184$ & $29060$ & $313$ & $10$ & $115$ & $0.0039405$ & $29184$ & $29059$ & $314$ & $10$ & $116$ & $0.0039748$ \\ \hline $29184$ & $29058$ & $315$ & $10$ & $117$ & $0.0040091$ & $29184$ & $29057$ & $316$ & $10$ & $118$ & $0.0040433$ \\ \hline $29184$ & $29056$ & $317$ & $10$ & $119$ & $0.0040776$ & $29184$ & $29055$ & $318$ & $10$ & $120$ & $0.0041118$ \\ \hline $29184$ & $29054$ & $319$ & $10$ & $121$ & $0.0041461$ & $29184$ & $29053$ & $320$ & $10$ & $122$ & $0.0041804$ \\ \hline $29184$ & $29052$ & $321$ & $15$ & $118$ & $0.0040433$ & $29184$ & $29051$ & $322$ & $15$ & $119$ & $0.0040776$ \\ \hline $29184$ & $29050$ & $323$ & $15$ & $120$ & $0.0041118$ & $29184$ & $29049$ & $324$ & $15$ & $121$ & $0.0041461$ \\ \hline $29184$ & $29048$ & $325$ & $15$ & $122$ & $0.0041804$ & $29184$ & $29047$ & $328$ & $15$ & $123$ & $0.0042146$ \\ \hline $29184$ & $29046$ & $329$ & $15$ & $124$ & $0.0042489$ & $29184$ & $29045$ & $330$ & $15$ & $125$ & $0.0042832$ \\ \hline $29184$ & $29044$ & $332$ & $15$ & $126$ & $0.0043175$ & $29184$ & $29043$ & $333$ & $15$ & $127$ & $0.0043517$ \\ \hline $29184$ & $29042$ & $334$ & $15$ & $128$ & $0.0043859$ & $29184$ & $29041$ & $335$ & $15$ & $129$ & $0.0044202$ \\ \hline $29184$ & $29040$ & $336$ & $20$ & $125$ & $0.0042832$ & $29184$ & $29039$ & $337$ & $20$ & $126$ & $0.0043175$ \\ \hline $29184$ & $29038$ & $338$ & $20$ & $127$ & $0.0043517$& $29184$ & $29037$ & $339$ & $20$ & $128$ & $0.0043859$ \\ \hline $29184$ & $29036$ & $340$ & $20$ & $129$ & $0.0044202$& $29184$ & $29035$ & $342$ & $20$ & $130$ & $0.0044545$ \\ \hline $29184$ & $29034$ & $343$ & $20$ & $131$ & $0.0044888$ & $29184$ & $29033$ & $344$ & $20$ & $132$ & $0.0045230$ \\ \hline $29184$ & $29032$ & $345$ & $20$ & $133$ & $0.0045573$ & $29184$ & $29031$ & $346$ & $20$ & $134$ & $0.0045916$ \\ \hline $29184$ & $29030$ & $347$ & $20$ & $135$ & $0.0046258$ & $29184$ & $29029$ & $348$ & $20$ & $136$ & $0.0046601$ \\ \hline $29184$ & $29028$ & $349$ & $20$ & $137$ & $0.0046943$ & $29184$ & $29027$ & $350$ & $20$ & $138$ & $0.0047286$ \\ \hline $29184$ & $29026$ & $352$ & $20$ & $139$ & $0.0047629$ & $29184$ & $29025$ & $353$ & $20$ & $140$ & $0.0047972$ \\ \hline $29184$ & $29024$ & $354$ & $20$ & $141$ & $0.0048314$ & $29184$ & $29023$ & $355$ & $20$ & $142$ & $0.0048657$ \\ \hline $29184$ & $29022$ & $356$ & $20$ & $143$ & $0.0048999$ & $29184$ & $29021$ & $357$ & $20$ & $144$ & $0.0049342$ \\ \hline $29184$ & $29020$ & $358$ & $20$ & $145$ & $0.0049685$ & $29184$ & $29019$ & $359$ & $20$ & $146$ & $0.0050027$ \\ \hline $29184$ & $29018$ & $360$ & $20$ & $147$ & $0.0050370$ & $29184$ & $29017$ & $361$ & $20$ & $148$ & $0.0050713$ \\ \hline $29184$ & $29016$ & $362$ & $20$ & $149$ & $0.0051056$ & $29184$ & $29015$ & $363$ & $20$ & $150$ & $0.0051398$ \\ \hline $29184$ & $29014$ & $364$ & $20$ & $151$ & $0.0051740$ & $29184$ & $29013$ & $365$ & $20$ & $152$ & $0.0052083$ \\ \hline $29184$ & $29012$ & $366$ & $20$ & $153$ & $0.0052426$ & $29184$ & $29011$ & $367$ & $20$ & $154$ & $0.0052769$ \\ \hline $29184$ & $29010$ & $368$ & $20$ & $155$ & $0.0053111$ & $29184$ & $29009$ & $369$ & $20$ & $156$ & $0.0053454$ \\ \hline $29184$ & $29008$ & $370$ & $20$ & $157$ & $0.0053797$ & $29184$ & $29007$ & $371$ & $20$ & $158$ & $0.0054139$ \\ \hline $29184$ & $29006$ & $372$ & $20$ & $159$ & $0.0054482$ & $29184$ & $29005$ & $373$ & $20$ & $160$ & $0.0054824$ \\ \hline $29184$ & $29004$ & $374$ & $20$ & $161$ & $0.0055167$ & $29184$ & $29003$ & $375$ & $20$ & $162$ & $0.0055510$ \\ \hline $29184$ & $29002$ & $376$ & $20$ & $163$ & $0.0055853$ & $29184$ & $29001$ & $377$ & $20$ & $164$ & $0.0056195$ \\ \hline $29184$ & $29000$ & $378$ & $20$ & $165$ & $0.0056538$ & $29184$ & $28999$ & $379$ & $20$ & $166$ & $0.0056881$ \\ \hline $29184$ & $28998$ & $380$ & $20$ & $167$ & $0.0057223$ & $29184$ & $28997$ & $381$ & $30$ & $158$ & $0.0054139$ \\ \hline $29184$ & $28996$ & $382$ & $30$ & $159$ & $0.0054482$ & $29184$ & $28995$ & $383$ & $30$ & $160$ & $0.0054824$ \\ \hline $29184$ & $28994$ & $384$ & $30$ & $161$ & $0.0055167$ & $29184$ & $28993$ & $385$ & $30$ & $162$ & $0.0055510$ \\ \hline $29184$ & $28992$ & $386$ & $30$ & $163$ & $0.0055853$ & $29184$ & $28991$ & $387$ & $30$ & $164$ & $0.0056195$ \\ \hline $29184$ & $28990$ & $388$ & $30$ & $165$ & $0.0056538$ & $29184$ & $28989$ & $389$ & $30$ & $166$ & $0.0056881$ \\ \hline $29184$ & $28988$ & $390$ & $30$ & $167$ & $0.0057223$ & $29184$ & $28987$ & $392$ & $30$ & $168$ & $0.0057566$ \\ \hline $29184$ & $28986$ & $393$ & $30$ & $169$ & $0.0057908$ & $29184$ & $28985$ & $394$ & $30$ & $170$ & $0.0058251$ \\ \hline $29184$ & $28984$ & $395$ & $30$ & $171$ & $0.0058594$ & $29184$ & $28983$ & $396$ & $30$ & $172$ & $0.0058936$ \\ \hline $29184$ & $28982$ & $397$ & $30$ & $173$ & $0.0059279$ & $29184$ & $28981$ & $398$ & $30$ & $174$ & $0.0059622$ \\ \hline $29184$ & $28980$ & $399$ & $30$ & $175$ & $0.0059965$ & $29184$ & $28979$ & $400$ & $30$ & $176$ & $0.0060307$ \\ \hline $29184$ & $28978$ & $401$ & $40$ & $167$ & $0.0057223$ & $29184$ & $28977$ & $402$ & $40$ & $168$ & $0.0057566$ \\ \hline
\end{tabular}\end{scriptsize}\end{center}

\begin{center}\begin{scriptsize}\begin{tabular}{|c|c|c|c|c|c||c|c|c|c|c|c|} 
\hline $n$&$k$&$\rho_{\ell}$&$d_{ORD}$&$\delta \leq $&$\Delta \leq$ & $n$&$k$&$\rho_{\ell}$&$d_{ORD}$&$\delta \leq $&$\Delta \leq$ \\ \hline 
$29184$ & $28976$ & $403$ & $40$ & $169$ & $0.0057908$ & $29184$ & $28975$ & $404$ & $40$ & $170$ & $0.0058251$ \\ \hline $29184$ & $28974$ & $405$ & $40$ & $171$ & $0.0058594$ & $29184$ & $28973$ & $406$ & $40$ & $172$ & $0.0058936$ \\ \hline $29184$ & $28972$ & $407$ & $40$ & $173$ & $0.0059279$ & $29184$ & $28971$ & $408$ & $40$ & $174$ & $0.0059622$ \\ \hline $29184$ & $28970$ & $409$ & $40$ & $175$ & $0.0059965$ & $29184$ & $28969$ & $410$ & $40$ & $176$ & $0.0060307$ \\ \hline $29184$ & $28968$ & $411$ & $40$ & $177$ & $0.0060650$ & $29184$ & $28967$ & $412$ & $40$ & $178$ & $0.0060992$ \\ \hline $29184$ & $28966$ & $413$ & $40$ & $179$ & $0.0061335$ & $29184$ & $28965$ & $414$ & $40$ & $180$ & $0.0061678$ \\ \hline $29184$ & $28964$ & $415$ & $40$ & $181$ & $0.0062020$ & $29184$ & $28963$ & $416$ & $40$ & $182$ & $0.0062363$ \\ \hline $29184$ & $28962$ & $417$ & $40$ & $183$ & $0.0062706$ & $29184$ & $28961$ & $418$ & $40$ & $184$ & $0.0063048$ \\ \hline $29184$ & $28960$ & $419$ & $40$ & $185$ & $0.0063391$ & $29184$ & $28959$ & $420$ & $40$ & $186$ & $0.0063733$ \\ \hline $29184$ & $28958$ & $421$ & $40$ & $187$ & $0.0064076$ & $29184$ & $28957$ & $422$ & $40$ & $188$ & $0.0064419$ \\ \hline $29184$ & $28956$ & $423$ & $40$ & $189$ & $0.0064762$ & $29184$ & $28955$ & $424$ & $40$ & $190$ & $0.0065104$ \\ \hline $29184$ & $28954$ & $425$ & $40$ & $191$ & $0.0065447$ & $29184$ & $28953$ & $426$ & $40$ & $192$ & $0.0065789$ \\ \hline $29184$ & $28952$ & $427$ & $40$ & $193$ & $0.0066132$ & $29184$ & $28951$ & $428$ & $40$ & $194$ & $0.0066475$ \\ \hline $29184$ & $28950$ & $429$ & $40$ & $195$ & $0.0066817$ & $29184$ & $28949$ & $430$ & $40$ & $196$ & $0.0067160$ \\ \hline $29184$ & $28948$ & $431$ & $50$ & $187$ & $0.0064076$ & $29184$ & $28947$ & $432$ & $50$ & $188$ & $0.0064419$ \\ \hline $29184$ & $28946$ & $433$ & $50$ & $189$ & $0.0064762$ & $29184$ & $28945$ & $434$ & $50$ & $190$ & $0.0065104$ \\ \hline $29184$ & $28944$ & $435$ & $50$ & $191$ & $0.0065447$ & $29184$ & $28943$ & $436$ & $50$ & $192$ & $0.0065789$ \\ \hline $29184$ & $28942$ & $437$ & $50$ & $193$ & $0.0066132$ & $29184$ & $28941$ & $438$ & $50$ & $194$ & $0.0066475$ \\ \hline $29184$ & $28940$ & $439$ & $50$ & $195$ & $0.0066817$ & $29184$ & $28939$ & $440$ & $50$ & $196$ & $0.0067160$ \\ \hline $29184$ & $28938$ & $441$ & $60$ & $187$ & $0.0064076$ & $29184$ & $28937$ & $442$ & $60$ & $188$ & $0.0064419$ \\ \hline $29184$ & $28936$ & $443$ & $60$ & $189$ & $0.0064762$ & $29184$ & $28935$ & $444$ & $60$ & $190$ & $0.0065104$ \\ \hline $29184$ & $28934$ & $445$ & $60$ & $191$ & $0.0065447$ & $29184$ & $28933$ & $446$ & $60$ & $192$ & $0.0065789$ \\ \hline $29184$ & $28932$ & $447$ & $60$ & $193$ & $0.0066132$ & $29184$ & $28931$ & $448$ & $60$ & $194$ & $0.0066475$ \\ \hline $29184$ & $28930$ & $449$ & $60$ & $195$ & $0.0066817$ & $29184$ & $28929$ & $450$ & $60$ & $196$ & $0.0067160$ \\ \hline $29184$ & $28928$ & $451$ & $64$ & $193$ & $0.0066132$ & $29184$ & $28927$ & $452$ & $64$ & $194$ & $0.0066475$ \\ \hline $29184$ & $28926$ & $453$ & $64$ & $195$ & $0.0066817$ & $29184$ & $28925$ & $454$ & $64$ & $196$ & $0.0067160$ \\ \hline $29184$ & $28924$ & $455$ & $65$ & $196$ & $0.0067160$& $29184$ & $28923$ & $456$ & $80$ & $182$ & $0.0062363$ \\ \hline $29184$ & $28922$ & $457$ & $80$ & $183$ & $0.0062706$ & $29184$ & $28921$ & $458$ & $80$ & $184$ & $0.0063048$ \\ \hline $29184$ & $28920$ & $459$ & $80$ & $185$ & $0.0063391$& $29184$ & $28919$ & $460$ & $80$ & $186$ & $0.0063733$ \\ \hline $29184$ & $28918$ & $461$ & $80$ & $187$ & $0.0064076$ & $29184$ & $28917$ & $462$ & $80$ & $188$ & $0.0064419$ \\ \hline $29184$ & $28916$ & $463$ & $80$ & $189$ & $0.0064762$ & $29184$ & $28915$ & $464$ & $80$ & $190$ & $0.0065104$ \\ \hline $29184$ & $28914$ & $465$ & $80$ & $191$ & $0.0065447$ & $29184$ & $28913$ & $466$ & $80$ & $192$ & $0.0065789$ \\ \hline $29184$ & $28912$ & $467$ & $80$ & $193$ & $0.0066132$ & $29184$ & $28911$ & $468$ & $80$ & $194$ & $0.0066475$ \\ \hline $29184$ & $28910$ & $469$ & $80$ & $195$ & $0.0066817$ & $29184$ & $28909$ & $470$ & $80$ & $196$ & $0.0067160$ \\ \hline $29184$ & $28908$ & $471$ & $90$ & $187$ & $0.0064076$ & $29184$ & $28907$ & $472$ & $90$ & $188$ & $0.0064419$ \\ \hline $29184$ & $28906$ & $473$ & $90$ & $189$ & $0.0064762$ & $29184$ & $28905$ & $474$ & $90$ & $190$ & $0.0065104$ \\ \hline $29184$ & $28904$ & $475$ & $90$ & $191$ & $0.0065447$ & $29184$ & $28903$ & $476$ & $90$ & $192$ & $0.0065789$ \\ \hline $29184$ & $28902$ & $477$ & $90$ & $193$ & $0.0066132$ & $29184$ & $28901$ & $478$ & $90$ & $194$ & $0.0066475$ \\ \hline $29184$ & $28900$ & $479$ & $90$ & $195$ & $0.0066817$ & $29184$ & $28899$ & $480$ & $90$ & $196$ & $0.0067160$ \\ \hline $29184$ & $28898$ & $481$ & $100$ & $187$ & $0.0064076$ & $29184$ & $28897$ & $482$ & $100$ & $188$ & $0.0064419$ \\ \hline $29184$ & $28896$ & $483$ & $100$ & $189$ & $0.0064762$ & $29184$ & $28895$ & $484$ & $100$ & $190$ & $0.0065104$ \\ \hline $29184$ & $28894$ & $485$ & $100$ & $191$ & $0.0065447$ & $29184$ & $28893$ & $486$ & $100$ & $192$ & $0.0065789$ \\ \hline
\end{tabular}\end{scriptsize}\end{center}

\begin{center}\begin{scriptsize}\begin{tabular}{|c|c|c|c|c|c||c|c|c|c|c|c|} 
\hline $n$&$k$&$\rho_{\ell}$&$d_{ORD}$&$\delta \leq $&$\Delta \leq$ & $n$&$k$&$\rho_{\ell}$&$d_{ORD}$&$\delta \leq $&$\Delta \leq$ \\ \hline 
 $29184$ & $28892$ & $487$ & $100$ & $193$ & $0.0066132$ & $29184$ & $28891$ & $488$ & $100$ & $194$ & $0.0066475$ \\ \hline $29184$ & $28890$ & $489$ & $100$ & $195$ & $0.0066817$ & $29184$ & $28889$ & $490$ & $100$ & $196$ & $0.0067160$ \\ \hline $29184$ & $28888$ & $491$ & $104$ & $193$ & $0.0066132$& $29184$ & $28887$ & $492$ & $104$ & $194$ & $0.0066475$ \\ \hline $29184$ & $28886$ & $493$ & $104$ & $195$ & $0.0066817$ & $29184$ & $28885$ & $494$ & $104$ & $196$ & $0.0067160$ \\ \hline $29184$ & $28884$ & $495$ & $105$ & $196$ & $0.0067160$ & $29184$ & $28883$ & $496$ & $110$ & $192$ & $0.0065789$ \\ \hline $29184$ & $28882$ & $497$ & $110$ & $193$ & $0.0066132$ & $29184$ & $28881$ & $498$ & $110$ & $194$ & $0.0066475$ \\ \hline $29184$ & $28880$ & $499$ & $110$ & $195$ & $0.0066817$ & $29184$ & $28879$ & $500$ & $110$ & $196$ & $0.0067160$ \\ \hline $29184$ & $28878$ & $501$ & $114$ & $193$ & $0.0066132$& $29184$ & $28877$ & $502$ & $114$ & $194$ & $0.0066475$ \\ \hline $29184$ & $28876$ & $503$ & $114$ & $195$ & $0.0066817$ & $29184$ & $28875$ & $504$ & $114$ & $196$ & $0.0067160$ \\ \hline $29184$ & $28874$ & $505$ & $115$ & $196$ & $0.0067160$ & $29184$ & $28873$ & $506$ & $120$ & $192$ & $0.0065789$ \\ \hline $29184$ & $28872$ & $507$ & $120$ & $193$ & $0.0066132$ & $29184$ & $28871$ & $508$ & $120$ & $194$ & $0.0066475$ \\ \hline $29184$ & $28870$ & $509$ & $120$ & $195$ & $0.0066817$ & $29184$ & $28869$ & $510$ & $120$ & $196$ & $0.0067160$ \\ \hline $29184$ & $28868$ & $511$ & $124$ & $193$ & $0.0066132$ & $29184$ & $28867$ & $512$ & $124$ & $194$ & $0.0066475$ \\ \hline $29184$ & $28866$ & $513$ & $124$ & $195$ & $0.0066817$ & $29184$ & $28865$ & $514$ & $124$ & $196$ & $0.0067160$ \\ \hline $29184$ & $28864$ & $515$ & $125$ & $196$ & $0.0067160$ & $29184$ & $28863$ & $516$ & $128$ & $194$ & $0.0066475$ \\ \hline $29184$ & $28862$ & $517$ & $128$ & $195$ & $0.0066817$ & $29184$ & $28861$ & $518$ & $128$ & $196$ & $0.0067160$ \\ \hline $29184$ & $28860$ & $519$ & $129$ & $196$ & $0.0067160$ & $29184$ & $28859$ & $520$ & $130$ & $196$ & $0.0067160$ \\ \hline $29184$ & $28858$ & $521$ & $140$ & $187$ & $0.0064076$ & $29184$ & $28857$ & $522$ & $140$ & $188$ & $0.0064419$ \\ \hline $29184$ & $28856$ & $523$ & $140$ & $189$ & $0.0064762$ & $29184$ & $28855$ & $524$ & $140$ & $190$ & $0.0065104$ \\ \hline $29184$ & $28854$ & $525$ & $140$ & $191$ & $0.0065447$ & $29184$ & $28853$ & $526$ & $140$ & $192$ & $0.0065789$ \\ \hline $29184$ & $28852$ & $527$ & $140$ & $193$ & $0.0066132$ & $29184$ & $28851$ & $528$ & $140$ & $194$ & $0.0066475$ \\ \hline $29184$ & $28850$ & $529$ & $140$ & $195$ & $0.0066817$ & $29184$ & $28849$ & $530$ & $140$ & $196$ & $0.0067160$ \\ \hline $29184$ & $28848$ & $531$ & $144$ & $193$ & $0.0066132$ & $29184$ & $28847$ & $532$ & $144$ & $194$ & $0.0066475$ \\ \hline $29184$ & $28846$ & $533$ & $144$ & $195$ & $0.0066817$ & $29184$ & $28845$ & $534$ & $144$ & $196$ & $0.0067160$ \\ \hline $29184$ & $28844$ & $535$ & $145$ & $196$ & $0.0067160$ & $29184$ & $28843$ & $536$ & $150$ & $192$ & $0.0065789$ \\ \hline $29184$ & $28842$ & $537$ & $150$ & $193$ & $0.0066132$ & $29184$ & $28841$ & $538$ & $150$ & $194$ & $0.0066475$ \\ \hline $29184$ & $28840$ & $539$ & $150$ & $195$ & $0.0066817$ & $29184$ & $28839$ & $540$ & $150$ & $196$ & $0.0067160$ \\ \hline $29184$ & $28838$ & $541$ & $154$ & $193$ & $0.0066132$ & $29184$ & $28837$ & $542$ & $154$ & $194$ & $0.0066475$ \\ \hline $29184$ & $28836$ & $543$ & $154$ & $195$ & $0.0066817$ & $29184$ & $28835$ & $544$ & $154$ & $196$ & $0.0067160$ \\ \hline $29184$ & $28834$ & $545$ & $155$ & $196$ & $0.0067160$ & $29184$ & $28833$ & $546$ & $160$ & $192$ & $0.0065789$ \\ \hline $29184$ & $28832$ & $547$ & $160$ & $193$ & $0.0066132$ & $29184$ & $28831$ & $548$ & $160$ & $194$ & $0.0066475$ \\ \hline $29184$ & $28830$ & $549$ & $160$ & $195$ & $0.0066817$ & $29184$ & $28829$ & $550$ & $160$ & $196$ & $0.0067160$ \\ \hline $29184$ & $28828$ & $551$ & $164$ & $193$ & $0.0066132$ & $29184$ & $28827$ & $552$ & $164$ & $194$ & $0.0066475$ \\ \hline $29184$ & $28826$ & $553$ & $164$ & $195$ & $0.0066817$ & $29184$ & $28825$ & $554$ & $164$ & $196$ & $0.0067160$ \\ \hline $29184$ & $28824$ & $555$ & $165$ & $196$ & $0.0067160$ &$29184$ & $28823$ & $556$ & $168$ & $194$ & $0.0066475$ \\ \hline $29184$ & $28822$ & $557$ & $168$ & $195$ & $0.0066817$& $29184$ & $28821$ & $558$ & $168$ & $196$ & $0.0067160$ \\ \hline $29184$ & $28820$ & $559$ & $169$ & $196$ & $0.0067160$ & $29184$ & $28819$ & $560$ & $170$ & $196$ & $0.0067160$ \\ \hline $29184$ & $28818$ & $561$ & $174$ & $193$ & $0.0066132$ & $29184$ & $28817$ & $562$ & $174$ & $194$ & $0.0066475$ \\ \hline $29184$ & $28816$ & $563$ & $174$ & $195$ & $0.0066817$ & $29184$ & $28815$ & $564$ & $174$ & $196$ & $0.0067160$ \\ \hline $29184$ & $28814$ & $565$ & $175$ & $196$ & $0.0067160$ & $29184$ & $28813$ & $566$ & $178$ & $194$ & $0.0066475$ \\ \hline $29184$ & $28812$ & $567$ & $178$ & $195$ & $0.0066817$& $29184$ & $28811$ & $568$ & $178$ & $196$ & $0.0067160$ \\ \hline $29184$ & $28810$ & $569$ & $179$ & $196$ & $0.0067160$ & $29184$ & $28809$ & $570$ & $180$ & $196$ & $0.0067160$ \\ \hline
\end{tabular}\end{scriptsize}\end{center}

\begin{center}\begin{scriptsize}\begin{tabular}{|c|c|c|c|c|c||c|c|c|c|c|c|} 
\hline $n$&$k$&$\rho_{\ell}$&$d_{ORD}$&$\delta \leq $&$\Delta \leq$ & $n$&$k$&$\rho_{\ell}$&$d_{ORD}$&$\delta \leq $&$\Delta \leq$ \\ \hline 
$29184$ & $28808$ & $571$ & $184$ & $193$ & $0.0066132$ & $29184$ & $28807$ & $572$ & $184$ & $194$ & $0.0066475$ \\ \hline $29184$ & $28806$ & $573$ & $184$ & $195$ & $0.0066817$ & $29184$ & $28805$ & $574$ & $184$ & $196$ & $0.0067160$ \\ \hline $29184$ & $28804$ & $575$ & $185$ & $196$ & $0.0067160$ & $29184$ & $28803$ & $576$ & $188$ & $194$ & $0.0066475$ \\ \hline $29184$ & $28802$ & $577$ & $188$ & $195$ & $0.0066817$ & $29184$ & $28801$ & $578$ & $188$ & $196$ & $0.0067160$ \\ \hline $29184$ & $28800$ & $579$ & $189$ & $196$ & $0.0067160$ & $29184$ & $28799$ & $580$ & $190$ & $196$ & $0.0067160$ \\ \hline $29184$ & $28798$ & $581$ & $192$ & $195$ & $0.0066817$ & $29184$ & $28797$ & $582$ & $192$ & $196$ & $0.0067160$ \\ \hline $29184$ & $28796$ & $583$ & $193$ & $196$ & $0.0067160$ & $29184$ & $28795$ & $584$ & $194$ & $196$ & $0.0067160$ \\ \hline $29184$ & $28794$ & $585$ & $195$ & $196$ & $0.0067160$ & $29184$ & $28793$ & $586$ & $200$ & $192$ & $0.0065789$ \\ \hline $29184$ & $28792$ & $587$ & $200$ & $193$ & $0.0066132$ & $29184$ & $28791$ & $588$ & $200$ & $194$ & $0.0066475$ \\ \hline $29184$ & $28790$ & $589$ & $200$ & $195$ & $0.0066817$ & $29184$ & $28789$ & $590$ & $200$ & $196$ & $0.0067160$ \\ \hline $29184$ & $28788$ & $591$ & $204$ & $193$ & $0.0066132$ & $29184$ & $28787$ & $592$ & $204$ & $194$ & $0.0066475$ \\ \hline $29184$ & $28786$ & $593$ & $204$ & $195$ & $0.0066817$ & $29184$ & $28785$ & $594$ & $204$ & $196$ & $0.0067160$ \\ \hline $29184$ & $28784$ & $595$ & $205$ & $196$ & $0.0067160$ & $29184$ & $28783$ & $596$ & $208$ & $194$ & $0.0066475$ \\ \hline $29184$ & $28782$ & $597$ & $208$ & $195$ & $0.0066817$ & $29184$ & $28781$ & $598$ & $208$ & $196$ & $0.0067160$ \\ \hline $29184$ & $28780$ & $599$ & $209$ & $196$ & $0.0067160$ & $29184$ & $28779$ & $600$ & $210$ & $196$ & $0.0067160$ \\ \hline $29184$ & $28778$ & $601$ & $214$ & $193$ & $0.0066132$ & $29184$ & $28777$ & $602$ & $214$ & $194$ & $0.0066475$ \\ \hline $29184$ & $28776$ & $603$ & $214$ & $195$ & $0.0066817$& $29184$ & $28775$ & $604$ & $214$ & $196$ & $0.0067160$ \\ \hline $29184$ & $28774$ & $605$ & $215$ & $196$ & $0.0067160$ & $29184$ & $28773$ & $606$ & $218$ & $194$ & $0.0066475$ \\ \hline $29184$ & $28772$ & $607$ & $218$ & $195$ & $0.0066817$ & $29184$ & $28771$ & $608$ & $218$ & $196$ & $0.0067160$ \\ \hline $29184$ & $28770$ & $609$ & $219$ & $196$ & $0.0067160$ & $29184$ & $28769$ & $610$ & $220$ & $196$ & $0.0067160$ \\ \hline $29184$ & $28768$ & $611$ & $224$ & $193$ & $0.0066132$ & $29184$ & $28767$ & $612$ & $224$ & $194$ & $0.0066475$ \\ \hline $29184$ & $28766$ & $613$ & $224$ & $195$ & $0.0066817$ & $29184$ & $28765$ & $614$ & $224$ & $196$ & $0.0067160$ \\ \hline $29184$ & $28764$ & $615$ & $225$ & $196$ & $0.0067160$ & $29184$ & $28763$ & $616$ & $228$ & $194$ & $0.0066475$ \\ \hline $29184$ & $28762$ & $617$ & $228$ & $195$ & $0.0066817$ & $29184$ & $28761$ & $618$ & $228$ & $196$ & $0.0067160$ \\ \hline $29184$ & $28760$ & $619$ & $229$ & $196$ & $0.0067160$ & $29184$ & $28759$ & $620$ & $230$ & $196$ & $0.0067160$ \\ \hline $29184$ & $28758$ & $621$ & $232$ & $195$ & $0.0066817$ & $29184$ & $28757$ & $622$ & $232$ & $196$ & $0.0067160$ \\ \hline $29184$ & $28756$ & $623$ & $233$ & $196$ & $0.0067160$ & $29184$ & $28755$ & $624$ & $234$ & $196$ & $0.0067160$ \\ \hline $29184$ & $28754$ & $625$ & $235$ & $196$ & $0.0067160$ & $29184$ & $28753$ & $626$ & $238$ & $194$ & $0.0066475$ \\ \hline $29184$ & $28752$ & $627$ & $238$ & $195$ & $0.0066817$ & $29184$ & $28751$ & $628$ & $238$ & $196$ & $0.0067160$ \\ \hline $29184$ & $28750$ & $629$ & $239$ & $196$ & $0.0067160$& $29184$ & $28749$ & $630$ & $240$ & $196$ & $0.0067160$ \\ \hline $29184$ & $28748$ & $631$ & $242$ & $195$ & $0.0066817$ & $29184$ & $28747$ & $632$ & $242$ & $196$ & $0.0067160$ \\ \hline $29184$ & $28746$ & $633$ & $243$ & $196$ & $0.0067160$ & $29184$ & $28745$ & $634$ & $244$ & $196$ & $0.0067160$ \\ \hline $29184$ & $28744$ & $635$ & $245$ & $196$ & $0.0067160$ & $29184$ & $28743$ & $636$ & $248$ & $194$ & $0.0066475$ \\ \hline $29184$ & $28742$ & $637$ & $248$ & $195$ & $0.0066817$ & $29184$ & $28741$ & $638$ & $248$ & $196$ & $0.0067160$ \\ \hline $29184$ & $28740$ & $639$ & $249$ & $196$ & $0.0067160$ & $29184$ & $28739$ & $640$ & $250$ & $196$ & $0.0067160$ \\ \hline $29184$ & $28738$ & $641$ & $252$ & $195$ & $0.0066817$ & $29184$ & $28737$ & $642$ & $252$ & $196$ & $0.0067160$ \\ \hline $29184$ & $28736$ & $643$ & $253$ & $196$ & $0.0067160$ & $29184$ & $28735$ & $644$ & $254$ & $196$ & $0.0067160$ \\ \hline $29184$ & $28734$ & $645$ & $255$ & $196$ & $0.0067160$ & $29184$ & $28733$ & $646$ & $256$ & $196$ & $0.0067160$ \\ \hline $29184$ & $28732$ & $647$ & $257$ & $196$ & $0.0067160$ & $29184$ & $28731$ & $648$ & $258$ & $196$ & $0.0067160$ \\ \hline $29184$ & $28730$ & $649$ & $259$ & $196$ & $0.0067160$ & $29184$ & $28729$ & $650$ & $260$ & $196$ & $0.0067160$ \\ \hline $29184$ & $28728$ & $651$ & $264$ & $193$ & $0.0066132$ & $29184$ & $28727$ & $652$ & $264$ & $194$ & $0.0066475$ \\ \hline $29184$ & $28726$ & $653$ & $264$ & $195$ & $0.0066817$ & $29184$ & $28725$ & $654$ & $264$ & $196$ & $0.0067160$ \\ \hline 
\end{tabular}\end{scriptsize}\end{center}

\begin{center}\begin{scriptsize}\begin{tabular}{|c|c|c|c|c|c||c|c|c|c|c|c|} 
\hline $n$&$k$&$\rho_{\ell}$&$d_{ORD}$&$\delta \leq $&$\Delta \leq$ & $n$&$k$&$\rho_{\ell}$&$d_{ORD}$&$\delta \leq $&$\Delta \leq$ \\ \hline 
$29184$ & $28724$ & $655$ & $265$ & $196$ & $0.0067160$ & $29184$ & $28723$ & $656$ & $268$ & $194$ & $0.0066475$ \\ \hline $29184$ & $28722$ & $657$ & $268$ & $195$ & $0.0066817$ & $29184$ & $28721$ & $658$ & $268$ & $196$ & $0.0067160$ \\ \hline $29184$ & $28720$ & $659$ & $269$ & $196$ & $0.0067160$& $29184$ & $28719$ & $660$ & $270$ & $196$ & $0.0067160$ \\ \hline $29184$ & $28718$ & $661$ & $272$ & $195$ & $0.0066817$ & $29184$ & $28717$ & $662$ & $272$ & $196$ & $0.0067160$ \\ \hline $29184$ & $28716$ & $663$ & $273$ & $196$ & $0.0067160$ & $29184$ & $28715$ & $664$ & $274$ & $196$ & $0.0067160$ \\ \hline $29184$ & $28714$ & $665$ & $275$ & $196$ & $0.0067160$ & $29184$ & $28713$ & $666$ & $278$ & $194$ & $0.0066475$ \\ \hline $29184$ & $28712$ & $667$ & $278$ & $195$ & $0.0066817$ & $29184$ & $28711$ & $668$ & $278$ & $196$ & $0.0067160$ \\ \hline $29184$ & $28710$ & $669$ & $279$ & $196$ & $0.0067160$ & $29184$ & $28709$ & $670$ & $280$ & $196$ & $0.0067160$ \\ \hline $29184$ & $28708$ & $671$ & $282$ & $195$ & $0.0066817$ & $29184$ & $28707$ & $672$ & $282$ & $196$ & $0.0067160$ \\ \hline $29184$ & $28706$ & $673$ & $283$ & $196$ & $0.0067160$ & $29184$ & $28705$ & $674$ & $284$ & $196$ & $0.0067160$ \\ \hline $29184$ & $28704$ & $675$ & $285$ & $196$ & $0.0067160$ & $29184$ & $28703$ & $676$ & $288$ & $194$ & $0.0066475$ \\ \hline $29184$ & $28702$ & $677$ & $288$ & $195$ & $0.0066817$ & $29184$ & $28701$ & $678$ & $288$ & $196$ & $0.0067160$ \\ \hline $29184$ & $28700$ & $679$ & $289$ & $196$ & $0.0067160$ & $29184$ & $28699$ & $680$ & $290$ & $196$ & $0.0067160$ \\ \hline $29184$ & $28698$ & $681$ & $292$ & $195$ & $0.0066817$ & $29184$ & $28697$ & $682$ & $292$ & $196$ & $0.0067160$ \\ \hline $29184$ & $28696$ & $683$ & $293$ & $196$ & $0.0067160$& $29184$ & $28695$ & $684$ & $294$ & $196$ & $0.0067160$ \\ \hline $29184$ & $28694$ & $685$ & $295$ & $196$ & $0.0067160$ & $29184$ & $28693$ & $686$ & $296$ & $196$ & $0.0067160$ \\ \hline $29184$ & $28692$ & $687$ & $297$ & $196$ & $0.0067160$& $29184$ & $28691$ & $688$ & $298$ & $196$ & $0.0067160$ \\ \hline $29184$ & $28690$ & $689$ & $299$ & $196$ & $0.0067160$ & $29184$ & $28689$ & $690$ & $300$ & $196$ & $0.0067160$ \\ \hline $29184$ & $28688$ & $691$ & $302$ & $195$ & $0.0066817$ & $29184$ & $28687$ & $692$ & $302$ & $196$ & $0.0067160$ \\ \hline $29184$ & $28686$ & $693$ & $303$ & $196$ & $0.0067160$ & $29184$ & $28685$ & $694$ & $304$ & $196$ & $0.0067160$ \\ \hline $29184$ & $28684$ & $695$ & $305$ & $196$ & $0.0067160$ & $29184$ & $28683$ & $696$ & $306$ & $196$ & $0.0067160$ \\ \hline $29184$ & $28682$ & $697$ & $307$ & $196$ & $0.0067160$& $29184$ & $28681$ & $698$ & $308$ & $196$ & $0.0067160$ \\ \hline $29184$ & $28680$ & $699$ & $309$ & $196$ & $0.0067160$ & $29184$ & $28679$ & $700$ & $310$ & $196$ & $0.0067160$ \\ \hline $29184$ & $28678$ & $701$ & $312$ & $195$ & $0.0066817$ & $29184$ & $28677$ & $702$ & $312$ & $196$ & $0.0067160$ \\ \hline $29184$ & $28676$ & $703$ & $313$ & $196$ & $0.0067160$ & $29184$ & $28675$ & $704$ & $314$ & $196$ & $0.0067160$ \\ \hline $29184$ & $28674$ & $705$ & $315$ & $196$ & $0.0067160$ & $29184$ & $28673$ & $706$ & $316$ & $196$ & $0.0067160$ \\ \hline $29184$ & $28672$ & $707$ & $317$ & $196$ & $0.0067160$ & $29184$ & $28671$ & $708$ & $318$ & $196$ & $0.0067160$ \\ \hline $29184$ & $28670$ & $709$ & $319$ & $196$ & $0.0067160$ & $29184$ & $28669$ & $710$ & $320$ & $196$ & $0.0067160$ \\ \hline $29184$ & $28668$ & $711$ & $321$ & $196$ & $0.0067160$& $29184$ & $28667$ & $712$ & $322$ & $196$ & $0.0067160$ \\ \hline $29184$ & $28666$ & $713$ & $323$ & $196$ & $0.0067160$& $29184$ & $28665$ & $714$ & $324$ & $196$ & $0.0067160$ \\ \hline $29184$ & $28664$ & $715$ & $325$ & $196$ & $0.0067160$& $29184$ & $28663$ & $716$ & $328$ & $194$ & $0.0066475$ \\ \hline $29184$ & $28662$ & $717$ & $328$ & $195$ & $0.0066817$ & $29184$ & $28661$ & $718$ & $328$ & $196$ & $0.0067160$ \\ \hline $29184$ & $28660$ & $719$ & $329$ & $196$ & $0.0067160$ & $29184$ & $28659$ & $720$ & $330$ & $196$ & $0.0067160$ \\ \hline $29184$ & $28658$ & $721$ & $332$ & $195$ & $0.0066817$ & $29184$ & $28657$ & $722$ & $332$ & $196$ & $0.0067160$ \\ \hline $29184$ & $28656$ & $723$ & $333$ & $196$ & $0.0067160$ & $29184$ & $28655$ & $724$ & $334$ & $196$ & $0.0067160$ \\ \hline $29184$ & $28654$ & $725$ & $335$ & $196$ & $0.0067160$ & $29184$ & $28653$ & $726$ & $336$ & $196$ & $0.0067160$ \\ \hline $29184$ & $28652$ & $727$ & $337$ & $196$ & $0.0067160$ & $29184$ & $28651$ & $728$ & $338$ & $196$ & $0.0067160$ \\ \hline $29184$ & $28650$ & $729$ & $339$ & $196$ & $0.0067160$ & $29184$ & $28649$ & $730$ & $340$ & $196$ & $0.0067160$ \\ \hline $29184$ & $28648$ & $731$ & $342$ & $195$ & $0.0066817$ & $29184$ & $28647$ & $732$ & $342$ & $196$ & $0.0067160$ \\ \hline $29184$ & $28646$ & $733$ & $343$ & $196$ & $0.0067160$ & $29184$ & $28645$ & $734$ & $344$ & $196$ & $0.0067160$ \\ \hline $29184$ & $28644$ & $735$ & $345$ & $196$ & $0.0067160$ & $29184$ & $28643$ & $736$ & $346$ & $196$ & $0.0067160$ \\ \hline $29184$ & $28642$ & $737$ & $347$ & $196$ & $0.0067160$ & $29184$ & $28641$ & $738$ & $348$ & $196$ & $0.0067160$ \\ \hline
\end{tabular}\end{scriptsize}\end{center}

\begin{center}\begin{scriptsize}\begin{tabular}{|c|c|c|c|c|c||c|c|c|c|c|c|} 
\hline $n$&$k$&$\rho_{\ell}$&$d_{ORD}$&$\delta \leq $&$\Delta \leq$ & $n$&$k$&$\rho_{\ell}$&$d_{ORD}$&$\delta \leq $&$\Delta \leq$ \\ \hline
 $29184$ & $28640$ & $739$ & $349$ & $196$ & $0.0067160$ & $29184$ & $28639$ & $740$ & $350$ & $196$ & $0.0067160$ \\ \hline $29184$ & $28638$ & $741$ & $352$ & $195$ & $0.0066817$ & $29184$ & $28637$ & $742$ & $352$ & $196$ & $0.0067160$ \\ \hline $29184$ & $28636$ & $743$ & $353$ & $196$ & $0.0067160$ & $29184$ & $28635$ & $744$ & $354$ & $196$ & $0.0067160$ \\ \hline $29184$ & $28634$ & $745$ & $355$ & $196$ & $0.0067160$ & $29184$ & $28633$ & $746$ & $356$ & $196$ & $0.0067160$ \\ \hline $29184$ & $28632$ & $747$ & $357$ & $196$ & $0.0067160$ & $29184$ & $28631$ & $748$ & $358$ & $196$ & $0.0067160$ \\ \hline $29184$ & $28630$ & $749$ & $359$ & $196$ & $0.0067160$ & $29184$ & $28629$ & $750$ & $360$ & $196$ & $0.0067160$ \\ \hline $29184$ & $28628$ & $751$ & $361$ & $196$ & $0.0067160$ & $29184$ & $28627$ & $752$ & $362$ & $196$ & $0.0067160$ \\ \hline $29184$ & $28626$ & $753$ & $363$ & $196$ & $0.0067160$ & $29184$ & $28625$ & $754$ & $364$ & $196$ & $0.0067160$ \\ \hline $29184$ & $28624$ & $755$ & $365$ & $196$ & $0.0067160$ & $29184$ & $28623$ & $756$ & $366$ & $196$ & $0.0067160$ \\ \hline $29184$ & $28622$ & $757$ & $367$ & $196$ & $0.0067160$ & $29184$ & $28621$ & $758$ & $368$ & $196$ & $0.0067160$ \\ \hline $29184$ & $28620$ & $759$ & $369$ & $196$ & $0.0067160$ & $29184$ & $28619$ & $760$ & $370$ & $196$ & $0.0067160$ \\ \hline $29184$ & $28618$ & $761$ & $371$ & $196$ & $0.0067160$ & $29184$ & $28617$ & $762$ & $372$ & $196$ & $0.0067160$ \\ \hline $29184$ & $28616$ & $763$ & $373$ & $196$ & $0.0067160$ & $29184$ & $28615$ & $764$ & $374$ & $196$ & $0.0067160$ \\ \hline $29184$ & $28614$ & $765$ & $375$ & $196$ & $0.0067160$ & $29184$ & $28613$ & $766$ & $376$ & $196$ & $0.0067160$ \\ \hline $29184$ & $28612$ & $767$ & $377$ & $196$ & $0.0067160$ & $29184$ & $28611$ & $768$ & $378$ & $196$ & $0.0067160$ \\ \hline $29184$ & $28610$ & $769$ & $379$ & $196$ & $0.0067160$ & $29184$ & $28609$ & $770$ & $380$ & $196$ & $0.0067160$ \\ \hline $29184$ & $28608$ & $771$ & $381$ & $196$ & $0.0067160$ & $29184$ & $28607$ & $772$ & $382$ & $196$ & $0.0067160$ \\ \hline $29184$ & $28606$ & $773$ & $383$ & $196$ & $0.0067160$ & $29184$ & $28605$ & $774$ & $384$ & $196$ & $0.0067160$ \\ \hline $29184$ & $28604$ & $775$ & $385$ & $196$ & $0.0067160$ & $29184$ & $28603$ & $776$ & $386$ & $196$ & $0.0067160$ \\ \hline $29184$ & $28602$ & $777$ & $387$ & $196$ & $0.0067160$ & $29184$ & $28601$ & $778$ & $388$ & $196$ & $0.0067160$ \\ \hline $29184$ & $28600$ & $779$ & $389$ & $196$ & $0.0067160$ & $29184$ & $28599$ & $780$ & $390$ & $196$ & $0.0067160$ \\ \hline $29184$ & $28598$ & $781$ & $392$ & $195$ & $0.0066817$ & $29184$ & $28597$ & $782$ & $392$ & $196$ & $0.0067160$ \\ \hline $29184$ & $28596$ & $783$ & $393$ & $196$ & $0.0067160$ & $29184$ & $28595$ & $784$ & $394$ & $196$ & $0.0067160$ \\ \hline
\end{tabular}\end{scriptsize}\end{center}

\section{Quantum codes from $\tilde{\mathcal{S}}_q$}\label{SectionCSS}

In this section we construct quantum codes from families of one-point AG codes on $\tcSq$, using the so-called \emph{CSS construction} which allows to construct quantum codes from classical linear codes; see \cite[Lemma 2.5]{LGP}.

Let $q$ be a prime power and $\mathbb{H}=(\mathbb{C}^q)^{\otimes n}=\mathbb{C}^q \otimes \cdots \otimes \mathbb{C}^q$ be a $q^n$-dimensional Hilbert space. Then a $q$-ary quantum code $C$ of length $n$ and dimension $k$ is defined as a $q^k$-dimensional Hilbert subspace of $\mathbb{H}$.
If $C$ has minimum distance $d$, then $C$ can correct up to $\lfloor \frac{d-1}{2}\rfloor$ errors. Such a quantum code is denoted by $[[n,k,d]]_q$.
The quantum Singleton bound $2d+k\leq 2+n$ holds for a $[[n,k,d]]_q$-quantum code.
The quantum Singleton defect and the relative quantum Singleton defect are defined by $\delta_Q:= n-k-2d+2$ and $\Delta_Q:=\frac{\delta_Q}{n}$, respectively. 
If $\delta_Q=0$, then the code is said to be \emph{quantum MDS}.

\begin{lemma}{\rm{\cite[Lemma 6.1]{GMB2}}}{\rm{(CSS construction)}}\label{CSS}
Let $C_1$ and $C_2$ be linear codes with parameters $[n,k_1,d_1]_q$ and $[n,k_2,d_2]_q$, respectively, and assume that $C_1 \subset C_2$. Then there exists a $[[n,k_2-k_1,d]]_q$-quantum code with $$d=\min \{ w(c)\vert c\in (C_2 \setminus C_1)\cup (C_1^{\perp} \setminus C_2^{\perp}) \}.$$
\end{lemma}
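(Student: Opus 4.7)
The plan is to prove Lemma \ref{CSS} by the standard stabilizer/CSS construction. Define the candidate quantum code $Q\subset\mathbb{H}$ as the subspace spanned by the vectors
$$ |\psi_{c+C_1}\rangle:=\frac{1}{\sqrt{|C_1|}}\sum_{x\in C_1}|c+x\rangle,\qquad c\in C_2, $$
where the computational basis $\{|v\rangle\mid v\in\mathbb{F}_q^n\}$ is orthonormal in $\mathbb{H}=(\mathbb{C}^q)^{\otimes n}$. Two such vectors agree precisely when the corresponding cosets in $C_2/C_1$ coincide, and they are mutually orthogonal otherwise; this yields $|C_2|/|C_1|=q^{k_2-k_1}$ orthonormal vectors, so $Q$ has length $n$ and dimension $k_2-k_1$.

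Next I would recast $Q$ in terms of the stabilizer formalism. Let $X(a)$ and $Z(b)$ denote the generalized bit-flip and phase-flip operators associated with $a,b\in\mathbb{F}_q^n$; their symplectic commutation is governed by the standard bilinear form on $\mathbb{F}_q^n\oplus\mathbb{F}_q^n$. Consider the stabilizer group
$$ S:=\langle X(a):a\in C_1\rangle\,\cdot\,\langle Z(b):b\in C_2^\perp\rangle. $$
The hypothesis $C_1\subseteq C_2$ is equivalent to $C_2^\perp\subseteq C_1^\perp$, which in turn forces the $X$- and $Z$-generators to commute, so that $S$ is abelian. A direct check shows that $Q$ is precisely the joint $+1$-eigenspace of $S$, confirming that $Q$ is a genuine stabilizer code of the claimed dimension.

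To compute the minimum distance, I would identify the normalizer $N(S)$ modulo $S$ with the logical operators on $Q$. By a standard symplectic duality computation, $N(S)$ is generated by $X(a)$ with $a\in C_2$ together with $Z(b)$ with $b\in C_1^\perp$, while the trivial logical operators are $X(a)$ with $a\in C_1$ and $Z(b)$ with $b\in C_2^\perp$. Therefore an undetectable error of minimum weight is of one of two types: an $X$-type error $X(a)$ with $a\in C_2\setminus C_1$, or a $Z$-type error $Z(b)$ with $b\in C_1^\perp\setminus C_2^\perp$. The weight of such a Pauli error equals the Hamming weight of $a$ or $b$ respectively, which gives
$$ d=\min\bigl\{w(c)\,:\,c\in(C_2\setminus C_1)\cup(C_1^\perp\setminus C_2^\perp)\bigr\}, $$
as claimed.

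The most delicate point of the argument is the compatibility check $S$ is abelian and $Q$ is its simultaneous $+1$-eigenspace: this is where the inclusion $C_1\subseteq C_2$ is used in an essential way, and where one must be careful with the $q$-ary (as opposed to binary) Pauli formalism. Once this is in place, the dimension count and the minimum-distance formula follow routinely from the stabilizer-code machinery.
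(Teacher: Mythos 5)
The paper does not prove this lemma at all: it is quoted verbatim from \cite[Lemma 6.1]{GMB2} and used as a black box, so there is no internal proof to compare against. Your stabilizer-formalism argument is the standard proof of the CSS construction and is essentially correct: the coset states give the dimension count, $C_2^\perp\subseteq C_1^\perp$ makes the group $S$ abelian, and the identification of $N(S)/S$ with $X$-type logical operators indexed by $C_2/C_1$ and $Z$-type ones indexed by $C_1^\perp/C_2^\perp$ yields the distance formula. Two small points deserve to be made explicit. First, for $q=p^m$ non-prime the commutation relation involves the trace form $\mathrm{tr}_{\mathbb{F}_q/\mathbb{F}_p}(\langle a,b\rangle)$, so the normalizer computation a priori produces trace-duals; since $C_1$ and $C_2$ are $\mathbb{F}_q$-linear, their trace-duals coincide with their ordinary duals, which is what restores the statement as written. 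Second, to conclude that the minimum is attained by a \emph{pure} $X$- or $Z$-type error you should note that a mixed error $X(a)Z(b)$ has weight $\lvert\{i:(a_i,b_i)\neq(0,0)\}\rvert\geq\max\{w(a),w(b)\}$, so it can never undercut the pure representatives; with that observation the equality $d=\min\{w(c)\mid c\in(C_2\setminus C_1)\cup(C_1^\perp\setminus C_2^\perp)\}$ follows.
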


As an application of Lemma \ref{CSS} to AG codes, La Guardia and Pereira proposed in \cite{LGP} the following \emph{general $t$-point construction}.
\begin{lemma}{\rm{\cite[Theorem 3.1]{LGP}}}{\rm{(general $t$-point construction)}}\label{generaltpoint}
Let $\mathcal{X}$ be a non-singular curve on $\mathbb{F}_q$ of genus $g$ and with $n+t$ distinct points $\mathbb{F}_q$-rational for some $n,t>0$. For every $i=1,...,t$, let $a_i,b_i$ be positive integers such that $a_i\leq b_i$ and $$2g-2<\sum_{i=1}^t a_i<\sum_{i=1}^t b_i < n .$$ 
Then there exists a $[[n,k,d]]_q$-quantum code with $k=\sum_{i=1}^t b_i-\sum_{i=1}^t a_i$ and \\$d\geq \min\lbrace n-\sum_{i=1}^t b_i, \sum_{i=1}^t a_i-(2g-2)\rbrace$.
\end{lemma}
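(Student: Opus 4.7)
The plan is to apply the CSS construction of Lemma \ref{CSS} to a nested pair of one-point-style AG codes supported on disjoint $\mathbb F_q$-rational divisors. Label the $n+t$ available $\mathbb F_q$-rational points as $P_1,\ldots,P_n,Q_1,\ldots,Q_t$ and set
\begin{equation*}
D:=P_1+\cdots+P_n,\qquad G_1:=\sum_{i=1}^t a_iQ_i,\qquad G_2:=\sum_{i=1}^t b_iQ_i.
\end{equation*}
Since $a_i\le b_i$ for every $i$, one has $G_1\le G_2$ as divisors, so $\mathcal L(G_1)\subseteq\mathcal L(G_2)$ and consequently $C_1:=C(D,G_1)\subseteq C(D,G_2)=:C_2$. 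This is the nesting required to invoke the CSS construction.

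Next I would read off the parameters of $C_1$ and $C_2$ by means of the Riemann–Roch theorem, exactly as recalled in Section \ref{Sec:Preliminaries_Curves}. Since $\mathrm{supp}(G_j)$ and $\mathrm{supp}(D)$ are disjoint and $2g-2<\deg G_1<\deg G_2<n$, the evaluation map $\eta$ is injective on both $\mathcal L(G_1)$ and $\mathcal L(G_2)$, and the dimensions equal
\begin{equation*}
k_1=\deg G_1-g+1=\sum_{i=1}^t a_i-g+1,\qquad k_2=\deg G_2-g+1=\sum_{i=1}^t b_i-g+1.
\end{equation*}
In particular $k_2-k_1=\sum_i b_i-\sum_i a_i>0$, so the inclusion $C_1\subseteq C_2$ is strict. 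Applying Lemma \ref{CSS} to the pair $(C_1,C_2)$ therefore produces a quantum code of length $n$ and dimension $k=\sum_i b_i-\sum_i a_i$.

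It remains to estimate its minimum distance
\begin{equation*}
d=\min\{w(c)\mid c\in (C_2\setminus C_1)\cup (C_1^\perp\setminus C_2^\perp)\}.
\end{equation*}
Any nonzero word of $C_2$ has weight at least the Goppa designed distance of $C_2$, which is $n-\deg G_2=n-\sum_i b_i$; this yields the bound $w(c)\ge n-\sum_i b_i$ on $C_2\setminus C_1$. Dually, using $C_2^\perp\subseteq C_1^\perp$, any nonzero word of $C_1^\perp$ has weight at least $\deg G_1-2g+2=\sum_i a_i-(2g-2)$, which bounds the weights on $C_1^\perp\setminus C_2^\perp$. Taking the minimum of these two quantities gives the announced lower bound on $d$. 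Both bounds are positive precisely because the hypothesis $2g-2<\sum a_i<\sum b_i<n$ has been imposed.

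The argument is essentially bookkeeping once the CSS machinery is set up; the only step that requires a small check is the strict inclusion $C_1\subsetneq C_2$, but this is automatic from $\sum a_i<\sum b_i$ together with the Riemann–Roch dimension formula, so no genuine obstacle arises.
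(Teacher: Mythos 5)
Your proposal is correct: the paper itself states this lemma without proof, citing \cite[Theorem 3.1]{LGP}, and your argument --- nesting $C(D,\sum_i a_iQ_i)\subseteq C(D,\sum_i b_iQ_i)$, computing dimensions via Riemann--Roch under the hypothesis $2g-2<\deg G_j<n$, and feeding the pair into the CSS construction of Lemma \ref{CSS} with the Goppa bound on $C_2$ and the dual Goppa bound on $C_1^{\perp}$ --- is exactly the standard derivation given in that reference. All steps check out, including the strictness of the inclusion and the positivity of both distance bounds.
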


By applying Lemma \ref{generaltpoint} to one-point AG codes on $\tilde{\mathcal{S}}_q$, the following result is obtained.

\begin{proposition}
Let $a,b\in \mathbb{N}$ such that $$q^3-2q^2+q-2<a<b<q^5-q^4+q^3.$$
Then there exists a quantum code with parameters $[[n,b-a,d]]_{q^4}$, where $$n=q^5-q^4+q^3,$$ $$d\geq \min\lbrace q^5-q^4+q^3-b,\; a-(q^3-2q^2+q-2) \rbrace .$$
\end{proposition}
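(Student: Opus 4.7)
The plan is to obtain this statement as a direct application of the general $t$-point construction (Lemma \ref{generaltpoint}) to the curve $\tilde{\mathcal{S}}_q$, taking $t=1$ and the distinguished point to be $P_\infty$. The strategy is therefore to (i) identify the numerical invariants of $\tilde{\mathcal{S}}_q$ which appear in the hypotheses of Lemma \ref{generaltpoint}, (ii) check that the stated bounds on $a$ and $b$ are exactly the translation of the inequalities $2g-2<\sum a_i<\sum b_i<n$ in the case $t=1$, $a_1=a$, $b_1=b$, and (iii) read off the resulting parameters.

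First, I would recall from Section \ref{secStilda} that the genus of $\tilde{\mathcal{S}}_q$ is $g=(q^3-2q^2+q)/2$, so that $2g-2=q^3-2q^2+q-2$; this is exactly the lower threshold appearing in the statement. Next, I would invoke the fact that $\tilde{\mathcal{S}}_q$ is $\mathbb{F}_{q^4}$-maximal, so by the Hasse--Weil bound
\begin{equation*}
|\tilde{\mathcal{S}}_q(\mathbb{F}_{q^4})|=q^4+1+2gq^2=q^4+1+(q^3-2q^2+q)q^2=q^5-q^4+q^3+1.
\end{equation*}
Hence there are $n+1$ pairwise distinct $\mathbb{F}_{q^4}$-rational places with $n=q^5-q^4+q^3$, matching the length in the statement.

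Having fixed $P_\infty$ as the ``$t=1$'' point and ordered the remaining $n$ $\mathbb{F}_{q^4}$-rational places as $P_1,\dots,P_n$, I would apply Lemma \ref{generaltpoint} with $t=1$, $a_1=a$, $b_1=b$. The hypothesis $2g-2<a_1<b_1<n$ becomes exactly $q^3-2q^2+q-2<a<b<q^5-q^4+q^3$, which is our assumption. The lemma then yields a quantum code of length $n$, dimension $k=b-a$, and minimum distance
\begin{equation*}
d\ \geq\ \min\{n-b,\ a-(2g-2)\}\ =\ \min\{q^5-q^4+q^3-b,\ a-(q^3-2q^2+q-2)\},
\end{equation*}
i.e.\ exactly the bound claimed.

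There is really no serious obstacle here: the argument is a bookkeeping translation of Lemma \ref{generaltpoint} once the genus and the $\mathbb{F}_{q^4}$-rational point count are in hand. The only small subtlety worth stating explicitly in the write-up is that $\tilde{\mathcal{S}}_q$ has enough $\mathbb{F}_{q^4}$-rational places (namely $n+1$) to make the construction non-trivial for the stated range of $a,b$; this is ensured precisely by the $\mathbb{F}_{q^4}$-maximality of Skabelund's curve.
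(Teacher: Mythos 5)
Your proposal is correct and follows exactly the paper's (implicit) proof: the paper derives this proposition by the one-sentence remark that it is Lemma \ref{generaltpoint} applied with $t=1$ to one-point codes at $P_\infty$ on $\tilde{\mathcal{S}}_q$, using $g=(q^3-2q^2+q)/2$ and the $\mathbb{F}_{q^4}$-maximality to count $n+1=q^5-q^4+q^3+1$ rational places. Your write-up simply makes that bookkeeping explicit, so there is nothing to add.
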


We can also apply the CSS construction to dual codes of one-point AG codes on $\tilde{S}_q$.
We keep the notation of Section \ref{codiciduali}.
Let $a=\rho_\ell \in H(P_{\infty})$ and $b=\rho_{\ell+s}\in H(P_{\infty})$ with $s\geq 1$. Denote by $C_1$ and $C_2$ the codes $C_{\ell+s}:=C_{\ell+s}(P_{\infty})$ and $C_\ell:=C_\ell(P_{\infty})$, respectively. Then $C_1\subset C_2$. Also,
the dimensions $k_1$ and $k_2$ respectively of $C_1$ and $C_2$ satisfy $k_2=n-h_{\ell}$ and $k_1=n-h_{\ell+s}=n-h_\ell-s$, where $h_i$ is the number of non-gaps at $P_{\infty}$ which are smaller than or equal to $i$.
The CSS construction now provides a $[[n,s,d]]_{q^4}$-quantum code with $n=q^5-q^4+q^3$ and $d =\min \lbrace w(c) \vert c \in (C_\ell \setminus C_{\ell+s})\cup (C(D,G_1)\setminus C(D,G_2))) \rbrace $, where $G_1=\rho_{\ell+s}P_{\infty}$ and $G_2=\rho_\ell P_{\infty}$.
Hence,
\begin{equation}
\label{dordmin}
d\geq \min\lbrace d_{ORD}(C_\ell),d_1 \rbrace,
\end{equation}
where $d_1$ is the minimum distance of $C(D,G_1)$.
Choose $\ell \in [3g-1,n-g]$ and $s\in [1,n-2\ell]$. As $\ell\geq3g-1$, $\rho_{\ell+s}=g-1+\ell+s$; thus $d_1\geq n-\deg(G_1)=n-\rho_{\ell+s}\geq n-\ell -s-g+1$. Also, from Theorem \ref{Ssimmetrico} and Proposition \ref{dord1}, $d_{ORD}(C_\ell)=\ell+1-g$. Since $s\leq n-2\ell$, Equation \eqref{dordmin} reads $d\geq d_{ORD}(C_\ell)=\ell+1-g$.
Hence, the following result has been proved.

\begin{theorem}\label{quant1}
Let $n=q^5-q^4+q^3$. Then for every $\ell \in [3g-1,n-g]$ and $s\in [1,n-2\ell]$ there exists a $[[n,s,d]]_{q^4}$-quantum code with $d\geq \ell+1-g$.
\end{theorem}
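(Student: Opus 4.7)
The strategy is to apply the CSS construction of Lemma \ref{CSS} to a nested pair of dual one-point codes at $P_\infty$ on $\tcSq$. Concretely, for the given $\ell$ and $s\geq 1$ I would set $C_1 := C_{\ell+s}(P_\infty)$ and $C_2 := C_\ell(P_\infty)$. The inclusion $\mathcal{L}(\rho_\ell P_\infty)\subseteq \mathcal{L}(\rho_{\ell+s} P_\infty)$ gives $C(D,\rho_\ell P_\infty)\subseteq C(D,\rho_{\ell+s} P_\infty)$ and hence, by dualizing, $C_1\subseteq C_2$, which is the nested hypothesis that CSS requires.

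For the dimension, writing $h_i$ for the number of non-gaps of $H(P_\infty)$ not exceeding $i$, one has $k_2 = n - h_\ell$ and $k_1 = n - h_{\ell+s}$. Since $\ell \geq 3g-1$ lies past the conductor $c = 2g-1$ of $H(P_\infty)$ (finite because $H(P_\infty)$ is symmetric by Theorem \ref{Ssimmetrico}), every integer $\geq 2g - 1$ is a non-gap, so $h_{\ell+s} = h_\ell + s$ and the CSS code has dimension $k_2 - k_1 = s$. For the distance I would use the standard bound $d\geq \min\{d(C_2),\, d(C_1^\perp)\}$ coming from inspecting the weights of elements of $(C_2\setminus C_1)\cup(C_1^\perp\setminus C_2^\perp)$. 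The first term is at least the Feng-Rao distance of $C_\ell(P_\infty)$, which by Proposition \ref{dord1} equals $\ell+1-g$. The second term is lower-bounded via the Goppa designed distance: since $\rho_{\ell+s} = \ell+s+g-1$ past the conductor, $d(C_1^\perp) = d(C(D,\rho_{\ell+s} P_\infty)) \geq n - \rho_{\ell+s} = n - \ell - s - g + 1$.

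Finally, the hypothesis $s\leq n-2\ell$ rearranges to $n - \ell - s - g + 1 \geq \ell + 1 - g$, so the minimum of the two distance bounds is $\ell + 1 - g$ and $d\geq \ell + 1 - g$, as claimed. The only conceptually delicate point is that the symmetry of $H(P_\infty)$ is doing two jobs at once: it identifies the conductor with $2g-1$, thereby underwriting both the dimension count $h_{\ell+s}-h_\ell = s$ and the sharpness of the Feng-Rao bound $d_{ORD}(C_\ell(P_\infty)) = \ell+1-g$ via Proposition \ref{dord1}. All the remaining work is bookkeeping to check that $[3g-1,\,n-g]\times[1,\,n-2\ell]$ is precisely the range of parameters on which these three ingredients apply simultaneously, so I do not expect any genuine obstacle beyond this verification.
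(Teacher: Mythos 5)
Your proposal is correct and follows essentially the same route as the paper: the CSS construction applied to the nested pair $C_{\ell+s}(P_\infty)\subseteq C_\ell(P_\infty)$, with the dimension $s$ obtained from the count of non-gaps beyond the conductor, the first distance term controlled by the Feng--Rao bound $d_{ORD}(C_\ell(P_\infty))=\ell+1-g$ of Proposition \ref{dord1}, the second by the Goppa bound $n-\rho_{\ell+s}$, and the hypothesis $s\leq n-2\ell$ used exactly as in the paper to show the minimum is $\ell+1-g$. No gaps.
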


Choose $s=n-2\ell$. Then Theorem \ref{quant1} provides a $[[n,s,d]]_{q^4}$-quantum code whose relative quantum Singleton defect $\Delta_Q$ is upper bounded as follows:
$$\Delta_Q=\frac{n-s-2d+2}{n}=\frac{2\ell-2d+2}{n}\leq \frac{2g}{n}=\frac{q^3-2q^2+q}{q^5-q^4+q^3}.$$

The following table show the parameters of such $[[n,s,d]]_{q^4}$-quantum codes obtained through the CSS construction in the case $q=8$.
Here, $s=n-2\ell$ with $\ell$ ranging in $2g,\ldots,3g-2$, that is when Theorem \ref{quant1} does not apply.
The minimum distance $d_1$ is lower bounded by the Goppa minimum distance $d_1^*=n-\deg(G_1)=n-\rho_{\ell+s}=\ell+1-g$; hence, Equation \eqref{dordmin} reads $d\geq\ell+1-g$ which is the bound used in the table. Also, $\Delta_Q$ is upper bounded by $\frac{n-s-2d_1^*+2}{n}$.

\begin{center}
\begin{scriptsize}
\begin{tabular}{|c|c|c|c||c|c|c|c||c|c|c|c|}
\hline
$n$&$s$&$d\geq$&$\Delta_Q \leq$&$n$&$s$&$d\geq$&$\Delta_Q \leq$&$n$&$s$&$d\geq$&$\Delta_Q \leq$ \\ \hline 

  $29184$ & $28400$ & $197$ & $0.013432$ & $29184$ & $28398$ & $198$ & $0.013432$ & $29184$ & $28396$ & $199$ & $0.013432$ \\ \hline $29184$ & $28394$ & $200$ & $0.013432$ & $29184$ & $28392$ & $201$ & $0.013432$ & $29184$ & $28390$ & $202$ & $0.013432$ \\ \hline $29184$ & $28388$ & $203$ & $0.013432$ & $29184$ & $28386$ & $204$ & $0.013432$ & $29184$ & $28384$ & $205$ & $0.013432$ \\ \hline $29184$ & $28382$ & $206$ & $0.013432$ & $29184$ & $28380$ & $207$ & $0.013432$& $29184$ & $28378$ & $208$ & $0.013432$ \\ \hline $29184$ & $28376$ & $209$ & $0.013432$ & $29184$ & $28374$ & $210$ & $0.013432$& $29184$ & $28372$ & $211$ & $0.013432$ \\ \hline $29184$ & $28370$ & $212$ & $0.013432$ & $29184$ & $28368$ & $213$ & $0.013432$ & $29184$ & $28366$ & $214$ & $0.013432$ \\ \hline $29184$ & $28364$ & $215$ & $0.013432$ & $29184$ & $28362$ & $216$ & $0.013432$ & $29184$ & $28360$ & $217$ & $0.013432$ \\ \hline $29184$ & $28358$ & $218$ & $0.013432$ & $29184$ & $28356$ & $219$ & $0.013432$ & $29184$ & $28354$ & $220$ & $0.013432$ \\ \hline $29184$ & $28352$ & $221$ & $0.013432$ & $29184$ & $28350$ & $222$ & $0.013432$ & $29184$ & $28348$ & $223$ & $0.013432$ \\ \hline $29184$ & $28346$ & $224$ & $0.013432$ & $29184$ & $28344$ & $225$ & $0.013432$ & $29184$ & $28342$ & $226$ & $0.013432$ \\ \hline $29184$ & $28340$ & $227$ & $0.013432$ & $29184$ & $28338$ & $228$ & $0.013432$ & $29184$ & $28336$ & $229$ & $0.013432$ \\ \hline $29184$ & $28334$ & $230$ & $0.013432$ & $29184$ & $28332$ & $231$ & $0.013432$ & $29184$ & $28330$ & $232$ & $0.013432$ \\ \hline $29184$ & $28328$ & $233$ & $0.013432$ & $29184$ & $28326$ & $234$ & $0.013432$ & $29184$ & $28324$ & $235$ & $0.013432$ \\ \hline $29184$ & $28322$ & $236$ & $0.013432$ & $29184$ & $28320$ & $237$ & $0.013432$ & $29184$ & $28318$ & $238$ & $0.013432$ \\ \hline $29184$ & $28316$ & $239$ & $0.013432$ & $29184$ & $28314$ & $240$ & $0.013432$ & $29184$ & $28312$ & $241$ & $0.013432$ \\ \hline $29184$ & $28310$ & $242$ & $0.013432$ & $29184$ & $28308$ & $243$ & $0.013432$ & $29184$ & $28306$ & $244$ & $0.013432$ \\ \hline $29184$ & $28304$ & $245$ & $0.013432$ & $29184$ & $28302$ & $246$ & $0.013432$ & $29184$ & $28300$ & $247$ & $0.013432$ \\ \hline $29184$ & $28298$ & $248$ & $0.013432$ & $29184$ & $28296$ & $249$ & $0.013432$ & $29184$ & $28294$ & $250$ & $0.013432$ \\ \hline $29184$ & $28292$ & $251$ & $0.013432$ & $29184$ & $28290$ & $252$ & $0.013432$ & $29184$ & $28288$ & $253$ & $0.013432$ \\ \hline $29184$ & $28286$ & $254$ & $0.013432$ & $29184$ & $28284$ & $255$ & $0.013432$ & $29184$ & $28282$ & $256$ & $0.013432$ \\ \hline $29184$ & $28280$ & $257$ & $0.013432$ & $29184$ & $28278$ & $258$ & $0.013432$ & $29184$ & $28276$ & $259$ & $0.013432$ \\ \hline $29184$ & $28274$ & $260$ & $0.013432$ & $29184$ & $28272$ & $261$ & $0.013432$ & $29184$ & $28270$ & $262$ & $0.013432$ \\ \hline $29184$ & $28268$ & $263$ & $0.013432$ & $29184$ & $28266$ & $264$ & $0.013432$ & $29184$ & $28264$ & $265$ & $0.013432$ \\ \hline $29184$ & $28262$ & $266$ & $0.013432$ & $29184$ & $28260$ & $267$ & $0.013432$ & $29184$ & $28258$ & $268$ & $0.013432$ \\ \hline $29184$ & $28256$ & $269$ & $0.013432$ & $29184$ & $28254$ & $270$ & $0.013432$ & $29184$ & $28252$ & $271$ & $0.013432$ \\ \hline $29184$ & $28250$ & $272$ & $0.013432$ & $29184$ & $28248$ & $273$ & $0.013432$ & $29184$ & $28246$ & $274$ & $0.013432$ \\ \hline $29184$ & $28244$ & $275$ & $0.013432$ & $29184$ & $28242$ & $276$ & $0.013432$ & $29184$ & $28240$ & $277$ & $0.013432$ \\ \hline $29184$ & $28238$ & $278$ & $0.013432$ & $29184$ & $28236$ & $279$ & $0.013432$ & $29184$ & $28234$ & $280$ & $0.013432$ \\ \hline $29184$ & $28232$ & $281$ & $0.013432$ & $29184$ & $28230$ & $282$ & $0.013432$ & $29184$ & $28228$ & $283$ & $0.013432$ \\ \hline $29184$ & $28226$ & $284$ & $0.013432$ & $29184$ & $28224$ & $285$ & $0.013432$& $29184$ & $28222$ & $286$ & $0.013432$ \\ \hline \end{tabular}\end{scriptsize}\end{center} 
\begin{center}
\begin{scriptsize}
\begin{tabular}{|c|c|c|c||c|c|c|c||c|c|c|c|}
\hline
$n$&$s$&$d\geq$&$\Delta_Q \leq$&$n$&$s$&$d\geq$&$\Delta_Q \leq$&$n$&$s$&$d\geq$&$\Delta_Q \leq$ \\ \hline    $29184$ & $28220$ & $287$ & $0.013432$ & $29184$ & $28218$ & $288$ & $0.013432$ & $29184$ & $28216$ & $289$ & $0.013432$ \\ \hline $29184$ & $28214$ & $290$ & $0.013432$ & $29184$ & $28212$ & $291$ & $0.013432$ & $29184$ & $28210$ & $292$ & $0.013432$ \\ \hline $29184$ & $28208$ & $293$ & $0.013432$ & $29184$ & $28206$ & $294$ & $0.013432$ & $29184$ & $28204$ & $295$ & $0.013432$ \\ \hline $29184$ & $28202$ & $296$ & $0.013432$& $29184$ & $28200$ & $297$ & $0.013432$ & $29184$ & $28198$ & $298$ & $0.013432$ \\ \hline $29184$ & $28196$ & $299$ & $0.013432$ & $29184$ & $28194$ & $300$ & $0.013432$ & $29184$ & $28192$ & $301$ & $0.013432$ \\ \hline $29184$ & $28190$ & $302$ & $0.013432$ & $29184$ & $28188$ & $303$ & $0.013432$ & $29184$ & $28186$ & $304$ & $0.013432$ \\ \hline $29184$ & $28184$ & $305$ & $0.013432$ & $29184$ & $28182$ & $306$ & $0.013432$ & $29184$ & $28180$ & $307$ & $0.013432$ \\ \hline $29184$ & $28178$ & $308$ & $0.013432$ & $29184$ & $28176$ & $309$ & $0.013432$ & $29184$ & $28174$ & $310$ & $0.013432$ \\ \hline $29184$ & $28172$ & $311$ & $0.013432$ & $29184$ & $28170$ & $312$ & $0.013432$ & $29184$ & $28168$ & $313$ & $0.013432$ \\ \hline $29184$ & $28166$ & $314$ & $0.013432$ & $29184$ & $28164$ & $315$ & $0.013432$ & $29184$ & $28162$ & $316$ & $0.013432$ \\ \hline $29184$ & $28160$ & $317$ & $0.013432$ & $29184$ & $28158$ & $318$ & $0.013432$ & $29184$ & $28156$ & $319$ & $0.013432$ \\ \hline $29184$ & $28154$ & $320$ & $0.013432$ & $29184$ & $28152$ & $321$ & $0.013432$ & $29184$ & $28150$ & $322$ & $0.013432$ \\ \hline $29184$ & $28148$ & $323$ & $0.013432$ & $29184$ & $28146$ & $324$ & $0.013432$ & $29184$ & $28144$ & $325$ & $0.013432$ \\ \hline $29184$ & $28142$ & $326$ & $0.013432$ & $29184$ & $28140$ & $327$ & $0.013432$ & $29184$ & $28138$ & $328$ & $0.013432$ \\ \hline $29184$ & $28136$ & $329$ & $0.013432$ & $29184$ & $28134$ & $330$ & $0.013432$ & $29184$ & $28132$ & $331$ & $0.013432$ \\ \hline $29184$ & $28130$ & $332$ & $0.013432$ & $29184$ & $28128$ & $333$ & $0.013432$ & $29184$ & $28126$ & $334$ & $0.013432$ \\ \hline $29184$ & $28124$ & $335$ & $0.013432$ & $29184$ & $28122$ & $336$ & $0.013432$ & $29184$ & $28120$ & $337$ & $0.013432$ \\ \hline $29184$ & $28118$ & $338$ & $0.013432$ & $29184$ & $28116$ & $339$ & $0.013432$ & $29184$ & $28114$ & $340$ & $0.013432$ \\ \hline $29184$ & $28112$ & $341$ & $0.013432$ & $29184$ & $28110$ & $342$ & $0.013432$ & $29184$ & $28108$ & $343$ & $0.013432$ \\ \hline $29184$ & $28106$ & $344$ & $0.013432$ & $29184$ & $28104$ & $345$ & $0.013432$ & $29184$ & $28102$ & $346$ & $0.013432$ \\ \hline $29184$ & $28100$ & $347$ & $0.013432$ & $29184$ & $28098$ & $348$ & $0.013432$ & $29184$ & $28096$ & $349$ & $0.013432$ \\ \hline $29184$ & $28094$ & $350$ & $0.013432$ & $29184$ & $28092$ & $351$ & $0.013432$ & $29184$ & $28090$ & $352$ & $0.013432$ \\ \hline $29184$ & $28088$ & $353$ & $0.013432$ & $29184$ & $28086$ & $354$ & $0.013432$ & $29184$ & $28084$ & $355$ & $0.013432$ \\ \hline $29184$ & $28082$ & $356$ & $0.013432$ & $29184$ & $28080$ & $357$ & $0.013432$ & $29184$ & $28078$ & $358$ & $0.013432$ \\ \hline $29184$ & $28076$ & $359$ & $0.013432$ & $29184$ & $28074$ & $360$ & $0.013432$ & $29184$ & $28072$ & $361$ & $0.013432$ \\ \hline $29184$ & $28070$ & $362$ & $0.013432$ & $29184$ & $28068$ & $363$ & $0.013432$ & $29184$ & $28066$ & $364$ & $0.013432$ \\ \hline $29184$ & $28064$ & $365$ & $0.013432$ & $29184$ & $28062$ & $366$ & $0.013432$ & $29184$ & $28060$ & $367$ & $0.013432$ \\ \hline $29184$ & $28058$ & $368$ & $0.013432$ & $29184$ & $28056$ & $369$ & $0.013432$ & $29184$ & $28054$ & $370$ & $0.013432$ \\ \hline $29184$ & $28052$ & $371$ & $0.013432$ & $29184$ & $28050$ & $372$ & $0.013432$ & $29184$ & $28048$ & $373$ & $0.013432$ \\ \hline $29184$ & $28046$ & $374$ & $0.013432$ & $29184$ & $28044$ & $375$ & $0.013432$ & $29184$ & $28042$ & $376$ & $0.013432$ \\ \hline $29184$ & $28040$ & $377$ & $0.013432$ & $29184$ & $28038$ & $378$ & $0.013432$ & $29184$ & $28036$ & $379$ & $0.013432$ \\ \hline $29184$ & $28034$ & $380$ & $0.013432$ & $29184$ & $28032$ & $381$ & $0.013432$ & $29184$ & $28030$ & $382$ & $0.013432$ \\ \hline $29184$ & $28028$ & $383$ & $0.013432$ & $29184$ & $28026$ & $384$ & $0.013432$& $29184$ & $28024$ & $385$ & $0.013432$ \\ \hline $29184$ & $28022$ & $386$ & $0.013432$ & $29184$ & $28020$ & $387$ & $0.013432$ & $29184$ & $28018$ & $388$ & $0.013432$ \\ \hline $29184$ & $28016$ & $389$ & $0.013432$ & $29184$ & $28014$ & $390$ & $0.013432$ & $29184$ & $28012$ & $391$ & $0.013432$ \\ \hline $29184$ & $28010$ & $392$ & $0.013432$ & $29184$ & $28008$ & $393$ & $0.013432$ & $29184$ & $28006$ & $394$ & $0.013432$ \\ \hline
\end{tabular}
\end{scriptsize}
\end{center}

\section{Weak Castle codes from $\tilde{\mathcal{S}}_q$} \label{castle}
In this section we construct weak Castle codes from the curve $\tilde{\mathcal{S}}_q$; we refer to \cite{Castle} for notation and results on weak Castle curves and weak Castle codes.

Let $\mathcal{X}$ be a curve on $\mathbb{F}_q$ and $Q$ be an $\mathbb F_q$-rational place of $\mathcal X$. Then the pair $(\mathcal{X},Q)$ is called \textit{weak Castle} if the following conditions are satisfied.
\begin{itemize}
\item[$C1)$] The Weierstrass semigroup $H(Q)$ is symmetric.
\item[$C2)$] For some integer $\ell$, there exists a morphism $f : \mathcal{X}\rightarrow \mathbb{P}^1 = \overline{\mathbb F}_q\cup\{\infty\}$ such that $(f)_{\infty}=\ell Q$ and there exists a set $U=\lbrace \alpha_1,...,\alpha_h \rbrace \subseteq \mathbb{F}_q$, such that for every $i=1,...,h$, $ f^{-1}(\alpha_i)\subseteq \mathcal{X}(\mathbb{F}_q)$ and $\vert f^{-1}(\alpha_i) \vert =\ell$.
\end{itemize}
If $(\mathcal{X},Q)$ is weak Castle, define
$$D=\sum_{i=1}^h \sum_{j=1}^\ell P_j^i,$$
where $f^{-1}(\alpha_i)=\lbrace P_1^i,\ldots,P_\ell ^i\rbrace$ for every $i=1,\ldots,h$.
The one-point AG codes $C(D,rQ)$ are called \emph{weak Castle codes}.

\begin{proposition}{\rm{(\!\!\cite[Proposition 1, Lemma 2, and Corollary 2]{Fernandello})}}\label{WeakProp}
Let $(\mathcal{X},Q)$ be a weak Castle curve of genus $g$ and $C(D,rQ)$ be a weak Castle code from $\mathcal{X}$.
Define $r^{\perp}=n+2g-2-r$.
Then the following properties hold:
\begin{itemize}
\item[(i)] Under the notation in Condition {\rm C2)}, let $\varphi=\prod_{i=1}^h(f-\alpha_i)$. If $div(d\varphi)=(2g-2)Q$, then $C(D,rQ)^\perp = C(D,r^\perp Q)$.
\item[(ii)] The divisors $D$ and $rQ$ are equivalent. Also, for every $r<n$, $C(D,rQ)$ attains the designed minimum distance $d^*$ if and only if $C(D,(n-r)Q)$ attains the designed minimum distance as well.
\item[(iii)] $(2g-2)Q$ and $(n+2g-2)Q-D$ are canonical divisors, and there exists $x\in (\mathbb{F}_{q}^*)^n$ such that $C(D,rQ)^{\perp}=x\cdot C(D, r^{\perp}Q)$.
\item[(iv)] For every $i=1,...,r$, let $r_i:=\min \lbrace r: \ell (rQ)-\ell ((r-n)Q)\geq i \rbrace$ and $C_i:=C(D,r_iQ)$.
Then $C_i$ has dimension $i$, and
$$C_0=(0)\subset C_1\subset \cdots \subset C_n=\mathbb{F}_q^n$$ is a formally self-dual sequence of codes.
\item[(v)] If $2i\leq n$, then there exist quantum codes with parameters $[[n,n-2i,\geq d(C_{n-i})]]_{q}$ where $d(C_{n-i})\geq n-r_{n-i}+\gamma_{a+1}$, with $a=\ell((r_{n-i}-n)Q)$ and $$\gamma_{a+1}=\min \lbrace \deg(A) : A \textrm{ is a rational divisor on } \mathcal{X} \textrm{ with } \ell(A)\geq a+1\rbrace.$$
\end{itemize}
\end{proposition}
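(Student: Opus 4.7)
The plan is to deduce all five items from a careful study of the differential $\omega := d\varphi/\varphi$, combined with the standard residue duality for AG codes and the symmetry of $H(Q)$ assumed in condition $C1$.

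First I would establish the divisorial identities $(\varphi)=D-nQ$ and $(\omega)=(n+2g-2)Q-D$. The first one follows from condition $C2$: each factor $f-\alpha_i$ has divisor $\sum_{j=1}^{\ell}P_j^i-\ell Q$, and multiplying over the $h$ values of $i$ gives the claim with $n=h\ell$. Combined with the hypothesis $(d\varphi)=(2g-2)Q$ of item $(i)$, the second identity follows. Moreover, at every $P\in\mathrm{supp}(D)$ the differential $\omega$ has a simple pole with residue $\mathrm{ord}_P(\varphi)=1$.

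With this in hand, items $(i)$ and $(iii)$ follow from the residue duality $C(D,G)^\perp=C_\Omega(D,G)$: since the residues of $\omega$ at the points of $D$ are all equal to $1$, one obtains $C(D,rQ)^\perp=C(D,D-rQ+(\omega))=C(D,r^\perp Q)$, and both $(2g-2)Q$ and $(n+2g-2)Q-D$ are canonical divisors. The scaling vector $x\in(\mathbb{F}_q^*)^n$ in $(iii)$ simply records how a general differential with simple poles and nonzero residues at the points of $D$ differs from $\omega$ after normalization. For $(ii)$, the identity $(\varphi)=D-nQ$ yields $D\sim nQ$, so multiplication by $\varphi$ provides an $\mathbb{F}_q$-linear isomorphism $\mathcal{L}(rQ)\to\mathcal{L}(rQ+D-nQ)$ whose evaluation at $D$ realizes $C(D,rQ)$ as a monomial twist of $C(D,(n-r)Q)$; in particular the two codes share the same minimum distance, so one attains $d^*$ if and only if the other does.

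For $(iv)$, I would exploit the symmetry of $H(Q)$ to pair up the non-gaps at $Q$ via the involution $r\mapsto n+2g-2-r$: together with $(i)$, this ensures that the codes $C_i=C(D,r_iQ)$ form a strictly increasing flag whose duals are again of the same form, yielding the formally self-dual property. Finally, $(v)$ is a direct application of the CSS construction of Lemma \ref{CSS} to a suitable nested pair from the flag in $(iv)$: a codeword of $C_{n-i}$ not lying in $C_{n-i-1}$ corresponds, via $\eta$, to a function in a Riemann-Roch space of dimension at least $a+1$, whose support on $D$ is bounded below by $n-r_{n-i}+\gamma_{a+1}$. The main obstacle is the combinatorial bookkeeping in $(iv)$, where one must align the Riemann-Roch dimensions with the involution on non-gaps to verify that the chain is strictly increasing and matches its dual under $r\mapsto r^\perp$; once this is settled, $(v)$ is an almost immediate corollary of Lemma \ref{CSS}.
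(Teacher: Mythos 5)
First, note that the paper does not actually prove Proposition \ref{WeakProp}: it is imported verbatim from \cite[Proposition 1, Lemma 2, and Corollary 2]{Fernandello}, so there is no internal proof to compare your route against, and I can only assess your argument on its own terms. Your treatment of item (i) is sound and is the standard one: condition C2) gives $(\varphi)=D-nQ$, hence $(d\varphi/\varphi)=(n+2g-2)Q-D$ under the hypothesis of (i), with residue $1$ at each point of $D$, and residue duality yields $C(D,rQ)^\perp=C(D,r^\perp Q)$.

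The genuine gap is in item (ii). Multiplication by $\varphi$ maps $\mathcal{L}(rQ)$ to $\mathcal{L}(rQ+D-nQ)=\mathcal{L}((r-n)Q+D)$, not to $\mathcal{L}((n-r)Q)$; worse, $\varphi$ vanishes at every point of $D$, so evaluating $g\varphi$ along $D$ returns the zero vector and cannot realize any monomial equivalence. The conclusion you draw --- that $C(D,rQ)$ and $C(D,(n-r)Q)$ ``share the same minimum distance'' --- is false in general: these codes have dimensions roughly $r+1-g$ and $n-r+1-g$, and already for $g=0$ they are the Reed--Solomon codes $[n,r+1,n-r]$ and $[n,n-r+1,r]$. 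The assertion is only that one attains its designed distance if and only if the other attains its (different) designed distance. The correct argument uses $D\sim nQ$ directly: $C(D,rQ)$ has minimum distance exactly $n-r$ if and only if there is an effective $D'\le D$ with $\deg D'=r$ and $D'\sim rQ$, in which case $D-D'\le D$ has degree $n-r$ and $D-D'\sim nQ-rQ=(n-r)Q$, and conversely. A secondary issue: you derive item (iii) from the hypothesis $div(d\varphi)=(2g-2)Q$ of item (i), but (iii) is asserted for every weak Castle pair; there you must instead invoke condition C1) (symmetry of $H(Q)$ forces $2g-1$ to be a gap, so $\ell((2g-2)Q)=g$ and the degree-$(2g-2)$ divisor $(2g-2)Q$ is canonical), and take $x$ to be the residue vector of a differential with divisor $(n+2g-2)Q-D$. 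Your sketches of (iv) and (v) point in the right direction but defer exactly the steps that carry the content, namely matching the dimension count $\dim C(D,rQ)=\ell(rQ)-\ell((r-n)Q)$ with the involution $r\mapsto r^\perp$, and establishing the gonality bound $d(C_{n-i})\ge n-r_{n-i}+\gamma_{a+1}$.
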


We construct weak Castle codes from $\tcSq$.

\begin{proposition}\label{sqcastle}
The pair $(\tilde{\mathcal{S}}_q, P_{\infty})$ is weak Castle.
\end{proposition}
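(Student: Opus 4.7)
The plan is to verify the two defining conditions $C1$ and $C2$ for the pair $(\tilde{\mathcal{S}}_q, P_\infty)$. Condition $C1$ requires that the Weierstrass semigroup $H(P_\infty)$ is symmetric, and this is exactly the content of Theorem \ref{Ssimmetrico}, so no further work is needed there. The entire effort therefore goes into exhibiting a suitable rational function $f$ on $\tcSq$ and a subset $U\subseteq\mathbb{F}_q$ that satisfy $C2$.

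The natural candidate for $f$ is the Kummer coordinate $t$ itself, with the choice $U=\{0\}$ (so $h=1$). Indeed, from the principal divisor \eqref{divt}
$$ (t)=\sum_{a^q+a=0,\,b^q+b=0} P_{(a,b,0)} - q^2 P_\infty, $$
one reads off simultaneously that $(t)_\infty=q^2 P_\infty$, so the integer required in condition $C2$ is $\ell=q^2$, and that the fiber $t^{-1}(0)$ consists precisely of the $q^2$ affine places $P_{(a,b,0)}$ with $a,b\in\mathbb{F}_q$. These places are all $\mathbb{F}_q$-rational (they are exactly the affine points of the orbit $O_1=\tcSq(\mathbb{F}_q)$ described in Section \ref{secStilda}), and their number equals $q^2=\ell$, so $t^{-1}(0)\subseteq\tcSq(\mathbb{F}_q)$ with $|t^{-1}(0)|=\ell$, exactly as required.

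Thus the verification of $C2$ reduces to a direct reading of \eqref{divt}, which in turn follows from the standard theory of Kummer extensions applied to $t^m=x^q+x$ as already used in Section \ref{secStilda}. I do not foresee a real obstacle here: the main potential pitfall would have been trying to include nonzero elements of $\mathbb{F}_q$ in $U$, since for $\alpha\in\mathbb{F}_q^*$ the equation $x^q+x=\alpha^m$ has no solution in $\mathbb{F}_q$ (the additive polynomial $x^q+x$ vanishes identically on $\mathbb{F}_q$ in characteristic $2$), so the fibers $t^{-1}(\alpha)$ cannot lie in $\tcSq(\mathbb{F}_q)$. Restricting to $U=\{0\}$ bypasses this issue and yields the desired weak Castle structure.
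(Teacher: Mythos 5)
Your verification of C1 via Theorem \ref{Ssimmetrico} is exactly what the paper does, and your reading of the fibre $t^{-1}(0)$ from \eqref{divt} is correct as far as it goes. The problem is the choice $U=\{0\}$. By insisting that $U\subseteq\mathbb{F}_q$ and that the fibres lie in $\tcSq(\mathbb{F}_q)$, you obtain a weak Castle structure whose associated divisor $D=\sum_{i=1}^{h}\sum_{j=1}^{\ell}P_j^i$ has degree only $h\ell=q^2$. That is not the structure this proposition is meant to deliver: immediately afterwards the paper takes $D=\sum_{P\in\tcSq(\mathbb{F}_{q^4})\setminus\{P_\infty\}}P$ of degree $q^5-q^4+q^3$ and builds codes of that length, so condition C2 has to be verified over $\mathbb{F}_{q^4}$ (the field over which $\tcSq$ is maximal and over which every code in this paper is defined), not over $\mathbb{F}_q$. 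The observation you flag as a ``pitfall'' --- that for $\alpha\in\mathbb{F}_q^*$ the fibre $t^{-1}(\alpha)$ meets no $\mathbb{F}_q$-rational place --- is precisely the symptom that $\mathbb{F}_q$ is the wrong base field here; over $\mathbb{F}_{q^4}$ there are many usable fibres, and restricting to $U=\{0\}$ throws almost all of the evaluation points away.

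The missing step is to show that the $q^5-q^4+q^3$ affine places of $\tcSq(\mathbb{F}_{q^4})$ are partitioned into $h=q^3-q^2+q$ complete, unramified fibres of $t$, each of cardinality $\ell=q^2$. The paper does this group-theoretically: the subgroup $T=\{(x,y,t)\mapsto(x+b,y+b^{q_0}x+c,t)\mid b,c\in\mathbb{F}_q\}$ of ${\rm Aut}(\tcSq)$ has order $q^2$, fixes $t$, and satisfies $\overline{\mathbb{F}}_q(t)=\overline{\mathbb{F}}_q(\tcSq)^T$ because $[\overline{\mathbb{F}}_q(\tcSq):\overline{\mathbb{F}}_q(t)]=\deg (t)_\infty=q^2=|T|$ and $\tcSq/T$ is rational. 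Hence the affine fibres of $t$ are exactly the $T$-orbits; $T$ acts semiregularly off $P_\infty$ and is defined over $\mathbb{F}_q$, so it preserves $\tcSq(\mathbb{F}_{q^4})$, and every affine $\mathbb{F}_{q^4}$-rational place lies in a fibre of size $q^2$ consisting entirely of $\mathbb{F}_{q^4}$-rational places. Taking $U\subseteq\mathbb{F}_{q^4}$ to be the set of values of $t$ on these places gives $h=q^3-q^2+q$ and yields C2 in the form the rest of Section \ref{castle} actually uses. Without some argument of this kind (or an equivalent direct count of the solutions of $x^q+x=\alpha^m$ and $y^q+y=x^{q_0}\alpha^m$ over $\mathbb{F}_{q^4}$), your proof establishes only a degenerate instance of the statement.
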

\begin{proof}
By Theorem \ref{Ssimmetrico}, Condition C1) is satisfied.
From \cite[Proposition 25]{GMQZ},
$$ T:=\{(x,y,t)\mapsto(x+b,y+b^{q_0}x+c,t)\mid b,c\in\mathbb{F}_q\} $$
is a Sylow $2$-subgroup of ${\rm Aut}(\tcSq)$. As $S$ fixes $t$, we have $\mathbb{K}(t)\subseteq\mathbb{K}(\tcSq/T)$.
On the other side, $\tcSq/T$ is rational by \cite[Theorem 11.78]{HKT}.
Hence, $\mathbb{K}(t)=\mathbb{K}(\tcSq/T)$.
The pole divisor of $t$ is $(t)_\infty = q^2 P_\infty$.
Also, $T$ acts semiregularly on $\tcSq\setminus\{P_\infty\}$; being $\mathbb{F}_{q^4}$-rational, $T$ acts semiregularly on the $q^5-q^4+q^3$ places in $\tcSq(\mathbb{F}_{q^4})\setminus\{P_\infty\}$. Let $U=\{\alpha_1,\ldots,\alpha_{q^3-q^2+q}\}\subseteq\mathbb{F}_{q^4}$ be the set of places of $\mathbb{P}^1\cong\tcSq/T$ lying under $\tcSq(\mathbb{F}_{q^4})\setminus\{P_\infty\}$. Then Condition C2) is satisfied by choosing $\ell=q^2$, $f=t$, $h={q^3-q^2+q}$.
\end{proof}

From Proposition \ref{sqcastle}, we have weak Castle codes $C(D,rP_{\infty})$ of length $n=q^5-q^4+q^3$ with $D=\sum_{P\in\tcSq(\mathbb{F}_{q^4})\setminus\{P_\infty\}} P$.

Now we construct quantum codes by means of the CSS construction.
Let $\varphi$ be defined as in Proposition \ref{WeakProp} {\it (i)}. By direct computation, $P_\infty$ is the only pole of $d\varphi$ with multiplicity $2g-2$, and the zeros of $\varphi$ are simple; hence, $div(d\phi)=(2g-2)P_{\infty}$. Thus, by Proposition \ref{WeakProp}, $C(D,rP_\infty)^\perp=C(D,r^\perp P_\infty)$.
Let $\rho_a, \rho_{a+b}\in H(P_{\infty})\cap [2,n-2]$, with $a,b\geq 1$ and consider the codes $C_{a+b}:=C^{\perp}(D,\rho_{a+b}P_{\infty})$ and $C_a:=C^{\perp}(D,\rho_{a}P_{\infty})$, which satisfy $C_{a+b} \subseteq C_a$. Then the CSS construction yields a $[[n,s,d]]_{q^4}$-quantum code such that
$d\geq \min \lbrace d_{ORD}(C_a),d_{1} \rbrace$, where $d_1$ is the minimum distance of the cose $C(D,\rho_{a+b}P_{\infty})$. Since $C(D,\rho_{a+b}P_{\infty})=C^{\perp}(D,\rho_{a+b}^{\perp}P_{\infty})$, the lower bound on $d$ reads
\begin{equation}\label{stima}
d\geq\min\lbrace d_{ORD}(C_a), d_{ORD}(C^{\perp}(D,\rho_{a+b}^{\perp}P_{\infty})) \rbrace.
 \end{equation}
As the tables in Section \ref{codiciduali} show, the Feng-Rao minimum distance often improves the Goppa designed minimum distance. Thus, Bound \eqref{stima} improves in general the lower bound $d\geq\min\{d_{ORD}(C_a),d_1\}$.


\section{Convolutional codes from $\tilde{\mathcal{S}}_q$} \label{convo}

In this section we construct convolutional codes using a result by  De Assis, La Guardia, and Pereira \cite{ALGR}which allows to construct unit-memory convolutional codes with certain parameters $(n,k,\gamma;m,d_f)$ starting from AG codes.
Consider the polynomial ring $R=\mathbb{F}_q[X]$. A convolutional code $C$ is an $R$-submodule of rank $k$ of the module $R^n$. Let $G(X)=(g_{ij}(X))\in \mathbb{F}_q[X]^{k\times n}$ be a generator matrix of $C$ over $\mathbb{F}_q[X]$, $\gamma_i = \max\lbrace \deg g_{ij}(X) \mid 1\leq j\leq n\rbrace $, $\gamma=\sum_{i=1}^k \gamma_i$, $m=\max \lbrace \gamma_i \mid 1\leq i\leq k\rbrace $, and $d_f$ be the minimum weight of a word $c\in C$.
Then $C$ is an $(n,k,\gamma ;m,d_f)_q$-convolutional code, having length $n$, dimension $k$, degree $\gamma$, memory $m$, and distance $d_f$.
When $m=1$, $C$ is said to be a unit-memory convolutional code.
For a detailed introduction to convolutional codes, see \cite{ALGR,RS1999}.

\begin{lemma}[\!\!\cite{ALGR}, Theorem 3]\label{Convol}
Let $\mathcal{X}$ be an $\mathbb{F}_q$-rational non-singular curve of genus $g$.
Consider an AG code $C^\perp(D,G)$ with $\mathcal X$ with $2g-2<\deg(G)<n$. Then there exists a unit-memory convolutional code with parameters $(n,k-\ell,\ell;1,d_f\geq d)$, where $\ell \leq k/2$, $k=\deg(G)+1-g$ and $d\geq n-\deg(G)$.
\end{lemma}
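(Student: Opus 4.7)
The plan is to apply the Piret unit-memory construction, which turns a linear block code into a convolutional one by partitioning a generator matrix and introducing a single shift in the indeterminate $X$.

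Since $2g-2<\deg(G)<n$, the Riemann-Roch theorem gives $\dim C(D,G)=k=\deg(G)+1-g$, and the Goppa bound gives $d(C(D,G))\geq n-\deg(G)=:d$; the convolutional code will be built starting from this underlying $[n,k,d]_q$ block code. First, I would fix a $k\times n$ generator matrix $M$ of $C(D,G)$; exploiting the hypothesis $\ell\leq k/2$, I would split the rows of $M$ into a block $M_0\in\mathbb{F}_q^{(k-\ell)\times n}$ and a block $M_1\in\mathbb{F}_q^{\ell\times n}$, and pad $M_1$ with $k-2\ell$ zero rows to obtain $\hat M_1\in\mathbb{F}_q^{(k-\ell)\times n}$. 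The convolutional code $\mathcal C$ is then defined as the $\mathbb{F}_q[X]$-row span of the polynomial generator matrix
$$\tilde G(X):=M_0+X\,\hat M_1\in\mathbb{F}_q[X]^{(k-\ell)\times n}.$$

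To verify the claimed parameters, note that $\tilde G(X)$ has $k-\ell$ rows in $\mathbb{F}_q[X]^n$, so the convolutional code has length $n$ and dimension $k-\ell$; exactly $\ell$ of these rows have degree one, while the remaining $k-2\ell$ have degree zero, yielding memory $m=1$ and total row-degree $\gamma=\ell$. For the free distance, I would consider a nonzero codeword $c(X)=u(X)\tilde G(X)$ with $u(X)=\sum_t u_t X^t\in\mathbb{F}_q[X]^{k-\ell}$, and let $t_0$ be the smallest index such that $u_{t_0}\neq 0$. Then the coefficient of $X^{t_0}$ in $c(X)$ equals $u_{t_0}M_0$, which is a nonzero codeword of the subcode of $C(D,G)$ generated by the rows of $M_0$, hence has Hamming weight at least $d$. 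Therefore $d_f\geq d$.

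The main obstacle is confirming that $\tilde G(X)$ is a \emph{basic} (noncatastrophic) generator matrix whose degree equals exactly $\ell$, not a smaller value. This can be handled by observing that the constant term $M_0$ already has full row rank $k-\ell$ as a submatrix of a generator matrix of $C(D,G)$, which forces $\tilde G(X)$ to have full rank over $\mathbb{F}_q(X)$; combined with the linear independence of the $k$ rows of $M$, no $\mathbb{F}_q[X]$-combination of rows can cancel the degree-one contributions, which guarantees $\gamma=\ell$ rather than a lower value. Once this noncatastrophicity and sharp-degree check is in place, the parameters and the free distance bound $d_f\geq d$ follow as outlined above.
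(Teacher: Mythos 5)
The paper does not prove this lemma at all: it is imported verbatim as \cite[Theorem 3]{ALGR}, so there is no internal proof to compare against. Your argument reconstructs exactly the construction used in that reference (the Piret-style splitting of a generator matrix of the $[n,k,d]$ block code $C(D,G)$ into $M_0$ and a zero-padded $M_1$, with $\tilde G(X)=M_0+X\hat M_1$), and it is correct: Riemann--Roch and the Goppa bound give $k$ and $d$, the lowest-order-coefficient argument gives $d_f\geq d$, and the hypothesis $\ell\leq k/2$ is precisely what makes the padding possible. The one step you flag as an ``obstacle'' can be closed cleanly: the leading row-coefficient matrix of $\tilde G(X)$ consists of the $\ell$ rows of $M_1$ together with $k-2\ell$ rows of $M_0$, i.e.\ $k-\ell$ distinct rows of the full-rank matrix $M$, so it has full row rank, which shows $\tilde G(X)$ is reduced and hence $\gamma=\ell$ exactly; similarly $M_0+\alpha\hat M_1$ has full rank for every $\alpha$ in the algebraic closure because any vanishing combination would be a nontrivial dependence among the $k$ rows of $M$, which gives basicness.
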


We apply Lemma \ref{Convol} to one-point AG codes from $\tilde{\mathcal{S}}_q$.

\begin{proposition}\label{costrconv}
Consider the curve $\tcSq$ and a non-gap $\rho_\ell \in H(P_{\infty})$ at $P_\infty$ such that $q^3-2q^2+q-2<\rho_\ell<q^5-q^4+q^3$. Then there exists a unit-memory convolutional code with parameters $(n,k-s,s;1,d_f\geq d_{ORD}(C_{\ell}(P_{\infty})))_{q^4}$, where $n=q^5-q^4+q^3$, $s\leq k/2$, and $k=\rho_\ell +1-\frac{q^3-2q^2+q}{2}$.
\end{proposition}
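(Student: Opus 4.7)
The plan is to apply Lemma \ref{Convol} directly to the dual one-point AG code $C^{\perp}(D,\rho_\ell P_\infty)$ on $\tcSq$, where $D=\sum_{P\in\tcSq(\mathbb{F}_{q^4})\setminus\{P_\infty\}}P$. This code is precisely $C_\ell(P_\infty)$ as defined in Section \ref{codiciduali}, so the free distance bound will be obtained by replacing the Goppa bound $d \geq n - \deg(G)$ of Lemma \ref{Convol} with the (generally larger) Feng-Rao designed minimum distance $d_{ORD}(C_\ell(P_\infty))$, which is also a valid lower bound for the true minimum distance of $C_\ell(P_\infty)$.

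First I would verify that the hypotheses of Lemma \ref{Convol} are satisfied. The genus of $\tcSq$ is $g = \frac{q^3-2q^2+q}{2}$, so the condition $2g-2<\deg(G)<n$ with $G=\rho_\ell P_\infty$ reads exactly $q^3-2q^2+q-2 < \rho_\ell < q^5-q^4+q^3$, which is the assumption on $\rho_\ell$. Note also that $n=\deg(D) = q^5-q^4+q^3$ since $|\tcSq(\mathbb{F}_{q^4})| = q^5-q^4+q^3+1$ (as $\tcSq$ is $\mathbb{F}_{q^4}$-maximal with genus $g$, so $|\tcSq(\mathbb{F}_{q^4})| = q^4+1+2g q^2$; a direct computation confirms this equals $q^5-q^4+q^3+1$). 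Since $\rho_\ell>2g-2$, the Riemann-Roch Theorem gives the dimension $k = \ell(\rho_\ell P_\infty) = \rho_\ell + 1 - g = \rho_\ell + 1 - \frac{q^3-2q^2+q}{2}$.

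Then I would invoke Lemma \ref{Convol} to obtain a unit-memory convolutional code with parameters $(n,k-s,s;1,d_f\geq d)_{q^4}$ for any $s \leq k/2$, where $d$ is the true minimum distance of $C^\perp(D,\rho_\ell P_\infty) = C_\ell(P_\infty)$. The key observation is that the inequality $d \geq n-\deg(G)$ used inside Lemma \ref{Convol} is only one possible lower bound for $d$; since $d \geq d_{ORD}(C_\ell(P_\infty))$ by definition of the Feng-Rao designed minimum distance (see \cite[Theorem 4.13]{HLP}), this last bound can be used in place of the Goppa bound inside the conclusion of the lemma. This yields $d_f \geq d_{ORD}(C_\ell(P_\infty))$, as claimed.

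The argument is essentially a direct application, and the only subtle point (hardly a real obstacle) is to justify the substitution of $d_{ORD}(C_\ell(P_\infty))$ for $n-\deg(G)$ in the free distance bound; this is valid because the proof of Lemma \ref{Convol} in \cite{ALGR} uses only that $d$ is a lower bound for the minimum distance of the underlying AG code, and by Theorem \ref{Ssimmetrico} together with Proposition \ref{campillo}, the Feng-Rao bound is explicitly computable and often strictly better than the Goppa bound (as illustrated by the tables in Section \ref{codiciduali}).
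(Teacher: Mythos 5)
Your proposal is correct and follows essentially the same route as the paper: the paper's proof likewise consists of applying Lemma \ref{Convol} to the code $C_{\ell}(P_{\infty})$ of Section \ref{codiciduali} and then using the chain $d_f\geq d(C_{\ell}(P_{\infty}))\geq d_{ORD}(C_{\ell}(P_{\infty}))$ to replace the Goppa bound by the Feng--Rao bound. Your additional verification of the hypotheses ($2g-2<\rho_\ell<n$, the value of $n$, and $k=\rho_\ell+1-g$ via Riemann--Roch) is left implicit in the paper but is exactly what is needed.
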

\begin{proof}
The claim follows by applying Lemma \ref{Convol} to the code $C_{\ell}(P_{\infty})$ constructed as in Section \ref{codiciduali}, and using that $d_f\geq d(C_{\ell}(P_{\infty}))\geq d_{ORD}(C_{\ell}(P_{\infty}))$.
\end{proof}

\begin{corollary}
Consider the curve $\tcSq$ and a non-gap $\rho_\ell \in H(P_{\infty})$ at $P_\infty$ such that $q^3-2q^2+q-2<\rho_\ell<n$ where $n=q^5-q^4+q^3$. Let $\ell\geq \frac{3(q^3-2q^2+q)}{2}-1$. Then there exists a unit-memory convolutional code with parameters $(n,k-s,s;1,d_f)_{q^4}$, where $k=\rho_\ell +1-\frac{q^3-2q^2+q}{2}$, $s\leq k/2$, and $d_f\geq \ell+1-\frac{q^3-2q^2+q}{2}$.
\end{corollary}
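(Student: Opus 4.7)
The plan is to deduce this corollary directly from Proposition \ref{costrconv} by invoking the exact value of the Feng–Rao minimum distance given by Proposition \ref{dord1}. The observation driving everything is that $\tcSq$ has genus $g=\frac{q^3-2q^2+q}{2}$, so the hypothesis $\ell\geq \frac{3(q^3-2q^2+q)}{2}-1$ is exactly the condition $\ell\geq 3g-1$ appearing in Proposition \ref{dord1}.

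First I would apply Proposition \ref{costrconv} to the non-gap $\rho_\ell$, which lies in the admissible range $q^3-2q^2+q-2<\rho_\ell<n$. This yields a unit-memory convolutional code with parameters $(n,k-s,s;1,d_f)_{q^4}$, where $n=q^5-q^4+q^3$, $k=\rho_\ell+1-g$, $s\leq k/2$, and the free distance satisfies the bound $d_f\geq d_{ORD}(C_\ell(P_\infty))$.

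Next I would substitute the explicit value of the Feng–Rao minimum distance. Since the hypothesis gives $\ell\geq 3g-1$ and $H(P_\infty)$ is symmetric by Theorem \ref{Ssimmetrico} (so its conductor is $c=2g$, whence $2c-g-1=3g-1$), Proposition \ref{dord1} applies and yields $d_{ORD}(C_\ell(P_\infty))=\ell+1-g=\ell+1-\frac{q^3-2q^2+q}{2}$. Plugging this into the bound on $d_f$ from Proposition \ref{costrconv} produces exactly the stated parameters.

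There is no genuine obstacle: the corollary is a direct combination of the convolutional-code construction of Proposition \ref{costrconv} with the closed-form evaluation of the order bound from Proposition \ref{dord1}, both of which are already available. The only thing to be careful about is matching the numerical thresholds, namely verifying that the lower bound on $\ell$ in the hypothesis coincides with $3g-1$ so that Proposition \ref{dord1} is applicable.
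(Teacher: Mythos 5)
Your proposal is correct and follows exactly the paper's own argument: the corollary is obtained by combining Proposition \ref{costrconv} with the evaluation $d_{ORD}(C_\ell(P_\infty))=\ell+1-g$ from Proposition \ref{dord1}, after identifying $g=\frac{q^3-2q^2+q}{2}$ so that the hypothesis on $\ell$ is precisely $\ell\geq 3g-1$. Your aside that the conductor of the symmetric semigroup is $c=2g$ (so $2c-g-1=3g-1$) is a correct sanity check on Proposition \ref{dord1} itself, but it is not needed beyond what that proposition already states.
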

\begin{proof}
Since $\ell \geq 3g-1$, $ d_{ORD}(C_{\ell}(P_{\infty}))=\ell +1-g$ by Proposition \ref{dord1}. The claim follows from Proposition \ref{costrconv}.
\end{proof}

\end{document}